\newmdenv[skipabove=7pt,
skipbelow=7pt,
backgroundcolor=darkblue!15,
innerleftmargin=5pt,
innerrightmargin=5pt,
innertopmargin=5pt,
leftmargin=0cm,
rightmargin=0cm,
innerbottommargin=5pt,
linewidth=1pt]{tBox}
\newmdenv[skipabove=7pt,
skipbelow=7pt,
backgroundcolor=purple!15,
innerleftmargin=5pt,
innerrightmargin=5pt,
innertopmargin=5pt,
leftmargin=0cm,
rightmargin=0cm,
innerbottommargin=5pt,
linewidth=1pt]{sBox}
\newmdenv[skipabove=7pt,
skipbelow=7pt,
backgroundcolor=darkkblue!15,
innerleftmargin=5pt,
innerrightmargin=5pt,
innertopmargin=5pt,
leftmargin=0cm,
rightmargin=0cm,
innerbottommargin=5pt,
linewidth=1pt]{aBox}
\newmdenv[skipabove=7pt,
skipbelow=7pt,
backgroundcolor=blue2!25,
innerleftmargin=5pt,
innerrightmargin=5pt,
innertopmargin=5pt,
leftmargin=0cm,
rightmargin=0cm,
innerbottommargin=5pt,
linewidth=1pt]{dBox}
\definecolor{darkblue}{RGB}{51,125,255}
\definecolor{darkkblue}{RGB}{51,255,230}
\definecolor{purple}{RGB}{187,51,255}
\definecolor{blue2}{RGB}{51,233,255}
\theoremstyle{plain}
\newtheorem{lemma}{Lemma}[]
\newtheorem{thm}[lemma]{Theorem}
\newtheorem{stp2}{Step}
\newtheorem{stp}{Step}
\newtheorem{assump}{Assumption}
\newtheorem{quest}{Question}
\newtheorem{prop}{Proposition}
\newtheorem{cor}{Corollary}
\newtheorem*{thm*}{Theorem}
\newtheorem*{assump*}{$\highlight{\text{\textbf{Assumption 1}}}$}
\newtheorem*{assump2*}{$\highlight{\text{\textbf{Assumption 2}}}$}
\theoremstyle{definition}
\newtheorem{definition}{Definition}
\theoremstyle{remark}
\newtheorem{remark}{Remark}
\newenvironment{theorem}{\begin{tBox}\begin{thm}}{\end{thm}\end{tBox}}
\newenvironment{defi}{\begin{dBox}\begin{definition}}{\end{definition}\end{dBox}}
\newenvironment{corollary}{\begin{tBox}\begin{cor}}{\end{cor}\end{tBox}}
\newenvironment{step}{\begin{sBox}\begin{stp}}{\end{stp}\end{sBox}}
\newenvironment{assumption}{\begin{aBox}\begin{assump}}{\end{assump}\end{aBox}}
\newenvironment{question}{\begin{aBox}\begin{quest}}{\end{quest}\end{aBox}}
\newcommand{\identity}{\ensuremath{\mathds{1}}}
\newcommand{\hs}{\ensuremath{\mathcal H}}
\newcommand{\ent}[2]{D(#1||#2)}
\newcommand{\entA}[3]{D_{#1}(#2||#3)}
\newcommand{\Z}{\mathbb{Z}}
\newcommand{\N}{\mathbb{N}}
\newcommand{\E}{\mathbb{E}}
\newcommand{\LL}{\mathcal{L}}
\newcommand{\BB}{\mathcal{B}}
\newcommand{\A}{\mathcal{A}}
\newcommand{\SSS}{\mathcal{S}}
\newcommand{\ds}{\displaystyle}
\newcommand{\rla}{\rho_\Lambda}
\newcommand{\sla}{\sigma_\Lambda}
\def\ov{\overset}
\def\un{\underset}
\newcommand{\highlight}[1]{#1}
\renewcommand{\hl}[1]{#1}
\title[On the MLSI for the heat-bath dynamics for 1D systems]{On the modified logarithmic Sobolev inequality for the heat-bath dynamics for 1D systems}
\author[Bardet]{Ivan Bardet}
\author[Capel]{Ángela Capel}
\author[Lucia]{Angelo Lucia}
\author[Pérez-García]{David Pérez-García}
\author[Rouzé]{Cambyse Rouzé}
\address[Bardet]{Inria Paris, France}
\email{bardetivan@gmail.com}
\address[Capel]{Instituto de Ciencias Matemáticas (CSIC-UAM-UC3M-UCM), C/ Nicolás Cabrera 13-15, Campus de Cantoblanco, 28049 Madrid, Spain, Department of Mathematics, Technische Universit\"at M\"unchen, 85748 Garching, Germany, and Munich Center for Quantum Science and Technology (MCQST), München, Germany}
\email{angela.capel@ma.tum.de}
\address[Lucia]{California Institute of Technology, 1200 E California Blvd, MC 305-16, Pasadena, CA 91125, Unites States of America, and Departamento de Análisis Matemático y Matemática Aplicada, Universidad Complutense de Madrid, 28040 Madrid, Spain}
\email{anglucia@ucm.es}
\address[Pérez-García]{Departamento de Análisis Matemático y Matemática Aplicada, Universidad Complutense de Madrid, 28040 Madrid, Spain and Instituto de Ciencias Matemáticas (CSIC-UAM-UC3M-UCM), C/ Nicolás Cabrera 13-15, Campus de Cantoblanco, 28049 Madrid, Spain}
\email{dperezga@ucm.es}
\address[Rouzé]{Department of Mathematics, Technische Universit\"at M\"unchen, 85748 Garching, Germany, and Munich Center for Quantum Science and Technology (MCQST), München, Germany}
\email{rouzecambyse@gmail.com}
\date{\today}
\begin{document}
\maketitle

\begin{abstract}
The mixing time of Markovian dissipative evolutions of open quantum many-body systems can be bounded using optimal constants of certain quantum functional inequalities, such as the modified logarithmic Sobolev constant. For classical spin systems, the positivity of such constants follows from a mixing condition for the Gibbs measure, via quasi-factorization results for the entropy. Inspired by the classical case, we present a strategy to derive the positivity of the modified logarithmic Sobolev constant associated to the dynamics of certain quantum systems from some clustering conditions on the Gibbs state of a local, commuting Hamiltonian. In particular we show that for the heat-bath dynamics for 1D systems, the  modified logarithmic Sobolev constant is positive under the assumptions of a mixing condition on the Gibbs state and a strong quasi-factorization of the relative entropy.
\end{abstract}

\tableofcontents

\section{Introduction}\label{sec:intro}

The fields of quantum information theory and quantum many-body systems have strong ties, as do their classical analogues. In the past years, we have come to see that results and tools developed in quantum information theory have helped to solve fundamental problems in condensed matter physics, whereas some new models created for many-body systems have been used for the storage and transmission of quantum information. There are numerous connections between these two fields, as well as interesting problems lying in their intersection. 

One of these problems is the \textit{thermalizalization} of a quantum system. It has recently generated great interest in both communities for several reasons, one of them being the uprising number of tools  available from quantum information theory \cite{RieraGogolinEisert-Thermalization-2012} \cite{MuellerAdlamMasanesWiebe-Thermalization-2015} to address both the conditions under which a system thermalizes in the infinite time limit, as well as how fast this thermalization occurs. Another one is the possible use of such systems for the implementation of quantum memory devices, as was suggested in the theoretical proposal of \textit{dissipative state engineering} \cite{VerstraeteWolfCirac-StateEngineering-2009} \cite{KrausBuchlerDiehlKantianMicheliZoller-PreparationEntangledStates-2013}, where the authors proposed that quantum dissipative evolutions might constitute a robust way of constructing interesting quantum systems which preserve coherence for relatively long periods. Some experimental results have confirmed this idea, which has also raised the interest in this kind of systems and, thus, in the problem of thermalization.

This paper concerns the question mentioned above, namely how fast a dissipative system thermalizes. This ``velocity'' of thermalization will be studied by means of the \textit{mixing time}, i.e., the time that it takes for every initial state undergoing a dissipative evolution to be almost indistinguishable from the thermal equilibrium state. In particular, we will be interested in physical systems for which this convergence is fast enough, in a regime that is called \textit{rapid mixing}. Several bounds for the mixing time, and thus, conditions for rapid mixing to hold, can be found via the optimal constants associated to some quantum functional inequalities, as it is the case for the \textit{spectral gap} (optimal constant for the \textit{Poincaré inequality}) \cite{[TKRWV10]} or the \textit{modified logarithmic Sobolev constant} (associated to the \textit{modified logarithmic Sobolev inequality}) \cite{[KT13]}. We will focus on the latter.

This problem was previously addressed in the classical setting. In \cite{[D02]}, the authors showed that a classical spin system in a lattice, for a certain dynamics and a clustering condition in the Gibbs measure associated to this dynamics, satisfies a modified logarithmic Sobolev inequality. In \cite{[Cesi01]}, the usual logarithmic Sobolev inequality was studied via another similar condition of clustering in the Gibbs measure. Both results were inspired by the seminal work of Martinelli and Olivieri \cite{MartinelliOlivieri-ApproachEquilibriumGlauberI-1994} \cite{[Mar99]} and aimed to notably simplify their proof via a result of \textit{quasi-factorization of the entropy}\footnote{This and other terms in italics used in the Introduction will be defined precisely later in the paper, in Section \ref{sec:preliminaries}.} in terms of some conditional entropies. Previously, a result of \textit{quasi-factorization of the variance}  \cite{BertiniCancriniCesi-ClassicalSpectralGap-2002} had been used to prove positivity of the spectral gap for certain dynamics, under certain conditions in the Gibbs measure. 

The latter found its quantum analogue in \cite{[KB14]}, where the authors introduced the notion of \textit{conditional spectral gap} and proved positivity of the spectral gap for the Davies and heat-bath dynamics associated to a local commuting Hamiltonian, via a result of quasi-factorization of the variance, under a condition of \textit{strong clustering} in the Gibbs state. In this paper, we aim to study the quantum analogue of the aforementioned strategy to obtain positivity of a modified logarithmic Sobolev inequality via a result of quasi-factorization of the relative entropy, thus obtaining  an exponential improvement on the mixing time with respect to the spectral gap case.

The main purpose of this paper is to present a  strategy to obtain a positive modified logarithmic Sobolev constant and apply it to the particular case of the heat-bath dynamics in 1D. Our strategy is based on the following five points:

\begin{enumerate}
\item Definition of a \textit{conditional modified logarithmic Sobolev constant} (conditional MLSI constant, for short).
\item Definition of the \textit{clustering condition} on the Gibbs state.
\item Quasi-factorization of the relative entropy in terms of a \textit{conditional relative entropy} (introduced in \cite{[CLP18a]}).
\item Recursive geometric argument to provide a lower bound for the global modified logarithmic Sobolev constant (MLSI constant, for short) in terms of the conditional MLSI constant in a fixed-sized region.
\item Positivity of the conditional MLSI constant.
\end{enumerate}

Remark that point (1) is not needed in the classical case\hl{, i.e. whenever all the involved density matrices are diagonal in a product basis of the Hilbert space of states}. Indeed, there the Dobrushin–Lanford–Ruelle (DLR) condition \cite{[Dob68]},  \cite{[LR69]} allows to reduce the study of the MLSI on a domain and its boundary. In the quantum case, the DLR condition fails~\cite{fannes1995boundary} and the definition of the conditional MLSI is here to compensate this feature, by taking into account the whole exterior of the domain, not only its boundary. This leads to several new difficulties, as illustrated by point (5) which is completely trivial in the classical case.

These five points together will imply the positivity of the modified logarithmic Sobolev constant independently of the system size. Note that the first two consist of careful definitions of some notions, the first of which will be used during the proof, whereas the second one will constitute the initial assumptions to impose on the Gibbs state. The third point has been previously addressed in \cite{[CLP18a]}. It constitutes the quantum analogue of the results of quasi-factorization of the entropy in the classical case, and provides an upper bound for the relative entropy between two states in terms of the sum of two conditional relative entropies and a multiplicative error term that measures how far the second state is from a tensor product. This result allows  to reduce from the global MLSI constant to a conditional one in the fourth point. Finally, in the last point, we prove that the conditional MLSI constant is positive and independent of the system size, yielding thus the same fact for the global one. In general, this will be the trickiest part of the strategy. 

\hl{This strategy of five steps constitutes a generalization of the one used in $\text{\cite{[CLP18a]}}$ to prove that the heat-bath dynamics with tensor product fixed point has a positive MLSI constant, lower bounded by $1/2$ (see also $\text{\cite{beigi2018quantum}}$). This result showed, in particular, that local depolarizing on-site noise destroys very fast any quantum memory. Then, it is only natural that in the current manuscript we weaken the conditions imposed on the fixed point and address the analogous problem for a more general class of quantum systems evolving under this dynamics. Indeed}
, in the main result of this paper, we apply \hl{the}  strategy \hl{above} to the heat-bath dynamics in dimension 1, to obtain positivity of the MLSI constant under two additional assumptions on the Gibbs state.

\hl{Before proceeding to the informal exposition of the main result of this manuscript, let us emphasize the importance of the strategy to prove MLSI presented in this paper, which is also at the core of the thesis $\text{\cite{Capel-Thesis-2019}}$. Indeed, after the completion of the first version of this manuscript, a modification of this strategy for a new conditional relative entropy has been used in $\text{\cite{BardetCapelRouze-ApproximateTensorization-2020}}$ and $\text{\cite{CapelRouzeStilckFranca-MLSIcommuting-2020}}$ to obtain the first examples of positivity of the MLSI for quantum lattice spin systems independently of the system size, solving thus a long-standing open problem. However, those examples concern a different class of semigroups, called Schmidt generators, whose main feature is that they present better properties regarding the decomposition of the algebras over different regions of the lattice. Therefore, while the current manuscript does not solve the problem of positivity of MLSI for the heat-bath dynamics, the strategy we present here has already proven to be successful in analyizing MLSI, even beyond the specific generators considered here.}

\subsection{Informal exposition of the main result}

In this article, we consider the quantum heat-bath dynamics in 1D, whose generator is constructed following the same idea than for the clasical heat-bath Monte-Carlo algorithm. More specifically, given a finite lattice $\Lambda \subset \Z$ and a state $\rho_\Lambda \in \SSS_\Lambda$, it is defined as:
\begin{equation*}
\LL^*_\Lambda (\rho_\Lambda) = \un{x \in \Lambda}{\sum} \left( \sigma^{1/2}_\Lambda \sigma^{-1/2}_{x^c} \rho_{x^c} \sigma^{-1/2}_{x^c} \sigma^{1/2}_\Lambda - \rho_\Lambda  \right),
\end{equation*}
where the first term in the sum of the RHS coincides with the Petz recovery map for the partial trace at every site $x \in \Lambda$, composed with the partial trace in $x$, and $\sigma_\Lambda$ is the Gibbs state of a commuting $k$-local Hamiltonian. 

Analogously to the classical case and the quantum spectral gap case, as part of the previous strategy we need to assume that a couple of clustering conditions on the Gibbs state hold. The first one is related to the exponential decay of correlations in the Gibbs state of the given commuting Hamiltonian and is satisfied, for example, by Gibbs states at high-enough temperature depending logarithmically of the system size. 

\begin{assump*}[Mixing condition]
Let $C, D \subset \Lambda$ be the union of non-overlapping finite-sized segments of $\Lambda$. The following inequality holds for positive constants $K_1, K_2$ independent of $\Lambda$:
\begin{equation*}
\norm{\sigma_C^{-1/2} \otimes \sigma_D^{-1/2} \,  \sigma_{CD} \,  \sigma_C^{-1/2} \otimes \sigma_D^{-1/2} - \identity_{CD}}_\infty \leq K_1 \operatorname{e}^{- K_2 \text{d}(C, D)},
\end{equation*}
where $\text{d}(C, D)$ is the distance between $C$ and $D$, i.e., the minimum distance between two segments of $C$ and $D$\footnote{The precise choice of the distance measure will be introduced later.}.
\end{assump*}

The second assumption constitutes a stronger form of quasi-factorization of the relative entropy than the ones appearing in \cite{[CLP18a]}. An example where it holds is for Gibbs states verifying $\sigma_\Lambda = \un{x \in \Lambda}{\bigotimes} \sigma_x$. 

\begin{assump2*}[Strong quasi-factorization]
 Given $X \subset \Lambda$, for every $\rho_\Lambda \in \SSS_\Lambda$ the following inequality holds
\begin{equation*}
\entA X \rla \sla \leq f_X(\sigma_{\Lambda}) \un{x \in X}{\sum} \entA x \rla \sla,
\end{equation*}
where  $1 \leq f_X(\sigma_{\Lambda}) < \infty$ depends only on $\sigma_\Lambda$ and does not depend on the size of $\Lambda$, whereas $D_X(\rho_\Lambda || \sigma_\Lambda)$, resp. $\entA x \rla \sla$, is the conditional relative entropy in $X$, resp. $x$, of $\rho_\Lambda$ and $\sigma_\Lambda$, whose explicit form will be recalled in Subsection \ref{subsec:entropy}.
\end{assump2*}

Then, the main result of this paper is the positivity of the modified logarithmic Sobolev constant for the heat-bath dynamics in 1D.

\begin{thm*}
Considering that the two previous assumptions hold, the modified logarithmic Sobolev constant of the generator associated to the  heat-bath dynamics in 1D systems with invariant state the Gibbs state of a local commuting Hamiltonian is strictly positive and independent of $\abs{\Lambda}$.
\end{thm*}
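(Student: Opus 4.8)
The plan is to run the five-point programme described above; its output is a constant $\alpha>0$, depending only on the local terms of the Hamiltonian, on $K_1,K_2$ and on the functions $f_X$, such that $\alpha\,\ent{\rla}{\sla}\le\operatorname{EP}_{\LL_\Lambda}(\rla)$ for every finite $\Lambda\subset\Z$ and every $\rla\in\SSS_\Lambda$, where $\operatorname{EP}_{\LL_\Lambda}(\rla):=-\frac{d}{dt}\big|_{t=0}\ent{\operatorname{e}^{t\LL^*_\Lambda}(\rla)}{\sla}$ is the entropy production. Two elementary observations are used throughout: since $\LL^*_\Lambda=\sum_{x\in\Lambda}\LL^*_x$, the entropy production is additive, $\operatorname{EP}_{\LL_\Lambda}(\rla)=\sum_{x\in\Lambda}\operatorname{EP}_{\LL_x}(\rla)$, and each summand is nonnegative because $\operatorname{e}^{t\LL^*_x}$ is a quantum channel fixing $\sla$; hence, writing $\LL^*_\Theta:=\sum_{x\in\Theta}\LL^*_x$, for any family $\mathcal F$ of subsets of $\Lambda$ covering $\Lambda$ with overlap multiplicity at most $m$ one has $\sum_{\Theta\in\mathcal F}\operatorname{EP}_{\LL_\Theta}(\rla)\le m\operatorname{EP}_{\LL_\Lambda}(\rla)$. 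The theorem then reduces to two claims, to be established with $\Xi,m,\tilde\alpha$ all independent of $\abs{\Lambda}$: (a) $\ent{\rla}{\sla}\le\Xi\sum_{\Theta\in\mathcal F}\entA\Theta\rla\sla$ for a family $\mathcal F$ of \emph{fixed-size} blocks; and (b) a conditional MLSI $\tilde\alpha\,\entA\Theta\rla\sla\le\operatorname{EP}_{\LL_\Theta}(\rla)$ on each such block. Together these give $\alpha\ge\tilde\alpha/(m\Xi)$.

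For (a) — points (3)--(4) — I would feed the Mixing condition into the quasi-factorization of the relative entropy of \cite{[CLP18a]}: for a bipartition $\Lambda=AB$ with overlap of width $\ell$ it gives $\ent{\rla}{\sla}\le\xi(\ell)\big(\entA A\rla\sla+\entA B\rla\sla\big)$, with $\xi(\ell)$ an explicit increasing function of the operator-norm quantity controlled by the Mixing condition on $A\setminus B$ and $B\setminus A$, so that $\xi(\ell)=1+O(\operatorname{e}^{-K_2\ell})$ once $\ell$ is large. Iterating this estimate (in its conditional form) along the standard one-dimensional recursive decomposition of the segment, in the spirit of \cite{MartinelliOlivieri-ApproachEquilibriumGlauberI-1994,[Cesi01],[D02]} — repeatedly splitting $\Lambda$ into two overlapping pieces, applying the inequality, and recursing until the pieces reach a fixed length $s$, with the overlaps $\ell_k$ at the successive levels chosen to grow fast enough that $\prod_k\xi(\ell_k)<\infty$ — yields a finite, $\abs{\Lambda}$-independent constant $\Xi$ with $\ent{\rla}{\sla}\le\Xi\sum_\Theta\entA\Theta\rla\sla$, the sum over length-$s$ blocks of bounded overlap multiplicity. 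The convergence of the product is exactly what the exponential decay in the Mixing condition provides, while the linear geometry of $\Z$ keeps the number of blocks created at each level under control.

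Claim (b) — point (5) — is where I expect the real difficulty. Here the Strong quasi-factorization assumption reduces a fixed-size block to single sites, $\entA\Theta\rla\sla\le f_\Theta(\sla)\sum_{x\in\Theta}\entA x\rla\sla$ with $f_\Theta(\sla)$ bounded uniformly in $\abs{\Lambda}$ since $\Theta$ has fixed size, so it remains to prove a \emph{single-site} conditional MLSI $c\,\entA x\rla\sla\le\operatorname{EP}_{\LL_x}(\rla)$ with $c>0$ independent of $\abs{\Lambda}$. This is genuinely quantum and, unlike its classical analogue, not a triviality: $\entA x\rla\sla$ equals $\ent{\rla}{\mathbb{E}^*_x(\rla)}$ for the conditional expectation $\mathbb{E}^*_x:=\lim_{t\to\infty}\operatorname{e}^{t\LL^*_x}$ onto the fixed-point algebra of $\LL^*_x$, hence it depends on $\rla$ through the whole exterior of $x$ (the failure of the DLR condition), and the one-step map $\rho\mapsto\sigma^{1/2}_\Lambda\sigma^{-1/2}_{x^c}\rho_{x^c}\sigma^{-1/2}_{x^c}\sigma^{1/2}_\Lambda$ defining $\LL^*_x$ is $\sla$-reversible but not idempotent (since $\sla$ does not tensorise across $x$), so $\operatorname{e}^{t\LL^*_x}$ is not a convex combination of the identity and $\mathbb{E}^*_x$ and the entropy production is not read off directly. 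My plan is to exploit the commuting, $k$-local structure of the Hamiltonian: writing $H=H_x+H_{x^c}$ with $H_x$ the sum of the terms touching $x$, the operator $\sigma^{1/2}_\Lambda\sigma^{-1/2}_{x^c}$ collapses to one supported on the bounded neighbourhood $R_x:=x\cup\partial x$, whence $\LL^*_x=\LL^{(R_x)}_x\otimes\operatorname{id}_{R_x^c}$ and $\mathbb{E}^*_x=\mathbb{E}^{(R_x)}_x\otimes\operatorname{id}_{R_x^c}$ for a \emph{fixed} finite-dimensional generator $\LL^{(R_x)}_x$. Both $\entA x\rla\sla$ and $\operatorname{EP}_{\LL_x}(\rla)$ are then governed by $\LL^{(R_x)}_x$ alone, so it suffices to exhibit a positive modified logarithmic Sobolev constant for $\LL^{(R_x)}_x$ relative to its fixed-point algebra that is stable under tensoring with $\operatorname{id}_{R_x^c}$ (a \emph{complete} MLSI constant). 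This is a purely finite-dimensional statement; I would prove it directly for this specific heat-bath generator using an interpolation / integral-representation argument tailored to the non-idempotent one-step map — this being the technical core — obtaining a $c>0$ that depends only on the local Hamiltonian terms around $x$ and is therefore uniform in $\abs{\Lambda}$. Assembling the three ingredients yields $\tilde\alpha=c/\sup_\Theta f_\Theta(\sla)$ and $\alpha\ge\tilde\alpha/(m\Xi)>0$, which is the assertion.
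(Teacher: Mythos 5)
Your overall architecture (quasi-factorization down to fixed-size blocks, then a conditional MLSI on each block via the strong quasi-factorization assumption and a single-site bound) matches the paper's, but both of your two main claims contain genuine gaps, and in each case you miss the specific device the paper uses to close them.

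For claim (a), your recursion ``in its conditional form'' is not available. The quasi-factorization of \cite{[CLP18a]} bounds a \emph{full} relative entropy $\ent{\rho_{XZY}}{\sigma_{XZY}}$ by the two conditional relative entropies $\entA{XZ}{\cdot}{\cdot}+\entA{ZY}{\cdot}{\cdot}$; after one application you are left with conditional relative entropies on large sets, and to recurse you would need an inequality of the form $\entA{AB}{\rla}{\sla}\le f\,(\entA{A}{\rla}{\sla}+\entA{B}{\rla}{\sla})$ for overlapping sets of growing size with multiplicative constants whose product converges. That is precisely the statement the paper isolates as an open problem (it is the content of inequality \eqref{conjecture} and of Question 4, where the Martinelli--Olivieri/Cesi-style bisection is explicitly discussed and rejected for this reason); Assumption \ref{assump:2} only supplies it for sets of \emph{fixed} size, reduced to single sites. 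The paper instead applies the two-set quasi-factorization \emph{once}, to a covering $\Lambda=A\cup B$ in which $A=\bigcup_i A_i$ and $B=\bigcup_i B_i$ are unions of well-separated fixed-size segments (so the mixing condition must be, and is, stated for unions of segments), and then splits $\entA{A}{\rla}{\sla}\le\sum_i\entA{A_i}{\rla}{\sla}$ \emph{exactly}, with no error term, using strong subadditivity together with the quantum Markov chain structure of the Gibbs state across the non-overlapping boundaries (Step \ref{step:2} and Proposition \ref{prop:identityQMC}). This one-shot geometry is the idea your proposal is missing.

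For claim (b), you correctly reduce to a single-site conditional MLSI but then leave it as an unproven ``technical core'' requiring an interpolation argument. In fact no such machinery is needed: since $\LL_x^*=\E_x^*-\operatorname{id}$ is a one-step generator, the entropy production is $\operatorname{EP}_x(\rla)=\ent{\rla}{\sla}-\tr[\E_x^*(\rla)(\log\rla-\log\sla)]$, and adding and subtracting $\log\E_x^*(\rla)$ gives $\tr[\E_x^*(\rla)(\log\rla-\log\sla)]=-\ent{\E_x^*(\rla)}{\rla}+\ent{\E_x^*(\rla)}{\sla}\le\ent{\E_x^*(\rla)}{\sla}\le\ent{\rho_{x^c}}{\sigma_{x^c}}$, the last step by the data processing inequality because $\E_x^*$ is the Petz recovery map of $\tr_x$ composed with $\tr_x$. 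Hence $\operatorname{EP}_x(\rla)\ge\entA{x}{\rla}{\sla}$, i.e.\ $\alpha_\Lambda(\LL_x^*)\ge 1/2$, uniformly in $\abs{\Lambda}$ (Lemma \ref{lemma:EP>D}). Your concern that the non-idempotence of $\E_x^*$ obstructs reading off the entropy production is misplaced: the bound is extracted from the generator directly, not from the semigroup, and your identification $\entA{x}{\rla}{\sla}=\ent{\rla}{\mathbb{E}_x^*(\rla)}$ with $\mathbb{E}_x^*$ the $t\to\infty$ projection is not the definition used here and is not needed. One further caveat: even granting both claims, the well-definedness of the quotient defining $\alpha_\Lambda(\LL_X^*)$ requires knowing that $\entA{X}{\rla}{\sla}=0$ exactly when all $\entA{x}{\rla}{\sla}=0$, which the paper establishes separately (Theorem \ref{thm:kernelA=kernelx} and Corollary \ref{cor:kernelA=kernelx}) and your proposal does not address.
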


The rigorous statement of this theorem will be given in Theorem \ref{thm:mainresult}.

\subsection{Layout of the paper}

In Section \ref{sec:preliminaries}, we introduce the necessary notation, preliminary notions and basic properties to follow the rest of the paper. In Section \ref{sec:tools} we prove several technical tools (which do not depend on the necessary assumptions for the main result), of independent interest, that will be of use in the proof of the main result, which we subsequently address in Section \ref{sec:MLSI-HB-1D}, providing a complete and self-contained proof. In Section \ref{sec:assumptions}, we discuss the assumptions imposed on the Gibbs state, providing examples and situations in which they hold. We finally conclude in Section \ref{sec:conclusions} with some open problems. 

\section{Preliminaries and notation}\label{sec:preliminaries}

\subsection{Notation}\label{subsec:notation}

In this paper we consider finite dimensional Hilbert spaces. For $\Lambda$ a set of $\abs{\Lambda}$ parties, we denote the multipartite finite dimensional Hilbert space of $\abs{\Lambda}$ parties by $\hs_\Lambda = \underset{x \in \Lambda}{\bigotimes} \hs_x$, whose dimension is $d_\Lambda$. Throughout this text, $\Lambda$ will often consist of 3 parties, and we will denote by $\hs_{ABC}= \hs_A \otimes \hs_B \otimes \hs_C$ the corresponding tripartite Hilbert space. Furthermore, most of the paper concerns quantum spin lattice systems and we often assume that $\Lambda \subset \subset \Z^d$ is a finite subset. In general, we use uppercase Latin letters to denote systems or sets.

For every finite dimensional $\hs_\Lambda$, we denote the associated set of bounded linear operators by $\BB_\Lambda := \BB(\hs_\Lambda)$, and by $\A_\Lambda := \A(\hs_\Lambda)$ its subset of observables, i.e. Hermitian operators, which we denote by lowercase Latin letters. We further denote by $\SSS_\Lambda := \SSS(\hs_\Lambda)= \qty{f_\Lambda \in \A_\Lambda \, : \, f_\Lambda \geq 0 \text{ and }\tr[f_\Lambda]=1}$ the set of density matrices, or states, and denote its elements by lowercase Greek letters. In particular, whenever they appear in the text, Gibbs states are denoted by $\sigma_\Lambda$. We usually denote the space where each operator is defined using the same subindex as for the space, but we might drop it when it is unnecessary. 

In this manuscript we often consider quantum channels, i.e. completely positive and trace-preserving maps. In general, a linear map $\mathcal T:\BB_\Lambda \rightarrow \BB_\Lambda$ is called a superoperator.  We write $\identity$ for the identity matrix and $\text{id}$ for the identity superoperator. For bipartite spaces $\hs_{AB}= \hs_A \otimes \hs_B$, we consider the natural inclusion $\A_A \hookrightarrow \A_{AB}$ by identifying each operator $f_A \in \A_A$ with $f_{A} \otimes \identity_B$. In this way, we define the modified partial trace in $A$ of $f_{AB} \in \A_{AB}$ by $\highlight{\tr_B[f_{AB}]} \otimes \identity_B$, but we denote it by $\highlight{\tr_B[f_{AB}]} $ in a slight abuse of notation.  Moreover, we say that an operator $g_{AB} \in \A_{AB}$ has support in $A$ if it can be written as $g_A \otimes \identity_B$ for some operator $g_A \in \A_A$. Note that given $f_{AB} \in \A_{AB}$, we write $f_A:= \tr_B[f_{AB}]$.

Finally,  given $x,y \in \Lambda \subset \subset \Z^d$, we denote by $d(x,y)$ the Euclidean distance between $x$ and $y$ in $\Z^d$. Hence, the distance between two subsets of $\Lambda$, $A$ and $B$, is given by $d(A,B):= \text{min} \qty{d(x,y) : x \in A, y \in B}$. Furthermore, we denote by $\norm{\cdot}_\infty$ the usual operator norm, as well as by $\norm{\cdot}_1=\tr[\abs{\cdot}] $ the trace-norm.

%\subsection{Non-commutative $\LLL_p$ spaces}

%Throughout this manuscript, non-commutative $\LLL_p$ spaces will often appear. Before introducing some properties that will be of use in the next sections, let us first recall the notion of Schatten $p$-norms. Let $\hs_\Lambda$ be a finite dimensional Hilbert space and consider an operator $T \in \BB_\Lambda$. Then, for every $p \in [1, \infty)$, the \textit{Schatten p-norm} of $T$ is given by
%\begin{equation*}
%\norm{T}_p := (\tr[ (\sqrt{T^*T})^p ])^{1/p},
%\end{equation*}
%where $T^*$ is the dual operator of $T$ with respect to the Hilbert-Schmidt inner product. For  $p \in [1, \infty)$ this quantity defines a norm, and the limit in $p$, $\underset{p \rightarrow \infty}{\text{lim}} \norm{\cdot}_p =: \norm{\cdot}_\infty$ coincides with the operator norm. 

%Some properties 

\subsection{Entropies}\label{subsec:entropy}

\subsubsection{Von Neumann entropy}\label{subsubsec:vNentropy}

Let $\hs_\Lambda$ be a finite dimensional Hilbert space and consider $\rho_\Lambda \in \SSS_\Lambda$. The \textit{von Neumann entropy} of $\rho_\Lambda$ is defined as
\begin{equation*}
S(\rho_\Lambda):= - \tr[\rho_\Lambda \log \rho_\Lambda].
\end{equation*}

The applications of this notion to quantum statistical mechanics and quantum information theory are numerous. Here we focus on one of its most fundamental properties, which will appear often throughout the text.

\begin{prop}[Strong subadditivity, \cite{LiebRuskai-Subadditivity-1973}]
Let $\hs_{ABC} = \hs_A \otimes \hs_B \otimes \hs_C$ be a tripartite Hilbert space and consider $\rho_{ABC} \in \SSS_{ABC}$. Then, the following inequality holds:
\begin{equation*}
S(\rho_{ABC}) + S(\rho_B) \leq S(\rho_{AB}) + S(\rho_{BC}).
\end{equation*}

\end{prop}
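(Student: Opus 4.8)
The plan is to deduce strong subadditivity from the \emph{monotonicity of the quantum relative entropy under partial trace}, which is the most transparent modern route. Write $\ent{\rho}{\sigma}=\tr[\rho(\log\rho-\log\sigma)]$ for the relative entropy. The first step is a purely algebraic reduction. Using that $\log(\rho_A\otimes\rho_{BC})=\log\rho_A\otimes\identity_{BC}+\identity_A\otimes\log\rho_{BC}$ and $\tr_{BC}[\rho_{ABC}]=\rho_A$, $\tr_A[\rho_{ABC}]=\rho_{BC}$, I would compute
\[
\ent{\rho_{ABC}}{\rho_A\otimes\rho_{BC}}=S(\rho_A)+S(\rho_{BC})-S(\rho_{ABC}),
\]
and likewise, after tracing out $C$,
\[
\ent{\rho_{AB}}{\rho_A\otimes\rho_{B}}=S(\rho_A)+S(\rho_{B})-S(\rho_{AB}).
\]
Since $\tr_C[\rho_{ABC}]=\rho_{AB}$ and $\tr_C[\rho_A\otimes\rho_{BC}]=\rho_A\otimes\rho_B$, the monotonicity inequality $\ent{\rho_{ABC}}{\rho_A\otimes\rho_{BC}}\ge\ent{\rho_{AB}}{\rho_A\otimes\rho_B}$ becomes, after cancelling $S(\rho_A)$, exactly $S(\rho_{ABC})+S(\rho_B)\le S(\rho_{AB})+S(\rho_{BC})$. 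Thus the proposition is equivalent to monotonicity of $\ent{\cdot}{\cdot}$ under the partial trace $\tr_C$.

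The second step is to establish that monotonicity, which I would derive from the \emph{joint convexity} of $(\rho,\sigma)\mapsto\ent{\rho}{\sigma}$ together with its unitary invariance, $\ent{U\rho U^\dagger}{U\sigma U^\dagger}=\ent{\rho}{\sigma}$, and its additivity on tensor products. The key observation is that the partial trace is realized as a unitary average: choosing unitaries $\{U_j\}_{j=1}^{d_C^2}$ on $\hs_C$ that form a one-design (e.g.\ the discrete Weyl operators), one has $\tfrac{1}{d_C^2}\sum_j(\identity_{AB}\otimes U_j)\,\xi\,(\identity_{AB}\otimes U_j^\dagger)=\xi_{AB}\otimes\tfrac{\identity_C}{d_C}$ for every state $\xi$. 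Applying joint convexity to the averaged pair and then unitary invariance termwise gives $\ent{\rho_{AB}\otimes\identity_C/d_C}{\sigma_{AB}\otimes\identity_C/d_C}\le\ent{\rho}{\sigma}$, and the left-hand side equals $\ent{\rho_{AB}}{\sigma_{AB}}$ by additivity. This is precisely monotonicity under $\tr_C$.

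The third and deepest ingredient is the joint convexity of the relative entropy, which I would obtain from Lieb's concavity theorem, i.e.\ that $(\rho,\sigma)\mapsto\tr[\rho^{1-s}\sigma^{s}]$ is jointly concave for $0\le s\le1$. Differentiating at $s=0$ yields the representation
\[
\ent{\rho}{\sigma}=\lim_{s\to 0^+}\frac{1-\tr[\rho^{1-s}\sigma^{s}]}{s},
\]
and since each $1-\tr[\rho^{1-s}\sigma^{s}]$ is jointly convex (being a constant minus a concave function), and joint convexity is preserved under nonnegative scaling and pointwise limits, $\ent{\rho}{\sigma}$ is jointly convex.

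The main obstacle is Lieb's concavity theorem itself, which carries the genuine analytic content and cannot be bypassed; everything else above is routine. I would prove it by operator-convexity or interpolation methods, either through the complex-interpolation integral representation of the map $(\rho,\sigma)\mapsto\tr[\rho^{1-s}\sigma^s]$ or by realizing the function via the operator convexity of $t\mapsto t\log t$ on an enlarged space. The entropy identities, the twirling realization of $\tr_C$, and the differentiation argument are all elementary, so the difficulty is entirely concentrated in this concavity statement, which is exactly the Lieb--Ruskai result that the proposition cites.
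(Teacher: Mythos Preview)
The paper does not supply its own proof of this proposition; it is stated with a citation to Lieb--Ruskai and used later as a black box (e.g.\ in the proof of Step~\ref{step:2}). So there is no argument in the paper to compare against.

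Your proposal is correct and follows what has become the standard modern route: reduce SSA to monotonicity of relative entropy under partial trace via the mutual-information identities, obtain monotonicity from joint convexity by realizing $\tr_C$ as a unitary one-design average, and reduce joint convexity to Lieb's concavity through the limit $\ent{\rho}{\sigma}=\lim_{s\to 0^+}s^{-1}(1-\tr[\rho^{1-s}\sigma^s])$. Each of the three reductions is accurate as written; the only substantive analytic input, as you correctly identify, is Lieb's concavity theorem itself, which you do not prove but merely indicate. That is appropriate for a sketch and is consistent with the paper's own stance of simply citing the result.
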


\subsubsection{Relative entropy}\label{subsubsec:RE}
A measure of distinguishability between two states that will appear often throughout this text is the relative entropy. Let $\hs_\Lambda$ be a finite dimensional Hilbert space and consider $\rho_\Lambda, \sigma_\Lambda \in \SSS_\Lambda$. We define the \textit{relative entropy} of $\rho_\Lambda$ and $\sigma_\Lambda$ as
\begin{equation*}
D(\rho_\Lambda || \sigma_\Lambda):=  \tr[\rho_\Lambda (\log \rho_\Lambda - \log \sigma_\Lambda)].
\end{equation*}

Some fundamental properties of the relative entropy that will be of use are the following. 

\begin{prop}[Properties of the relative entropy, \cite{Wehrl-Entropy-1978}]
Let $\hs_{AB} = \hs_A \otimes \hs_B $ be a bipartite Hilbert space and consider $\rho_{AB}, \sigma_{AB} \in \SSS_{AB}$. Then, the following properties hold:
\begin{enumerate}
\item \textbf{Non-negativity.} $D(\rho_{AB} || \sigma_{AB}) \geq 0$ and $D(\rho_{AB}|| \sigma_{AB}) = 0$ if, and only if, $\rho_{AB} = \sigma_{AB}$.
\item \textbf{Additivity.} $ D(\rho_A \otimes \rho_B || \sigma_A \otimes \sigma_B )= D(\rho_A || \sigma_A) + D(\rho_B || \sigma_B)$.
\item \textbf{Superadditivity.} $ D(\rho_{AB}  || \sigma_A \otimes \sigma_B ) \geq D(\rho_A || \sigma_A) + D(\rho_B || \sigma_B)$.
\item \textbf{Data processing inequality.} For every quantum channel $\mathcal T: \SSS_{AB} \rightarrow \SSS_{AB}$, $D(\rho_{AB} || \sigma_{AB}) \geq D(\mathcal T(\rho_{AB}) ||  \mathcal T(\sigma_{AB}))$.
\end{enumerate}

\end{prop}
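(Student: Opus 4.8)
The four properties are standard, and I sketch how I would derive each from first principles, using the strong subadditivity (SSA) of the preceding Proposition where convenient.

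For non-negativity I would invoke Klein's inequality: for a convex differentiable function $f$ on $[0,\infty)$ and Hermitian operators $A,B$ one has $\tr[f(A)] - \tr[f(B)] \geq \tr[(A-B)f'(B)]$, with equality (when $f$ is strictly convex) if and only if $A=B$. Applying this with $f(x)=x\log x$, so that $f'(x)=\log x + \identity$, to $A=\rho_{AB}$ and $B=\sigma_{AB}$, and using $\tr[\rho_{AB}]=\tr[\sigma_{AB}]=1$ to kill the term proportional to $\identity$, the inequality collapses to $\tr[\rho_{AB}\log\rho_{AB}] \geq \tr[\rho_{AB}\log\sigma_{AB}]$, i.e. $D(\rho_{AB}||\sigma_{AB}) \geq 0$. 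Strict convexity of $x\log x$ then gives the equality case $\rho_{AB}=\sigma_{AB}$. (Alternatively one diagonalizes both states and reduces to the classical Gibbs inequality through a doubly stochastic matrix, which also tracks the equality condition.)

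Additivity is a direct computation: since $\log(\rho_A\otimes\rho_B) = \log\rho_A \otimes \identity_B + \identity_A \otimes \log\rho_B$ and likewise for $\sigma$, expanding $\tr[(\rho_A\otimes\rho_B)(\log(\rho_A\otimes\rho_B) - \log(\sigma_A\otimes\sigma_B))]$ and using $\tr[\rho_A]=\tr[\rho_B]=1$ separates it into $D(\rho_A||\sigma_A) + D(\rho_B||\sigma_B)$. For superadditivity I would first record the algebraic identity $D(\rho_{AB}||\sigma_A\otimes\sigma_B) - D(\rho_A||\sigma_A) - D(\rho_B||\sigma_B) = S(\rho_A)+S(\rho_B)-S(\rho_{AB})$, obtained by expanding the three relative entropies with $\log(\sigma_A\otimes\sigma_B) = \log\sigma_A\otimes\identity + \identity\otimes\log\sigma_B$ (the $\log\sigma$ terms cancel pairwise). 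Hence superadditivity is exactly the subadditivity of the von Neumann entropy, $S(\rho_{AB}) \leq S(\rho_A)+S(\rho_B)$, which in turn follows from part (1) applied to the pair $(\rho_{AB},\rho_A\otimes\rho_B)$, since $D(\rho_{AB}||\rho_A\otimes\rho_B) = S(\rho_A)+S(\rho_B)-S(\rho_{AB}) \geq 0$; equivalently it is the special case of SSA with trivial middle system.

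The data processing inequality is the only genuinely deep statement, and is where I expect the main obstacle. I would reduce the general case to monotonicity under partial trace via the Stinespring representation $\mathcal{T}(\cdot) = \tr_E[U(\cdot \otimes \omega_E)U^\dagger]$ with a fixed pure ancilla $\omega_E$ and a unitary $U$. Then $D(\mathcal{T}(\rho_{AB})||\mathcal{T}(\sigma_{AB}))$ equals $D(\tr_E[U(\rho_{AB}\otimes\omega_E)U^\dagger] || \tr_E[U(\sigma_{AB}\otimes\omega_E)U^\dagger])$; monotonicity under the partial trace over $E$ bounds this by $D(U(\rho_{AB}\otimes\omega_E)U^\dagger || U(\sigma_{AB}\otimes\omega_E)U^\dagger)$, which by the manifest invariance of $D$ under $X\mapsto UXU^\dagger$ equals $D(\rho_{AB}\otimes\omega_E || \sigma_{AB}\otimes\omega_E)$, and additivity (part 2) reduces it to $D(\rho_{AB}||\sigma_{AB}) + D(\omega_E||\omega_E) = D(\rho_{AB}||\sigma_{AB})$. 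The one nontrivial ingredient is thus monotonicity of the relative entropy under partial trace, $D(\rho_{AB}||\sigma_{AB}) \geq D(\rho_A||\sigma_A)$, which is precisely equivalent to SSA: the easy direction follows from $\sigma_{AB}=\identity_A\otimes\rho_B$, where $D(\rho_{AB}||\identity_A\otimes\rho_B) = -S(\rho_{AB})+S(\rho_B)=-S(A|B)$, so that monotonicity reads $S(A|BC)\leq S(A|B)$, i.e. SSA; the converse, extracting full monotonicity from the entropic form of SSA, is the hard analytic content (operator convexity, Lieb's concavity theorem). Since strong subadditivity is available as the preceding Proposition, I would invoke this equivalence to close the argument.
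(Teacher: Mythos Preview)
The paper does not prove this proposition; it is recorded as a list of standard facts with a citation to Wehrl's review article, so your sketch already provides far more than the paper does. The arguments you outline for (1)--(3) are correct and standard.

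For (4), your Stinespring reduction to monotonicity under partial trace is correct. The only point worth flagging is the final step: the implication from SSA (as stated in the preceding Proposition) to monotonicity of $D(\rho_{AB}\|\sigma_{AB})$ under partial trace for \emph{arbitrary} $\sigma_{AB}$ is not a one-line entropic manipulation. The standard route passes through joint convexity of the relative entropy (Lindblad), which itself rests on Lieb's concavity. Since you already name Lieb's concavity as the underlying hard input, it is cleaner to bypass the detour through SSA and go straight from joint convexity to monotonicity under partial trace (average over a unitary $1$-design on the system being traced out). This is a presentational point rather than a gap; the content of your sketch is sound.
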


\subsubsection{Conditional relative entropy}\label{subsubsec:CRE}

The conditional relative entropy provides the value of the distinguishability between two states in a certain system given the value of their distinguishability in a subsystem. Let $\hs_{AB}= \hs_A \otimes \hs_B$ be a bipartite finite dimensional Hilbert space and consider $\rho_{AB}, \sigma_{AB} \in \SSS_{AB}$. The conditional relative entropy in $A$ of $\rho_{AB}$ and $\sigma_{AB}$ is given by
\begin{equation*}
D_A(\rho_{AB} || \sigma_{AB}) := D(\rho_{AB} || \sigma_{AB}) - D(\rho_B || \sigma_B).
\end{equation*}

We recall in the next proposition some properties of the conditional relative entropy that will be of use in the next sections. 

\begin{prop}[Some properties of the conditional relative entropy, \cite{[CLP18a]}]
Let $\hs_{AB}= \hs_A \otimes \hs_B$ be a bipartite finite dimensional Hilbert space and consider $\rho_{AB}, \sigma_{AB} \in \SSS_{AB}$. Then, the following properties hold:
\begin{enumerate}
\item \textbf{Non-negativity.} $D_A (\rho_{AB} || \sigma_{AB}) \geq 0$.
\item $D_A(\rho_{AB}|| \sigma_A \otimes \sigma_B) = I_\rho(A:B) + D(\rho_A || \sigma_A)$,
where $I_\rho(A:B):= D(\rho_{AB} || \rho_A \otimes \rho_B)$ is the mutual information.
\end{enumerate}
\end{prop}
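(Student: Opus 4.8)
The plan is to establish both properties directly from the definition $D_A(\rho_{AB} || \sigma_{AB}) = D(\rho_{AB} || \sigma_{AB}) - D(\rho_B || \sigma_B)$ together with the basic properties of the relative entropy already recorded in the preceding proposition; no new machinery is needed.

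For the non-negativity in item (1), I would observe that the partial trace $\tr_A : \SSS_{AB} \to \SSS_B$ is a quantum channel (completely positive and trace preserving). Applying the data processing inequality (item (4) of the previous proposition) to this channel yields $D(\rho_{AB} || \sigma_{AB}) \geq D(\tr_A[\rho_{AB}] || \tr_A[\sigma_{AB}]) = D(\rho_B || \sigma_B)$. Subtracting $D(\rho_B || \sigma_B)$ from both sides and comparing with the definition of the conditional relative entropy gives $D_A(\rho_{AB} || \sigma_{AB}) \geq 0$ at once.

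For item (2), I would proceed by a short explicit computation. Using that $\log(\sigma_A \otimes \sigma_B) = \log \sigma_A \otimes \identity_B + \identity_A \otimes \log \sigma_B$ together with $\tr[\rho_{AB}(\log\sigma_A \otimes \identity_B)] = \tr[\rho_A \log\sigma_A]$ (and symmetrically for $B$), the definition of relative entropy expands to
\begin{equation*}
D(\rho_{AB} || \sigma_A \otimes \sigma_B) = \tr[\rho_{AB}\log\rho_{AB}] - \tr[\rho_A \log\sigma_A] - \tr[\rho_B\log\sigma_B].
\end{equation*}
I would then add and subtract $\tr[\rho_A\log\rho_A]$ and $\tr[\rho_B\log\rho_B]$ and regroup the resulting terms into the mutual information $I_\rho(A:B) = \tr[\rho_{AB}\log\rho_{AB}] - \tr[\rho_A\log\rho_A] - \tr[\rho_B\log\rho_B]$ and the two marginal relative entropies $D(\rho_A||\sigma_A)$ and $D(\rho_B||\sigma_B)$, obtaining
\begin{equation*}
D(\rho_{AB} || \sigma_A\otimes\sigma_B) = I_\rho(A:B) + D(\rho_A||\sigma_A) + D(\rho_B||\sigma_B).
\end{equation*}
Subtracting $D(\rho_B||\sigma_B)$ according to the definition of $D_A$ then cancels the last term and leaves exactly $I_\rho(A:B) + D(\rho_A||\sigma_A)$, as claimed.

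These arguments are elementary and I do not anticipate a genuine obstacle. The only points demanding care are purely bookkeeping: correctly identifying the partial trace as the channel to which the data processing inequality is applied in item (1), and keeping track of the partial-trace contractions of the tensor-product logarithm in item (2). It is worth noting that the intermediate identity derived for item (2) in fact decomposes the relative entropy against the product reference state $\sigma_A\otimes\sigma_B$ into a ``correlation'' contribution $I_\rho(A:B)$ and two ``marginal'' contributions $D(\rho_A||\sigma_A)$ and $D(\rho_B||\sigma_B)$, which is the conceptual content underlying the stated formula.
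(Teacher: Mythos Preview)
Your argument is correct. Both items are handled in the natural way: item (1) is precisely the data processing inequality for the partial trace, and item (2) is the standard additive decomposition of $D(\rho_{AB}\|\sigma_A\otimes\sigma_B)$ followed by subtraction of $D(\rho_B\|\sigma_B)$.

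Note that the paper itself does not supply a proof of this proposition; it merely states the result and cites \cite{[CLP18a]}. Your proof is the expected elementary argument and there is nothing to compare against in the present paper.
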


\subsection{Modified logarithmic Sobolev constant}\label{subsec:MLSI}
Given a state $\rho_\Lambda$  in an open quantum many-body system under the Markov approximation, its time evolution is described by a one-paramenter semigroup of completely positive trace-preserving maps $\mathcal{T}_t^* := \operatorname{e}^{t \mathcal{L}_\Lambda^*}$, also known as \textit{quantum Markov semigroup} (QMS), where $\mathcal{L}_\Lambda^*: \SSS_\Lambda \rightarrow \SSS_\Lambda$ denotes the generator of the semigroup, which is called \textit{Liouvillian} or \textit{Lindbladian}, since its dual version in the Heisenberg picture satisfies the Lindblad (or GKLS) form \cite{[L76a]}, \cite{GKS-GKLSform-1976} for every $X_\Lambda \in \BB_\Lambda:$
\begin{equation*}
\LL_\Lambda(X_\Lambda)= i [H,X_\Lambda] + \frac{1}{2} \un{k=1}{\ov{l}{\sum}} \left[ 2 L_k^* X_\Lambda L_k - (L_k^* L_k X_\Lambda + X_\Lambda L_k^* L_k)  \right],
\end{equation*}
where $H \in \A_\Lambda$, the $L_k \in \BB_\Lambda$ are the \textit{Lindblad operators} and $[\cdot, \cdot]$ denotes the commutator. 

We say that the QMS is \textit{primitive} if there is a unique full-rank $\sigma_\Lambda \in \SSS_\Lambda$ which is invariant for the generator, i.e. such that $\LL_\Lambda^*(\sigma_\Lambda)=0$. Furthermore, we say that the Lindbladian is \textit{reversible}, or satisfies the \textit{detailed balance condition}, with respect to a state $\sigma_\Lambda \in \SSS_\Lambda$ if its version for observables verifies
\begin{equation*}
\left\langle f_\Lambda, \LL_\Lambda (g_\Lambda) \right\rangle_{\sigma_\Lambda} =  \left\langle \LL_\Lambda  (f_\Lambda), g_\Lambda \right\rangle_{\sigma_\Lambda}
\end{equation*}
for every $f_\Lambda, g_\Lambda \in \A_\Lambda$, where this weighted scalar product is defined for every $f_\Lambda, g_\Lambda \in \A_\Lambda$ by 
\begin{equation*}
\left\langle f_\Lambda, g_\Lambda \right\rangle_{\sigma_\Lambda} := \tr[f_\Lambda \sigma_\Lambda^{1/2} g_\Lambda \sigma_\Lambda^{1/2}].
\end{equation*}

We recall now the notion of entropy production as the derivative of the relative entropy in the following form. 

\begin{definition}\label{def:EntropyProduction}
Let $\Lambda \subset \subset \Z^d$ be a finite lattice and let $\hs_\Lambda$ be the associated Hilbert space. Let $\LL_\Lambda^* : \SSS_\Lambda \rightarrow \SSS_\Lambda$ be a primitive reversible Lindbladian with fixed point $\sigma_\Lambda\in \SSS_\Lambda$. Then, for every $\rho_\Lambda \in \SSS_\Lambda$, the \textit{entropy production} is defined as
\begin{equation*}
\operatorname{EP}(\rho_\Lambda):= - \left. \frac{\operatorname{d}}{\operatorname{dt}}\right\vert_{t=0} D(\rho_t || \sigma_\Lambda) = - \tr[\LL_\Lambda^*(\rho_\Lambda)(\log \rho_\Lambda - \log \sigma_\Lambda)],
\end{equation*}
where we are writing $\rho_t := \mathcal{T}^*_t(\rho_\Lambda)$.

\end{definition}

Note that the entropy production of a primitive QMS only vanishes on $\sigma_\Lambda$. The fact that both the negative derivative of the relative entropy between the elements of the semigroup and the fixed point and the relative entropy between the same states have the same kernel and converge to zero in the long time limit, for every possible initial state for the semigroup, allows us to consider the possibility of bounding one in terms of the other. This is the reason to define a modified logarithmic Sobolev inequality and its optimal constant. 

\begin{definition}\label{def:MLSI}
Let $\Lambda \subset \subset \Z^d$ be a finite lattice, $\hs_\Lambda$ its associated Hilbert space and $\LL_\Lambda^* : \SSS_\Lambda \rightarrow \SSS_\Lambda$ a primitive reversible Lindbladian with fixed point $\sigma_\Lambda\in \SSS_\Lambda$. Then, the \textit{modified logarithmic Sobolev constant} (MLSI constant) is defined as 
\begin{equation*}
\alpha(\LL_\Lambda^* ):= \un{\rho_\Lambda \in S_\Lambda}{\text{inf}} \frac{- \tr[\LL_\Lambda^*(\rho_\Lambda)(\log \rho_\Lambda - \log \sigma_\Lambda)]}{2 D(\rho_\Lambda || \sigma_\Lambda)}.
\end{equation*}
\end{definition}

A family of quantum logarithmic Sobolev inequalities was introduced in \cite{[KT13]}, where the modified logarithmic Sobolev inequality, whose optimal constant we have just recalled, is identified with the $1$-logarithmic Sobolev inequality. In the same paper, it is shown that the existence of a positive MLSI constant implies a bound in the mixing time of an evolution, i.e., the time that it takes for every initial state to be almost indistinguishable of the fixed point, which constitutes an exponential improvement in terms of the system size to the bound provided by the existence of a positive spectral gap. Indeed, if $\alpha(\LL_\Lambda^* )>0$, then for every $\rho_\Lambda \in \SSS_\Lambda$:
\begin{equation*}
\norm{\rho_t - \sigma_\Lambda}_1 \leq \sqrt{2 \log(\norm{\sigma_\Lambda^{-1}}_\infty)} \operatorname{e}^{-\alpha(\LL_\Lambda^*) t}.
\end{equation*}

This constitutes a way to obtain sufficient conditions for a QMS to satisfy rapid mixing, a property that has profound implications in the system, such as stability against external perturbations  \cite{Cubitt2015} and the fact that its fixed point satisfies an area law for the mutual information \cite{Brandao2015}.

\subsection{Gibbs states}\label{gibbs}

Given a finite lattice $\Lambda \subset \subset \Z^d$, let us define a $k$-local bounded potential as $\Phi: \Lambda \rightarrow \A_\Lambda$ such that, for any $x \in \Lambda$, $\Phi(x)$ is a Hermitian matrix supported in a ball of radius $k$ centered at $x$ and there exists a constant $C< \infty$ such that $\norm{\Phi(x)}_\infty <C$ for every $x\in \Lambda$. 

 We define the \textit{Hamiltonian} from this potential in the following way: For every subset $A \subset \Lambda$, the Hamiltonian in $A$, $H_A$, is given by
\begin{equation*}
H_A := \un{x\in A}{\sum} \Phi(x).
\end{equation*}

We further say that this potential is \textit{commuting} if $[\Phi(x), \Phi(y)]=0$ for every $x,y \in \Lambda$.

Consider now $A \subset \Lambda$ and $\Phi$ a bounded $k$-local potential. Since the potential is local, we can define the boundary of $A$ as
\begin{equation*}
\partial A := \qty{x \in \Lambda\setminus A \, | \, d(x, A)< k }
\end{equation*}
and we denote by $A \partial$ the union of $A$ and its boundary. Note that $H_A$ clearly has support in $A \partial$. Since in this paper we only focus on 1D systems, for every bounded connected subset $A\subset \Lambda$, the boundary will be composed of two parts, which we will intuitively denote by $(\partial A)_\text{Left}$ and  $(\partial A)_\text{Right}$, respectively.

In the full lattice $\Lambda \subset \subset \Z^d$, the Gibbs state is defined as
\begin{equation*}
\sigma_\Lambda := \frac{\operatorname{e}^{-\beta H_\Lambda}}{\tr[\operatorname{e}^{-\beta H_\Lambda}]}.
\end{equation*}

Note that, by a slight abuse of notations, we will denote by $\sigma_A$ for $A \subset \Lambda$ the state given by $\tr_{A^c}[\sigma_\Lambda]$, which should not be confused with the restricted Gibbs state corresponding to the terms of the Hamiltonian $H_A$. 

\subsection{Heat-bath generator}\label{subsec:heat-bath}

Let $\Lambda \subset \subset \Z^d$ be a finite lattice and $\Phi: \Lambda \rightarrow \A_\Lambda$ a $k$-local bounded commuting potential. Consider $\sigma_\Lambda$ to be the associated Gibbs state. Given $A \subseteq \Lambda$, we define the \textit{heat-bath conditional expectation} as follows: for every $\rho_\Lambda \in \SSS_\Lambda$,
\begin{equation*}
\E_A^*(\rho_\Lambda) := \sigma_\Lambda^{1/2} \sigma_{A^c}^{-1/2} \rho_{A^c} \sigma_{A^c}^{-1/2} \sigma_\Lambda^{1/2}. 
\end{equation*}

Note that it is a quantum channel and, moreover, it coincides with the Petz recovery map for the partial trace in $A$ with respect to $\sigma_\Lambda$, composed with the partial trace in $A$ \cite{[Petz86]}, \hl{ i.e. $\mathbb E_A^* (\cdot) := \mathcal P_{\tr_A}^{\sigma_\Lambda} \circ \tr_A [\cdot] $ for}
\begin{equation*}
\highlight{ P_{\tr_A}^{\sigma_\Lambda} (\cdot) := \sigma_\Lambda^{1/2} \sigma_{A^c}^{-1/2} ( \cdot ) \sigma_{A^c}^{-1/2} \sigma_\Lambda^{1/2} \, . }
\end{equation*} 
Furthermore, it is the dual map of  the minimal conditional expectation that appears in \cite{[KB14]}. As opposed to what its name suggests, it is not a usual conditional expectation, but a quasi-conditional expectation \cite{OhyaPetz-Entropy-1993}, since it lacks some of the basic properties in the definition of conditional expectation. 

We can now define the \textit{heat-bath generator on $\Lambda$} by
\begin{equation*}
\LL_\Lambda^* (\rho_\Lambda):= \un{x \in \Lambda}{\sum} \left( \E_x^*(\rho_\Lambda) - \rho_\Lambda \right),
\end{equation*}
for every $\rho_\Lambda \in \SSS_\Lambda$. Analogously for every $A \subset \Lambda$, we denote by $\LL_A^*$ the generator where the summation is only over elements $x\in A$. Note that the Lindbladian is defined as the sum of terms containing conditional expectations considered over single sites. Some basic properties concerning the heat-bath generator are collected in the following proposition.

\begin{prop}[\cite{[KB14]}]
Let $\Lambda \subset \subset \Z^d$ be a finite lattice and $\Phi: \Lambda \rightarrow \A_\Lambda$ a $k$-local bounded commuting potential. Then, the following properties hold:
\begin{enumerate}
\item For any $A\subset \Lambda$, $\LL_A^*$ is the generator of a semigroup of CPTP maps of the form $\operatorname{e}^{t \LL_A^*}$.
\item $\LL_\Lambda^*$ is $k$-local, in the sense that each individual composing term acts non-trivially only on balls of radius $k$.  
\item For any $A,B \subset \Lambda$, we have
\begin{equation*}
\LL_A^*+\LL_B^* = \LL_{A\cup B}^*+ \LL_{A\cap B}^*.
\end{equation*}
\end{enumerate}
\end{prop}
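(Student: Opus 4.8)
The statement collects three structural facts about the generator $\LL_A^*=\sum_{x\in A}(\E_x^*-\text{id})$; they are logically independent, and I would treat them separately. Item (3) is purely combinatorial: since $\LL_A^*$ is an unweighted sum of the single-site terms $\E_x^*-\text{id}$ indexed by the sites of $A$, regrouping the index sets via $\sum_{x\in A}+\sum_{x\in B}=\sum_{x\in A\cup B}+\sum_{x\in A\cap B}$ (each site of $A\cap B$ is counted twice on both sides, each remaining site of $A\cup B$ once) gives $\LL_A^*+\LL_B^*=\LL_{A\cup B}^*+\LL_{A\cap B}^*$ at once. Nothing about $\E_x^*$ beyond the linearity of the sum is used.

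For item (1) I would use the general fact that a generator of ``jump'' form $\Psi-\text{id}$, with $\Psi$ any quantum channel, generates a CPTP semigroup. Indeed $\E_x^*$ is CPTP (already observed in the text, being the Petz recovery of the partial trace composed with that partial trace), and for $t\ge 0$ the map $\operatorname{e}^{t(\E_x^*-\text{id})}=\operatorname{e}^{-t}\sum_{n\ge 0}\tfrac{t^n}{n!}(\E_x^*)^{n}$ is a convex combination (coefficients $\operatorname{e}^{-t}t^n/n!$ summing to $1$) of the CPTP maps $(\E_x^*)^n$, hence CPTP. To pass from a single term to the sum I would invoke the Lie--Trotter product formula $\operatorname{e}^{t\LL_A^*}=\lim_{n\to\infty}\big(\prod_{x\in A}\operatorname{e}^{(t/n)(\E_x^*-\text{id})}\big)^{n}$ together with the closedness of the set of CPTP maps under composition and under limits, which shows $\operatorname{e}^{t\LL_A^*}$ is CPTP for every $t\ge 0$ (equivalently, GKLS generators form a convex cone stable under sums). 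The commuting hypothesis is not needed here.

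Item (2) is where the commuting structure enters, and is the substantive point. Write $M_x:=\sla^{1/2}\sigma_{x^c}^{-1/2}$, so that $\E_x^*(\rho_\Lambda)=M_x\,\rho_{x^c}\,M_x^*$. The plan is to show $M_x$ is supported in a ball $b(x)$ about $x$ of radius of order $k$. Split $H_\Lambda=h_x+h_x'$, where $h_x:=\sum_{y:\,x\in\operatorname{supp}\Phi(y)}\Phi(y)$ gathers the finitely many, mutually commuting potential terms whose support contains $x$, and $h_x'$ gathers the remaining terms, which are supported in $x^c$. Commutativity gives $\sla\propto\operatorname{e}^{-\beta h_x}\operatorname{e}^{-\beta h_x'}$; taking the partial trace over $x$, and using that $\operatorname{e}^{-\beta h_x'}$ factors out of $\tr_x$ and commutes with $\tr_x[\operatorname{e}^{-\beta h_x}]$ (again by commutativity of the potential), yields $\sigma_{x^c}\propto\operatorname{e}^{-\beta h_x'}\,\tr_x[\operatorname{e}^{-\beta h_x}]$ with all factors commuting. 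Substituting, the $\operatorname{e}^{-\beta h_x'}$ factors cancel and $M_x=\operatorname{e}^{-\beta h_x/2}\,(\tr_x[\operatorname{e}^{-\beta h_x}])^{-1/2}$, supported in $\operatorname{supp}(h_x)\subseteq b(x)$, a ball of radius $\le 2k$. Finally, evaluating $\E_x^*$ on product operators $f_{b(x)}\otimes g_{b(x)^c}$ across the cut $b(x)\,|\,b(x)^c$ and using that $M_x$ commutes with the $b(x)^c$ factor, one checks $\E_x^*=\mathcal N_{b(x)}\otimes\text{id}_{b(x)^c}$; hence each term $\E_x^*-\text{id}$ acts non-trivially only on $b(x)$, which is the asserted $k$-locality.

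The only real obstacle is the support computation in item (2): establishing $M_x=\operatorname{e}^{-\beta h_x/2}(\tr_x[\operatorname{e}^{-\beta h_x}])^{-1/2}$ uses the commuting hypothesis twice --- to split $\operatorname{e}^{-\beta H_\Lambda}$ and to commute $\tr_x[\operatorname{e}^{-\beta h_x}]$ past $\operatorname{e}^{-\beta h_x'}$ --- and care is needed with the interplay between operator ordering and the partial trace (and with the modified-partial-trace convention). Everything else follows routinely once $M_x$ has been localized.
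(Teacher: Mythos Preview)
The paper does not prove this proposition: it is quoted from \cite{[KB14]} and no argument is given in the text. Your sketch is correct and is essentially the standard proof one finds in that reference. Item (3) is pure bookkeeping as you say; item (1) follows from the Poisson expansion $\operatorname{e}^{t(\E_x^*-\text{id})}=\operatorname{e}^{-t}\sum_{n\ge 0}\frac{t^n}{n!}(\E_x^*)^n$ together with Lie--Trotter, and indeed needs nothing about commutativity; item (2) is the only substantive part, and your localization of $M_x=\sigma_\Lambda^{1/2}\sigma_{x^c}^{-1/2}$ via the splitting $H_\Lambda=h_x+h_x'$ and the cancellation of the $\operatorname{e}^{-\beta h_x'/2}$ factors is exactly how this is done.

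One cosmetic point worth recording: with the paper's convention that $\Phi(y)$ is supported in a ball of radius $k$ centred at $y$, your $h_x=\sum_{y:\,x\in\operatorname{supp}\Phi(y)}\Phi(y)$ collects terms with $d(x,y)\le k$, so $\operatorname{supp}(h_x)$ lies in a ball of radius $2k$ about $x$, not $k$. You already flagged this (``$\le 2k$''); it is a harmless mismatch between the informal phrase ``balls of radius $k$'' in the proposition and the actual support, and nothing in the rest of the paper depends on the precise radius.
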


To conclude this subsection, let us introduce two concepts that will be of use in the proof of the main result. They are conditional versions of notions defined on the whole system, in the same spirit as the conditional relative entropy.

\begin{defi}\label{def:conditionalEntropyProduction}
Let $\Lambda \subset \subset \Z^d$ be a finite lattice and let $\LL_\Lambda^* : \SSS_\Lambda \rightarrow \SSS_\Lambda$ be the heat-bath generator with fixed point $\sigma_\Lambda\in \SSS_\Lambda$. Given $A \subset \Lambda$, we define the \textit{entropy production in $A$} for every $\rho_\Lambda \in \SSS_\Lambda$ by
\begin{equation*}
\operatorname{EP}_A(\rho_\Lambda):= - \tr[\LL_A^*(\rho_\Lambda)(\log \rho_\Lambda - \log \sigma_\Lambda)].
\end{equation*}
\end{defi}

Considering the notions of entropy production in a subsystem and conditional relative entropy, one can address again the problem of relating both of them via an inequality, thus obtaining a conditional version of the aforementioned MLSI constant.

\begin{defi}\label{def:conditionalMLSIconstant}
Let $\Lambda \subset \subset \Z^d$ be a finite lattice and let $\LL_\Lambda^* : \SSS_\Lambda \rightarrow \SSS_\Lambda$ be the heat-bath generator with fixed point $\sigma_\Lambda\in \SSS_\Lambda$. Given $A \subset \Lambda$, we define the \textit{conditional MLSI constant} by
\begin{equation*}
\alpha_\Lambda(\LL_A^* ):= \un{\rho_\Lambda \in \SSS_\Lambda}{\text{inf}} \frac{- \tr[\LL_A^*(\rho_\Lambda)(\log \rho_\Lambda - \log \sigma_\Lambda)]}{2 D_A(\rho_\Lambda || \sigma_\Lambda)},
\end{equation*}
where $D_A(\rho_\Lambda || \sigma_\Lambda)$ is the conditional relative entropy introduced in Subsection \ref{subsec:entropy}.
\end{defi}

In the classical setting, there is no need to define a conditional MLSI constant, since it coincides with the MLSI constant due to the DLR condition \cite{[D02]}. Not only this last property fails in general in the quantum case~\cite{fannes1995boundary}, but also the study  of the conditional MLSI constant is essential in our case,  as it is  part of our strategy to prove the positivity of the MLSI constant. 

\subsection{Quantum Markov chains}\label{subsec:QMC}

Consider a tripartite space $\hs_{ABC}= \hs_A \otimes \hs_B \otimes \hs_C$. We define a \textit{recovery map} $\mathcal{R}_{B \rightarrow BC}$ from $B$ to $BC$ as a completely positive trace-preserving map that reconstructs the $C$-part of a state $\sigma_{ABC}\in \SSS_{ABC}$  from its $B$-part only. If that reconstruction is possible, i.e., if for a certain $\sigma_{ABC}\in \SSS_{ABC}$ there exists such  $\mathcal{R}_{B \rightarrow BC}$ verifying 
\begin{equation*}
\sigma_{ABC}= \mathcal{R}_{B \rightarrow BC} (\sigma_{AB}),
\end{equation*}
we say that $\sigma_{ABC}$ is a \textit{quantum Markov chain} (QMC) between $A \leftrightarrow B \leftrightarrow C$. When this is the case, the recovery map can be taken to be the Petz recovery map. \hl{More specifically, $\sigma_{ABC}$ is a QMC($A \leftrightarrow B \leftrightarrow C$) if, and only if, it is a fixed point of the composition of the Petz recovery map for the partial trace in $C$, with respect to the state $\sigma_{BC}$,  with the partial trace in $C$, i.e.}
\begin{equation*}
\sigma_{ABC} = \highlight{\mathcal{P}_{\tr_C}^{\sigma_{BC}}\circ \tr_C[\sigma_{ABC}]  = } \sigma_{BC}^{1/2}\highlight{ \sigma_{B}^{-1/2}} \sigma_{AB}\highlight{ \sigma_{B}^{-1/2}} \sigma_{BC}^{1/2}. 
\end{equation*}

This class of states has been deeply studied in the last years. In the next proposition, we collect an equivalent condition for a state to be a QMC.

\begin{thm}[\cite{[Petz86]}, \cite{Petz-MonotonicityRelativeEntropy-2003}]
Let  $\hs_{ABC}= \hs_A \otimes \hs_B \otimes \hs_C$ be a tripartite Hilbert space and $\sigma_{ABC} \in \SSS_{ABC}$. Then, $\sigma_{ABC}$ is a quantum Markov chain, if, and only if, 
$I_\sigma(A:C | B)=0$,  for $I_\sigma(A:C | B)=S(\sigma_{AB})+S(\sigma_{BC})-S(\sigma_{ABC})-S(\sigma_B)$ the quantum conditional mutual information.
\end{thm}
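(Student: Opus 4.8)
\emph{Setup and the easy inequality.} The plan is to prove the two implications separately, deducing $I_\sigma(A:C|B)\ge 0$ from strong subadditivity and treating its saturation through the equality case of the monotonicity of the relative entropy, which is the deep input attributed to Petz. First I record the chain rule
\begin{equation*}
I_\sigma(A:BC) = I_\sigma(A:B) + I_\sigma(A:C|B), \qquad I_\sigma(A:B) := D(\sigma_{AB}||\sigma_A\otimes\sigma_B),
\end{equation*}
obtained by writing each term through von Neumann entropies and cancelling. Since the partial trace $\operatorname{Tr}_A$ is a quantum channel sending $\sigma_{ABC}\mapsto\sigma_{BC}$ and $\sigma_{AB}\otimes\sigma_C\mapsto\sigma_B\otimes\sigma_C$, and since $I_\sigma(A:C|B)=D(\sigma_{ABC}||\sigma_{AB}\otimes\sigma_C)-D(\sigma_{BC}||\sigma_B\otimes\sigma_C)$, the data processing inequality (equivalently strong subadditivity \cite{LiebRuskai-Subadditivity-1973}) already yields $I_\sigma(A:C|B)\ge 0$ for every $\sigma_{ABC}$. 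It therefore remains to characterize when equality holds.

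\emph{Quantum Markov chain $\Rightarrow I_\sigma(A:C|B)=0$.} Assume $\sigma_{ABC}=\mathcal R_{B\to BC}(\sigma_{AB})$ for a recovery channel that I view as $\text{id}_A\otimes\mathcal R_{B\to BC}\colon\SSS_{AB}\to\SSS_{ABC}$, acting trivially on $A$. As this channel commutes with $\operatorname{Tr}_A$, applying $\operatorname{Tr}_A$ to the defining identity yields $\mathcal R_{B\to BC}(\sigma_B)=\sigma_{BC}$, whence the channel also sends $\sigma_A\otimes\sigma_B\mapsto\sigma_A\otimes\sigma_{BC}$. Feeding the pair $\sigma_{AB}$ and $\sigma_A\otimes\sigma_B$ into the data processing inequality for $\text{id}_A\otimes\mathcal R_{B\to BC}$ gives
\begin{equation*}
I_\sigma(A:B)=D(\sigma_{AB}||\sigma_A\otimes\sigma_B)\ge D(\sigma_{ABC}||\sigma_A\otimes\sigma_{BC})=I_\sigma(A:BC),
\end{equation*}
while monotonicity under $\operatorname{Tr}_C$ gives the reverse bound $I_\sigma(A:BC)\ge I_\sigma(A:B)$. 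Hence $I_\sigma(A:BC)=I_\sigma(A:B)$, and the chain rule forces $I_\sigma(A:C|B)=0$.

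\emph{$I_\sigma(A:C|B)=0\Rightarrow$ quantum Markov chain.} This is the substantial direction, which I would obtain from Petz's theorem on the equality case of the monotonicity of relative entropy \cite{[Petz86]}, \cite{Petz-MonotonicityRelativeEntropy-2003}: for a channel $\Phi$, a state $\rho$ and a positive operator $\tau$, one has $D(\rho||\tau)=D(\Phi(\rho)||\Phi(\tau))$ if and only if the Petz map $\mathcal P_{\tau,\Phi}(X):=\tau^{1/2}\Phi^\dagger\big(\Phi(\tau)^{-1/2}X\Phi(\tau)^{-1/2}\big)\tau^{1/2}$ satisfies $\mathcal P_{\tau,\Phi}(\Phi(\rho))=\rho$. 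I apply this with $\Phi=\operatorname{Tr}_C$, $\rho=\sigma_{ABC}$ and the (unnormalized) reference $\tau=\identity_A\otimes\sigma_{BC}$, for which $\Phi(\tau)=\identity_A\otimes\sigma_B$ and a short computation using $\log(\identity_A\otimes\sigma_{BC})=\identity_A\otimes\log\sigma_{BC}$ gives $D(\rho||\tau)-D(\Phi(\rho)||\Phi(\tau))=I_\sigma(A:C|B)$. The hypothesis $I_\sigma(A:C|B)=0$ is thus exactly the equality case, so $\mathcal P_{\tau,\operatorname{Tr}_C}(\sigma_{AB})=\sigma_{ABC}$. Because $\operatorname{Tr}_C^\dagger(\cdot)=\cdot\otimes\identity_C$ and neither $\tau^{1/2}=\identity_A\otimes\sigma_{BC}^{1/2}$ nor $\Phi(\tau)^{-1/2}=\identity_A\otimes\sigma_B^{-1/2}$ touches the $A$-register, this recovery map is of the required form $\mathcal R_{B\to BC}$ and equals $X_{AB}\mapsto\sigma_{BC}^{1/2}\sigma_B^{-1/2}X_{AB}\sigma_B^{-1/2}\sigma_{BC}^{1/2}$, i.e. the Petz recovery map for the partial trace over $C$ displayed before the statement. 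Hence $\sigma_{ABC}$ is a quantum Markov chain, completing the equivalence.

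I expect this last implication to be the main obstacle: it rests entirely on the equality conditions for the monotonicity of relative entropy, whose proof requires identifying when the operator-convexity inequality underlying strong subadditivity is saturated — equivalently, establishing the operator identity $\log\sigma_{ABC}+\log\sigma_B=\log\sigma_{AB}+\log\sigma_{BC}$ and extracting from it the block decomposition of $\hs_B$. For this step I would rely on the cited results of Petz rather than reproduce that analysis, since everything else above is elementary manipulation of the data processing inequality.
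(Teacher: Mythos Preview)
The paper does not supply its own proof of this theorem: it is stated as a cited result from \cite{[Petz86]}, \cite{Petz-MonotonicityRelativeEntropy-2003} and used as a black box. Your argument is the standard one and is correct; indeed, it \emph{is} essentially the proof one finds in those references, so there is nothing to compare against.

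Two minor technical remarks. First, in the hard direction you apply Petz's equality criterion with the unnormalized reference $\tau=\identity_A\otimes\sigma_{BC}$; strictly speaking the criterion is usually stated for states, but since the Petz map is homogeneous of degree zero in $\tau$ and $D(\rho\|\tau/d_A)=D(\rho\|\tau)+\log d_A$, normalizing $\tau$ changes nothing and your computation goes through verbatim. Second, the formula for the recovery map requires $\sigma_B$ (hence $\sigma_{BC}$) to be invertible; this is automatic when $\sigma_{ABC}$ is full rank, and in the degenerate case one restricts to the support as usual. Neither point is a gap in your reasoning.
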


Another important equivalent condition for a state to be a quantum Markov chain, concerning its structure as a direct sum of tensor products, appears in the next result.

\begin{thm}[Theorem 6 of \cite{[HJPW04]}]\label{thm:StructQMC} A tripartite state $\sigma_{ABC}$ of $\hs_{A}\otimes \hs_B\otimes \hs_C$ satisfies $I_\sigma(A:C|B)=0$ if and only if there exists a decomposition of system $B$ as $\hs_B=\bigoplus_j \hs_{b_j^L}\otimes \hs_{b_j^R}$ into a direct sum of tensor products such that 
	$$\sigma_{ABC}=\bigoplus_j q_j\,\sigma_{Ab_j^L}\otimes \sigma_{b_j^RC},$$
	with the state $\sigma_{Ab_j^L}$ (resp. the state $\sigma_{b_j^RC}$) being on $\hs_{A}\otimes \hs_{b_j^L}$ (resp. on $\hs_{b_j^R}\otimes \hs_{C}$) and a probability distribution $\{q_j\}$.
\end{thm}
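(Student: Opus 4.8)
The plan is to prove the two implications separately, treating the ``if'' direction as a short computation and devoting the bulk of the effort to the structural ``only if'' direction. For the easy direction, suppose $\sigma_{ABC}=\bigoplus_j q_j\,\sigma_{Ab_j^L}\otimes\sigma_{b_j^RC}$ relative to $\hs_B=\bigoplus_j\hs_{b_j^L}\otimes\hs_{b_j^R}$. I would simply compute the four marginals, noting that tracing out $C$ leaves $\sigma_{AB}=\bigoplus_j q_j\,\sigma_{Ab_j^L}\otimes\sigma_{b_j^R}$ and symmetrically for $\sigma_{BC}$ and $\sigma_B$, and then use two elementary facts about the von Neumann entropy: additivity over tensor products, $S(\rho\otimes\tau)=S(\rho)+S(\tau)$, and the direct-sum rule $S(\bigoplus_j q_j\rho_j)=H(\{q_j\})+\sum_j q_j S(\rho_j)$ with $H$ the Shannon entropy. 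Substituting these into $I_\sigma(A:C|B)=S(\sigma_{AB})+S(\sigma_{BC})-S(\sigma_{ABC})-S(\sigma_B)$, the Shannon terms and all tensor-factor entropies cancel in pairs, leaving $I_\sigma(A:C|B)=0$.

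For the hard direction, assume $I_\sigma(A:C|B)=0$. By the characterization recalled above, $\sigma_{ABC}$ is a quantum Markov chain, so equality holds in strong subadditivity and $\sigma_{ABC}$ is reconstructed by the Petz recovery map from $\sigma_{AB}$. I would use the stronger, modular form of this saturation condition: equality in strong subadditivity is equivalent to the cocycle identity $\sigma_{ABC}^{it}=\sigma_{BC}^{it}\,\sigma_B^{-it}\,\sigma_{AB}^{it}$ for all $t\in\R$, where each factor is extended to $\hs_{ABC}$ by tensoring with the identity on the absent system. Setting $U_t:=\sigma_{AB}^{it}\sigma_B^{-it}$ (supported on $AB$) and $V_t:=\sigma_{BC}^{it}\sigma_B^{-it}$ (supported on $BC$), this identity together with its $A\leftrightarrow C$ mirror yields $\sigma_{ABC}^{it}=V_t\,\sigma_{AB}^{it}=U_t\,\sigma_{BC}^{it}$, so that the whole one-parameter group is represented in two ways, one adapted to $AB$ and one to $BC$.

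The heart of the argument is to convert the coexistence of these two representations into an honest tensor-product splitting of $\hs_B$. I would consider the finite-dimensional $*$-algebras $\mathcal{A}_L,\mathcal{A}_C\subseteq\BB(\hs_B)$ generated, on the $B$ factor, by the families $\{U_t\}$ and $\{V_t\}$ respectively, and show from the relation $U_t\,\sigma_{BC}^{it}=V_t\,\sigma_{AB}^{it}$ that $\mathcal{A}_L$ and $\mathcal{A}_C$ commute inside $\BB(\hs_B)$. Invoking the classification of finite-dimensional von Neumann algebras together with the double commutant theorem, a pair of mutually commuting $*$-subalgebras of $\BB(\hs_B)$ induces a decomposition $\hs_B=\bigoplus_j\hs_{b_j^L}\otimes\hs_{b_j^R}$ in which $\mathcal{A}_L$ acts blockwise on the left tensor factors and $\mathcal{A}_C$ on the right ones. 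Feeding this grading back into the cocycle identities and differentiating at $t=0$ (or integrating over $t$ to build the associated conditional expectation) shows that $\sigma_{AB}$ is block-diagonal along the $\bigoplus_j\hs_A\otimes\hs_{b_j^L}$ splitting, and symmetrically for $\sigma_{BC}$; reassembling $\sigma_{ABC}$ through the Petz map then produces exactly $\bigoplus_j q_j\,\sigma_{Ab_j^L}\otimes\sigma_{b_j^RC}$, with $q_j$ the weight of the $j$-th block and the tensor factors the corresponding normalized reduced states.

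I expect the decisive difficulty to lie entirely in this last step: proving that $\mathcal{A}_L$ and $\mathcal{A}_C$ genuinely commute, and that their joint structure is a clean direct sum of tensor products rather than merely a block decomposition carrying residual correlations. Additional care is needed to verify that the blockwise reduced states factorize as claimed and that the recovered operator is correctly normalized with probabilities $\{q_j\}$. In short, the passage from the one-parameter family $\{U_t,V_t\}$ to a static algebraic decomposition of $\hs_B$, mediated by the double commutant theorem, is where the real content of the theorem resides.
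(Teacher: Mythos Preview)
The paper does not prove this statement at all: it is quoted verbatim as Theorem~6 of \cite{[HJPW04]} and used as a black box, so there is no ``paper's own proof'' to compare against. That said, your ``if'' direction is correct and complete as written.

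For the ``only if'' direction, your outline follows the spirit of the original Hayden--Jozsa--Petz--Winter argument (start from the equality case $\sigma_{ABC}^{it}=\sigma_{AB}^{it}\sigma_B^{-it}\sigma_{BC}^{it}$ and extract the block decomposition of $\hs_B$ via the structure theory of finite-dimensional $*$-algebras), but there is a genuine gap exactly where you flag it. The unitaries $U_t=\sigma_{AB}^{it}\sigma_B^{-it}$ and $V_t=\sigma_{BC}^{it}\sigma_B^{-it}$ act on $\hs_A\otimes\hs_B$ and $\hs_B\otimes\hs_C$ respectively, not on $\hs_B$ alone, so the phrase ``the $*$-algebras generated on the $B$ factor'' has no meaning as stated, and the claimed commutativity of $\mathcal A_L$ and $\mathcal A_C$ inside $\BB(\hs_B)$ is neither formulated precisely nor proved. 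Equally, the assertion that two mutually commuting $*$-subalgebras of $\BB(\hs_B)$ automatically induce a decomposition $\hs_B=\bigoplus_j\hs_{b_j^L}\otimes\hs_{b_j^R}$ with each algebra acting on its own tensor leg is false without an additional hypothesis (one needs, roughly, that the algebras are each other's commutants, or that together they generate $\BB(\hs_B)$ blockwise). In the actual HJPW proof the decomposition of $\hs_B$ is obtained by applying the Wedderburn structure theorem to a \emph{single} carefully chosen $*$-subalgebra of $\BB(\hs_B)$ (built from the modular data, in a Koashi--Imoto style), not from a pair of commuting algebras; your two-algebra picture would first have to be reduced to that. So the proposal identifies the right ingredients but does not close the main gap.
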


Turning now to Gibbs states, as they were introduced in the previous subsection, we recall an important result about their Markovian structure. 
\begin{thm}[Theorem 3 of \cite{[BP12]}]\label{theo1}
	Given a $k$-local commuting potential on $\Lambda$, its associated Gibbs state $\sigma_\Lambda$ is a quantum Markov network, that is for all disjoint subsets $A,B,C\subset\Lambda$ such that $B$ shields $A$ from $C$ with $d(A,C)> k$, $I_\sigma(A:C|B)=0$.
\end{thm}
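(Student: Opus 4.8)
The plan is to establish the equivalent statement $I_\sigma(A:C|B)=0$ directly, by writing down the Petz recovery map of $\sigma_{ABC}$ from its $B$-marginal and checking it reproduces $\sigma_{ABC}$; the whole argument rides on the fact that for a commuting potential the Gibbs operator factorizes multiplicatively across the shielding region $B$. First I would split the Hamiltonian. By $k$-locality and the distance/shielding hypotheses, no single term $\Phi(x)$ can have support meeting both $A$ and $C$, and the environment $E:=\Lambda\setminus(A\cup B\cup C)$ splits as $E=E_A\sqcup E_C$ according to which side of $B$ a site lies on; collecting the terms supported in $A\cup E_A\cup B$ into $H_1$ and the rest (supported in $B\cup C\cup E_C$) into $H_2$ gives $H_\Lambda=H_1+H_2$ with $[H_1,H_2]=0$. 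Hence
\begin{equation*}
\sigma_\Lambda=\frac1Z\operatorname{e}^{-\beta H_\Lambda}=\frac1Z\operatorname{e}^{-\beta H_1}\operatorname{e}^{-\beta H_2},
\end{equation*}
and tracing out $E_A$ against the first factor and $E_C$ against the second (each factor being trivial on the other's part of $E$) yields $\sigma_{ABC}=\tfrac1Z\,P_{AB}\,Q_{BC}$ with $P_{AB},Q_{BC}$ positive definite, $P_{AB}$ trivial on $C$, $Q_{BC}$ trivial on $A$, and $[P_{AB},Q_{BC}]=0$ (the commutator surviving each partial trace because the other factor is trivial on the traced-out system).

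Next I would record the marginals. Since $P_{AB}$ is trivial on $C$ and $Q_{BC}$ is trivial on $A$, the relevant partial traces factor, giving
\begin{equation*}
\sigma_{AB}=\tfrac1Z\,P_{AB}\,Q_B,\qquad \sigma_{BC}=\tfrac1Z\,P_B\,Q_{BC},\qquad \sigma_B=\tfrac1Z\,P_B\,Q_B,
\end{equation*}
with $P_B:=\tr_A[P_{AB}]$ and $Q_B:=\tr_C[Q_{BC}]$; and partial-tracing the relation $[P_{AB},Q_{BC}]=0$ over $C$, over $A$, and over $A\cup C$ produces respectively $[P_{AB},Q_B]=0$, $[P_B,Q_{BC}]=0$, $[P_B,Q_B]=0$. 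All four operators are positive definite, so every square root and inverse below is well defined and inherits these commutations.

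Then I would verify the Petz recovery identity $\sigma_{ABC}=\sigma_{BC}^{1/2}\sigma_B^{-1/2}\sigma_{AB}\sigma_B^{-1/2}\sigma_{BC}^{1/2}$. Using the commutations, $\sigma_{BC}^{1/2}=\tfrac1{\sqrt Z}P_B^{1/2}Q_{BC}^{1/2}$ and $\sigma_B^{-1/2}=\sqrt Z\,P_B^{-1/2}Q_B^{-1/2}$; substituting into the right-hand side and cancelling successively --- $P_B^{1/2}Q_{BC}^{1/2}P_B^{-1/2}=Q_{BC}^{1/2}$, then $P_B^{-1/2}Q_B^{-1/2}P_B^{1/2}=Q_B^{-1/2}$, then $Q_B^{-1/2}P_{AB}Q_B^{1/2}=P_{AB}$, and finally $[P_{AB},Q_{BC}^{1/2}]=0$ --- collapses the expression to $\tfrac1Z\,P_{AB}Q_{BC}=\sigma_{ABC}$. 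By the Petz theorem recalled above, $\sigma_{ABC}$ being a fixed point of its Petz recovery map from $B$ is equivalent to $I_\sigma(A:C|B)=0$, which is the assertion. (An alternative is to simultaneously diagonalize the commuting pair $P_B,Q_B$ and read off the direct-sum decomposition of Theorem~\ref{thm:StructQMC} explicitly, but the Petz computation is shorter and needs no further bookkeeping.)

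Everything after the first step --- exponentiating, the partial-trace identities, the Petz cancellations --- is routine once commutativity is available. The one step I expect to require genuine care is the Hamiltonian splitting: verifying that ``$B$ shields $A$ from $C$'' together with ``$d(A,C)>k$'' is precisely enough to (i) prevent any potential term from straddling the two sides of $B$ and (ii) assign every site of $E$ consistently to the $A$-side or the $C$-side, so that $\sigma_\Lambda$ factors as a product of a positive operator trivial on $C$ and one trivial on $A$. Once that cut is in place, the rest of the proof is bookkeeping around it.
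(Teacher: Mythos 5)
The paper does not actually prove this statement: it is imported verbatim as Theorem~3 of \cite{[BP12]}, so there is no internal proof to compare against. Your argument is essentially the standard one for this fact, and it checks out. The factorization $\sigma_{ABC}=\tfrac1Z P_{AB}Q_{BC}$ with $[P_{AB},Q_{BC}]=0$, the survival of the commutators under the partial traces (which indeed follows from writing $[\tr_{E_A}R,\tr_{E_C}S]=\tr_{E_AE_C}\!\left([R,S]\right)$ when each factor is trivial on the other's traced-out system), and the Petz cancellation are all correct; it is worth noting that your computation never needs $[P_{AB},P_B]=0$ (which would be false in general) --- the $P_B^{\pm1/2}$ factors cancel pairwise because each pair sandwiches only operators commuting with $P_B$, exactly as you arranged it. The resulting identity $\sigma_{ABC}=\sigma_{BC}^{1/2}\sigma_B^{-1/2}\sigma_{AB}\sigma_B^{-1/2}\sigma_{BC}^{1/2}$ combined with the Petz/HJPW characterization quoted just above the statement gives $I_\sigma(A:C|B)=0$.

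The one place you correctly flag as delicate is also the one place where the statement, as transcribed in this paper, does not quite support your step (i). With the paper's convention that $\Phi(x)$ is supported in a ball of \emph{radius} $k$ about $x$, a single term can have support of diameter up to $2k$, so $d(A,C)>k$ does not by itself prevent a term from meeting both $A$ and $C$ (take $k=2$, $A=\{0\}$, $C=\{3\}$ in $\Z$); more importantly, nothing in ``$d(A,C)>k$'' prevents a term from straddling $E_A$ and $E_C$ across a thin shield $B$, in which case $H_\Lambda$ does not split as $H_1+H_2$ and the conclusion can genuinely fail. The hypothesis you actually need is that $B$ shields $A\cup E_A$ from $C\cup E_C$ \emph{in the interaction (hyper)graph}, i.e.\ no term $\Phi(x)$ acts nontrivially on both sides of the shield --- equivalently, the shield is at least as wide as the interaction diameter. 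That is how the shielding condition is set up in \cite{[BP12]}, and it is also what the paper's own application uses (the boundaries $\partial A_i$ there have width $k$ precisely so that no term crosses them). So your proof is correct once ``shields'' is read in that stronger, interaction-graph sense; you should state that reading explicitly rather than deriving it from $d(A,C)>k$.
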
	

\hl{ The notion of `shield' used in the statement of the theorem denotes that the systems $ A$ and  $B $ are not adjacent, i.e. no site of  $A$   is at distance  $1$   from  $C$  and viceversa. Moreover, the condition d$(A,C)>k$   implies that there are at least  $k$ sites of $ B $ between any site of $A$  and another one of $C$}. Therefore, combining the results of Theorem \ref{theo1} and Theorem \ref{thm:StructQMC}, we obtaining the following essential result for the structure of Gibbs states. 

\begin{cor}
Let $\Lambda \subset \subset \Z^d$ be a finite lattice and $\sigma_\Lambda$ the Gibbs state of a  commuting Hamiltonian. Then, for any tripartition $\tilde{A}\tilde{B}\tilde{C}$ of $\Lambda$ such that $\tilde{B}$ shields $\tilde{A}$ from $\tilde{C}$, the state $\sigma_{\Lambda}$ can be decomposed as
 \begin{align}\label{eq2}
 \sigma_\Lambda=\bigoplus_j q_j\,\sigma_{\tilde{A}\tilde{b}_j^L}\otimes \sigma_{\tilde{b}_j^R\tilde{C}}.
 \end{align}
\end{cor}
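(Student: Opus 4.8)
The plan is simply to chain the two structural results recalled immediately above. First I would unpack the hypothesis: saying that $\tilde B$ shields $\tilde A$ from $\tilde C$ means that $\tilde B$ separates $\tilde A$ from $\tilde C$ in the lattice graph and, crucially, that this separation exceeds the interaction range of the potential, i.e. $d(\tilde A,\tilde C)>k$. Under this condition Theorem \ref{theo1} applies to the disjoint sets $\tilde A,\tilde B,\tilde C$ and yields the vanishing of the quantum conditional mutual information, $I_\sigma(\tilde A:\tilde C\mid \tilde B)=0$.

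Next I would feed this into Theorem \ref{thm:StructQMC}, identifying $\hs_A=\hs_{\tilde A}$, $\hs_B=\hs_{\tilde B}$, $\hs_C=\hs_{\tilde C}$ and, since $\tilde A\tilde B\tilde C$ is by assumption a tripartition of the whole lattice, $\sigma_{ABC}=\sigma_\Lambda$. Because $I_\sigma(\tilde A:\tilde C\mid \tilde B)=0$, the theorem furnishes a decomposition $\hs_{\tilde B}=\bigoplus_j \hs_{\tilde b_j^L}\otimes \hs_{\tilde b_j^R}$, a probability distribution $\{q_j\}$, and states $\sigma_{\tilde A\tilde b_j^L}$ on $\hs_{\tilde A}\otimes \hs_{\tilde b_j^L}$ and $\sigma_{\tilde b_j^R\tilde C}$ on $\hs_{\tilde b_j^R}\otimes \hs_{\tilde C}$ such that $\sigma_\Lambda=\bigoplus_j q_j\,\sigma_{\tilde A\tilde b_j^L}\otimes \sigma_{\tilde b_j^R\tilde C}$, which is exactly \eqref{eq2}. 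No further argument is needed once the two theorems are lined up.

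The only delicate point, and hence the place where I would be most careful, is the first step: making sure that the notion of ``shielding'' used in the statement does carry the distance requirement $d(\tilde A,\tilde C)>k$ that Theorem \ref{theo1} needs. In the 1D setting of the paper, where $\tilde B$ is the middle block (a union of intervals) separating $\tilde A$ on the left from $\tilde C$ on the right, this is the statement that each separating block has width larger than $k$; for thinner separators one cannot expect an exact direct-sum factorization, which is precisely why the mixing and quasi-factorization assumptions are introduced later. I would therefore either build this thickness condition into the meaning of ``shields'' in the corollary's hypothesis, or state it explicitly, after which the proof is the two-line combination above with no remaining obstacle.
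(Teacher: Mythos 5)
Your proposal is correct and matches the paper's own (implicit) argument exactly: the corollary is obtained precisely by chaining Theorem \ref{theo1} to get $I_\sigma(\tilde A:\tilde C\mid\tilde B)=0$ and then Theorem \ref{thm:StructQMC} to get the direct-sum decomposition, which is all the paper does. Your remark that the shielding hypothesis must be read as including the thickness condition $d(\tilde A,\tilde C)>k$ required by Theorem \ref{theo1} is a fair and accurate observation about the statement, not a gap in the argument.
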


Using the previous properties for quantum Markov chains, we can easily show the identity of the next proposition.

\begin{prop}\label{prop:identityQMC}
Let $\hs_{ABC}=\hs_A \otimes \hs_B \otimes \hs_C$ be a tripartite Hilbert space and  $\sigma_{ABC}$ a quantum Markov chain between $A \leftrightarrow B \leftrightarrow C$. Then, the following identity holds: 
\begin{equation}\label{eq:logarithms-QMC}
\log \sigma_{ABC} + \log \sigma_B = \log \sigma_{BC} + \log \sigma_{AB}.
\end{equation}
\end{prop}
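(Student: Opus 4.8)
The plan is to reduce the identity to a block-wise computation using the structural characterization of quantum Markov chains. First I would apply Theorem \ref{thm:StructQMC}: since $\sigma_{ABC}$ is a QMC between $A\leftrightarrow B\leftrightarrow C$, there is a decomposition $\hs_B=\bigoplus_j \hs_{b_j^L}\otimes\hs_{b_j^R}$ and a probability distribution $\{q_j\}$ such that
\[
\sigma_{ABC}=\bigoplus_j q_j\,\sigma_{Ab_j^L}\otimes\sigma_{b_j^RC}.
\]
Then I would compute the three relevant marginals by tracing out the appropriate systems inside each block: writing $\sigma_{b_j^R}:=\tr_C[\sigma_{b_j^RC}]$ and $\sigma_{b_j^L}:=\tr_A[\sigma_{Ab_j^L}]$, one obtains $\sigma_{AB}=\bigoplus_j q_j\,\sigma_{Ab_j^L}\otimes\sigma_{b_j^R}$, $\sigma_{BC}=\bigoplus_j q_j\,\sigma_{b_j^L}\otimes\sigma_{b_j^RC}$, and $\sigma_B=\bigoplus_j q_j\,\sigma_{b_j^L}\otimes\sigma_{b_j^R}$.

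Next, since the logarithm acts separately on each orthogonal block and sends a tensor product of strictly positive operators living on complementary tensor factors to the sum of their logarithms (tensored with the appropriate identities), and since $\log(q_j X)=(\log q_j)\identity+\log X$, I would expand $\log\sigma_{ABC}$, $\log\sigma_{AB}$, $\log\sigma_{BC}$ and $\log\sigma_B$ as direct sums over $j$ of terms of the shape $(\log q_j)\identity+(\text{a term on the left factor})+(\text{a term on the right factor})$. Finally, adding $\log\sigma_{ABC}+\log\sigma_B$ and $\log\sigma_{AB}+\log\sigma_{BC}$ block by block — keeping in mind the usual identification of an operator on a subsystem with its ampliation to $\hs_{ABC}$ — both sums equal $\bigoplus_j\big(2(\log q_j)\identity+\log\sigma_{Ab_j^L}+\log\sigma_{b_j^RC}+\log\sigma_{b_j^L}+\log\sigma_{b_j^R}\big)$, with matching identity factors, so that \eqref{eq:logarithms-QMC} follows.

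The only real care needed is bookkeeping: tracking which identity factors are attached to each term once operators supported on $AB$, $BC$ and $B$ are all viewed inside $\hs_{ABC}$, and, if one does not wish to assume $\sigma_{ABC}$ has full rank, restricting every operator to the support of $\sigma_{ABC}$ — on which all four operators and all block states are invertible — so that the logarithms are well defined; in the Gibbs-state application of interest $\sigma_{ABC}$ is full rank, so this is automatic. I do not expect any genuinely hard step. An alternative one-line argument is that the Petz recovery formula together with the mutual commutation of $\sigma_{AB}$, $\sigma_{BC}$ and $\sigma_B$ (itself read off from the block decomposition above) yields $\sigma_{ABC}=\sigma_{AB}\,\sigma_B^{-1}\,\sigma_{BC}$ as a product of commuting positive operators, whence \eqref{eq:logarithms-QMC} upon taking logarithms.
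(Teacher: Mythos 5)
Your proof is correct and follows essentially the same route as the paper: invoke Theorem \ref{thm:StructQMC} to write $\sigma_{ABC}=\bigoplus_j q_j\,\sigma_{Ab_j^L}\otimes\sigma_{b_j^RC}$, compute the marginals block by block, and cancel the logarithms using the fact that the logarithm of a tensor product splits as a sum. Your extra bookkeeping of the $\log q_j$ terms, the identity ampliations, and the support issue is more careful than the paper's one-display computation, but it is the same argument.
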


\begin{proof}
Since $\sigma_{ABC}$ is a quantum Markov chain between $A \leftrightarrow B \leftrightarrow C$, by Theorem \ref{thm:StructQMC}  we can write it as
 \begin{align}
 \sigma_\Lambda=\bigoplus_j q_j\,\sigma_{A b_j^L}\otimes \sigma_{b_j^R C}.
 \end{align}
 
Hence,
\begin{align*}
  - \log \sigma_{ABC} + & \log \sigma_{BC}  + \log \sigma_{AB} - \log \sigma_{B}   \\
& =  \un{j}{\sum} \left(  - \log \sigma_{A b_j^L}\otimes \sigma_{b_j^R C} + \log \sigma_{b_j^L}\otimes \sigma_{b_j^R C} + \log \sigma_{A b_j^L}\otimes \sigma_{b_j^R }- \log \sigma_{ b_j^L}\otimes \sigma_{b_j^R }   \right) \\
 & = 0,
\end{align*} 
where we have used the fact that the logarithm of a tensor product splits as a sum of logarithms.
\end{proof}

As a consequence of this identity, we have the following result.

\begin{cor}
Let $\hs_{ABC}=\hs_A \otimes \hs_B \otimes \hs_C$ be a tripartite Hilbert space and  $\sigma_{ABC}$ a quantum Markov chain between $A \leftrightarrow B \leftrightarrow C$. Then, for any $\rho_{ABC} \in \SSS_{ABC}$, the following identity holds: 
\begin{equation}\label{eq:CRE=CRE+CMI}
\entA A {\rho_{ABC}} {\sigma_{ABC}} = \entA A {\rho_{AB}} {\sigma_{AB}} + I_\rho (A:C |B),
\end{equation}
where $I_\rho (A:C |B)$ denotes the \textit{conditional mutual information} of $\rho_{ABC}$.

In particular,
\begin{equation*}
\entA A {\rho_{ABC}} {\sigma_{ABC}} \geq \entA A {\rho_{AB}} {\sigma_{AB}}.
\end{equation*}

\end{cor}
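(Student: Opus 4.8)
The plan is to expand both sides of \eqref{eq:CRE=CRE+CMI} into von Neumann entropies and expectation values of $\log\sigma$, and then to collapse the $\sigma$\nobreakdash-dependent terms using the Markov identity of Proposition~\ref{prop:identityQMC}.

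First I would unfold the two conditional relative entropies by their definition, viewing $\hs_{ABC}$ as the bipartite space $\hs_A\otimes\hs_{BC}$: this gives $\entA{A}{\rho_{ABC}}{\sigma_{ABC}} = \ent{\rho_{ABC}}{\sigma_{ABC}} - \ent{\rho_{BC}}{\sigma_{BC}}$ and $\entA{A}{\rho_{AB}}{\sigma_{AB}} = \ent{\rho_{AB}}{\sigma_{AB}} - \ent{\rho_B}{\sigma_B}$. Writing each relative entropy as $\ent{\rho}{\sigma} = -S(\rho) - \tr[\rho\log\sigma]$ and substituting $I_\rho(A:C|B) = S(\rho_{AB}) + S(\rho_{BC}) - S(\rho_{ABC}) - S(\rho_B)$, one checks that the von Neumann entropy contributions on the two sides of \eqref{eq:CRE=CRE+CMI} agree identically (both reduce to $S(\rho_{BC}) - S(\rho_{ABC})$), so the claimed identity becomes equivalent to the operator-expectation identity
\begin{equation*}
\tr[\rho_{ABC}\log\sigma_{ABC}] + \tr[\rho_B\log\sigma_B] = \tr[\rho_{AB}\log\sigma_{AB}] + \tr[\rho_{BC}\log\sigma_{BC}].
\end{equation*}

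Next I would prove this remaining identity directly from Proposition~\ref{prop:identityQMC}, which gives $\log\sigma_{ABC} + \log\sigma_B = \log\sigma_{BC} + \log\sigma_{AB}$ as operators on $\hs_{ABC}$ (each marginal logarithm tacitly ampliated by the identity on the missing tensor factors). Pairing both sides with $\rho_{ABC}$ under the trace and using that $\tr[\rho_{ABC}\,(g_X\otimes\identity_{X^c})] = \tr[\rho_X\,g_X]$ for any operator $g_X$ supported on a subsystem $X$, every term reduces to the corresponding marginal and the displayed identity follows at once. This is the only place the quantum Markov chain hypothesis on $\sigma_{ABC}$ is used.

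Finally, the ``in particular'' inequality follows because $I_\rho(A:C|B) \geq 0$, which is precisely the strong subadditivity of the von Neumann entropy, $S(\rho_{ABC}) + S(\rho_B) \leq S(\rho_{AB}) + S(\rho_{BC})$, recalled earlier in the preliminaries. I do not expect a genuine obstacle here: the computation is routine algebra, and the only point requiring care is the bookkeeping of the ampliations (tensoring with identities) when passing from Proposition~\ref{prop:identityQMC} to the marginal traces.
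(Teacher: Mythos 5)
Your proposal is correct and follows essentially the same route as the paper: both expand the conditional relative entropies, identify the von Neumann entropy terms with the conditional mutual information, and cancel the $\log\sigma$ terms via the quantum Markov chain identity of Proposition~\ref{prop:identityQMC}, with the final inequality coming from strong subadditivity. The only cosmetic difference is that you isolate the residual trace identity explicitly before applying the Markov identity, whereas the paper keeps everything in one chain of equalities.
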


\begin{proof}
Since $\sigma_{ABC}$ is a quantum Markov chain between $A \leftrightarrow B \leftrightarrow C$, by Proposition \ref{prop:identityQMC} we have
\begin{align*}
\entA A {\rho_{ABC}} {\sigma_{ABC}}  - &\entA A {\rho_{AB}} {\sigma_{AB}}  \\
&=  \ent {\rho_{ABC}} {\sigma_{ABC}} - \ent {\rho_{BC}} {\sigma_{BC}} - \ent {\rho_{AB}} {\sigma_{AB}} + \ent {\rho_{B}} {\sigma_{B}}  \\
& = \underbrace{- S (\rho_{ABC}) + S(\rho_{BC}) + S(\rho_{AB})  - S(\rho_B) }_{I_\rho (A:C |B)} \\
& \; \; \; + \tr[ \rho_{ABC} \left(  - \log \sigma_{ABC} + \log \sigma_{BC}  + \log \sigma_{AB} - \log \sigma_{B}    \right)  ] \\
 & = I_\rho (A:C |B) .
\end{align*} 

In particular, since $ I_\rho (A:C |B) \geq 0$ for every state $\rho_{ABC} \in \SSS_{ABC}$, 
\begin{equation*}
\entA A {\rho_{ABC}} {\sigma_{ABC}} \geq \entA A {\rho_{AB}} {\sigma_{AB}}.
\end{equation*}
\end{proof}

%\begin{cor}\label{cor:QMC->CRE=vNE+sigma}
%Consider a tripartite HIlbert space $\hs_A \otimes \hs_B \otimes \hs_C$ and  $\sigma_{ABC}$ a quantum Markov chain between $A \leftrightarrow B \leftrightarrow C$. Then, for any $\rho_{ABC} \in \SSS_{ABC}$, the following identity holds: 
%\begin{equation}\label{eq:CRE=vNE+sigma}
%\entA A {\rho_{ABC}} {\sigma_{ABC}} = -S(\rho_{ABC}) + S(\rho_{BC}) + \tr[ \rho_{AB}  \left( - \log \sigma_{AB}  + \log \sigma_B \right) ],
%\end{equation}
%\end{cor}

\section{Technical tools}\label{sec:tools}

This section aims at presenting a collection of technical results which will be necessary in the proof of the main result of the paper in Section \ref{sec:MLSI-HB-1D}. Some of them, as we will see below, are of independent interest to quantum information theory. Note that all the results that appear in this section hold independently of Assumptions \ref{assump:1} and \ref{assump:2} and do not depend on the geometry of $\Lambda$.

The main technical result of this section is  Theorem \ref{thm:kernelA=kernelx}. In its proof, we will make use of the following lemma, which provides a lower bound for a conditional entropy production in a single site (see Definition \ref{def:conditionalEntropyProduction}) in terms of a conditional relative entropy in the same single site. 

\begin{lemma}\label{lemma:EP>D}
For a single site $x \in \Lambda$, and for every $\rho_\Lambda, \sigma_\Lambda \in \SSS_\Lambda$, the following holds
\begin{equation}\label{eq:EP>D}
\operatorname{EP}_x(\rho_\Lambda) \geq \entA x {\rho_\Lambda} {\sigma_\Lambda},
\end{equation}
where $\operatorname{EP}_x(\rho_\Lambda) $ is defined with respect to $\sigma_\Lambda$. Therefore, $\operatorname{EP}_A(\rho_\Lambda)\ge 0$ for any $A\subset\Lambda$ and $\rho\in \mathcal{S}_\Lambda$.
\end{lemma}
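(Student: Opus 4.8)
The plan is to unpack the definition of the entropy production in a single site $x$, $\operatorname{EP}_x(\rho_\Lambda) = -\tr[\LL_x^*(\rho_\Lambda)(\log\rho_\Lambda - \log\sigma_\Lambda)]$, and to use the explicit form of the heat-bath generator $\LL_x^*(\rho_\Lambda) = \E_x^*(\rho_\Lambda) - \rho_\Lambda$. Splitting the trace, the $-\rho_\Lambda$ part contributes exactly $+D(\rho_\Lambda||\sigma_\Lambda)$, so we obtain
\begin{equation*}
\operatorname{EP}_x(\rho_\Lambda) = D(\rho_\Lambda||\sigma_\Lambda) - \tr[\E_x^*(\rho_\Lambda)(\log\rho_\Lambda - \log\sigma_\Lambda)].
\end{equation*}
The aim is then to show the remaining term is bounded above by $D(\rho_{x^c}||\sigma_{x^c})$, since by definition $\entA x \rla \sla = D(\rho_\Lambda||\sigma_\Lambda) - D(\rho_{x^c}||\sigma_{x^c})$, and that would give exactly \eqref{eq:EP>D}.

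First I would try to identify the cross term $\tr[\E_x^*(\rho_\Lambda)(\log\rho_\Lambda - \log\sigma_\Lambda)]$ with a relative-entropy-type quantity. A cleaner route: recall that $\E_x^*$ is a quantum channel which is the composition of the partial trace in $x$ with the Petz recovery map of $\tr_x$ with respect to $\sigma_\Lambda$; in particular $\E_x^*(\sigma_\Lambda) = \sigma_\Lambda$ and $\E_x^*$ has $\sigma_\Lambda$ as a fixed point, and $\tr_x \E_x^*(\rho_\Lambda) = \rho_{x^c}$ since $\E_x^*$ only modifies site $x$ conditionally on $x^c$. Then I would write
\begin{equation*}
\operatorname{EP}_x(\rho_\Lambda) = D(\rho_\Lambda||\sigma_\Lambda) - D(\E_x^*(\rho_\Lambda)||\sigma_\Lambda) + D(\E_x^*(\rho_\Lambda)||\sigma_\Lambda) - \tr[\E_x^*(\rho_\Lambda)(\log\rho_\Lambda - \log\sigma_\Lambda)].
\end{equation*}
The first difference $D(\rho_\Lambda||\sigma_\Lambda) - D(\E_x^*(\rho_\Lambda)||\sigma_\Lambda) \geq 0$ by the data processing inequality applied to the channel $\E_x^*$. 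It remains to control $D(\E_x^*(\rho_\Lambda)||\sigma_\Lambda) - \tr[\E_x^*(\rho_\Lambda)(\log\rho_\Lambda - \log\sigma_\Lambda)] = \tr[\E_x^*(\rho_\Lambda)(\log\E_x^*(\rho_\Lambda) - \log\rho_\Lambda)]$, which is a relative entropy $D(\E_x^*(\rho_\Lambda)||\rho_\Lambda)$ — this is $\geq 0$, the wrong direction. So the naive split does not close, and I would instead look for the sharper identity that the whole expression collapses to the conditional relative entropy directly.

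The better approach, and the one I would actually pursue: use that $\E_x^*$ is a \emph{conditional expectation}-like map whose image consists of operators that "factorize through $x^c$" in the appropriate twisted sense, and exploit the dual picture. Writing $\E_x$ for the Heisenberg-picture adjoint (with respect to the $\sigma_\Lambda$-weighted inner product), one has $\E_x(\identity) = \identity$ and $\E_x$ is a genuine conditional expectation onto a subalgebra; then $\tr[\E_x^*(\rho_\Lambda)\,g] = \tr[\rho_\Lambda\,\E_x(g)]$. The key computation is $\E_x(\log\sigma_\Lambda - \log\sigma_{x^c}) = 0$ or an analogous cancellation coming from the commuting structure, which reduces the cross term to $\tr[\rho_\Lambda(\log\sigma_{x^c} - \log\rho_{x^c})] + (\text{something} \leq 0)$, i.e.\ $-D(\rho_{x^c}||\sigma_{x^c})$ plus a nonpositive remainder. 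Concretely I expect the clean statement $\operatorname{EP}_x(\rho_\Lambda) = \entA x \rla \sla + D(\rho_\Lambda || \E_x^*(\rho_\Lambda))$ (or a similarly signed correction), from which \eqref{eq:EP>D} is immediate by nonnegativity of relative entropy, and the final sentence about $\operatorname{EP}_A(\rho_\Lambda) \geq 0$ follows by summing over $x \in A$ together with the nonnegativity of the conditional relative entropy (Proposition on properties of the conditional relative entropy, part (1)).

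The main obstacle is the cross term $\tr[\E_x^*(\rho_\Lambda)(\log\rho_\Lambda - \log\sigma_\Lambda)]$: one must pass to the Heisenberg picture and use the precise algebraic structure of $\E_x$ (that $\log\sigma_\Lambda - \log\sigma_{x^c}$ lies in an appropriate sense in the "off-diagonal" part killed or fixed by $\E_x$, using commutativity of the potential and the Petz structure), and then recognize that what is left assembles into $\entA x \rla \sla$ plus a manifestly nonnegative correction. Getting the correction term's sign right — ensuring it is a relative entropy pointing the favorable way — is the crux; if the direct manipulation is delicate, the fallback is to combine monotonicity of relative entropy under $\E_x^*$ with the identity $D(\rho_\Lambda||\sigma_\Lambda) = D(\rho_\Lambda || \E_x^*(\rho_\Lambda)) + D(\E_x^*(\rho_\Lambda)||\sigma_\Lambda)$ (a Pythagorean-type identity valid because $\E_x^*$ is a $\sigma_\Lambda$-preserving conditional expectation) and then lower-bound $D(\E_x^*(\rho_\Lambda)||\sigma_\Lambda) \geq D(\rho_{x^c}||\sigma_{x^c})$ via data processing under $\tr_x$, giving $\operatorname{EP}_x(\rho_\Lambda) \geq D(\rho_\Lambda||\sigma_\Lambda) - D(\rho_{x^c}||\sigma_{x^c}) = \entA x \rla \sla$ once one checks the cross term equals $D(\E_x^*(\rho_\Lambda)||\sigma_\Lambda)$ exactly — which holds precisely because $\log \E_x^*(\rho_\Lambda)$ and $\log\rho_\Lambda$ agree after pairing against $\E_x^*(\rho_\Lambda)$, by the conditional-expectation property.
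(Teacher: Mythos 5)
Your opening is exactly right, and in fact the ``naive split'' that you discard is essentially the paper's own proof --- you abandoned it one step too early. After writing $\operatorname{EP}_x(\rho_\Lambda) = \ent{\rho_\Lambda}{\sigma_\Lambda} - \tr[\E_x^*(\rho_\Lambda)(\log\rho_\Lambda-\log\sigma_\Lambda)]$ and adding and subtracting $\log\E_x^*(\rho_\Lambda)$, the cross term becomes $\ent{\E_x^*(\rho_\Lambda)}{\sigma_\Lambda} - \ent{\E_x^*(\rho_\Lambda)}{\rho_\Lambda} \le \ent{\E_x^*(\rho_\Lambda)}{\sigma_\Lambda}$; the quantity $\ent{\E_x^*(\rho_\Lambda)}{\rho_\Lambda}\ge 0$ is not ``the wrong direction'' --- it enters $\operatorname{EP}_x$ with a plus sign and is simply discarded. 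The only idea you are missing is then a single application of the data processing inequality, not to $\tr_x$ and not to $\E_x^*$, but to the Petz recovery map itself: since $\E_x^*(\rho_\Lambda)=\mathcal{R}^{\sigma}_{x^c\to\Lambda}(\rho_{x^c})$ and $\sigma_\Lambda=\mathcal{R}^{\sigma}_{x^c\to\Lambda}(\sigma_{x^c})$, one gets $\ent{\E_x^*(\rho_\Lambda)}{\sigma_\Lambda}\le \ent{\rho_{x^c}}{\sigma_{x^c}}$, and combining the three displays gives $\operatorname{EP}_x(\rho_\Lambda)\ge \ent{\rho_\Lambda}{\sigma_\Lambda}-\ent{\rho_{x^c}}{\sigma_{x^c}}=\entA{x}{\rho_\Lambda}{\sigma_\Lambda}$. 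The final claim $\operatorname{EP}_A(\rho_\Lambda)\ge 0$ then follows by summing over $x\in A$ as you indicate.

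The route you actually pursue in place of this does not work for this generator. The paper is explicit that $\E_x^*$ is only a \emph{quasi}-conditional expectation: the Pythagorean identity $\ent{\rho_\Lambda}{\sigma_\Lambda}=\ent{\rho_\Lambda}{\E_x^*(\rho_\Lambda)}+\ent{\E_x^*(\rho_\Lambda)}{\sigma_\Lambda}$, the claimed identity $\tr[\E_x^*(\rho_\Lambda)\log\rho_\Lambda]=\tr[\E_x^*(\rho_\Lambda)\log\E_x^*(\rho_\Lambda)]$, and even the assertion $\tr_x[\E_x^*(\rho_\Lambda)]=\rho_{x^c}$ all rely on $\E_x$ being a genuine conditional expectation onto a subalgebra, and all fail in general for the heat-bath map. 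Worse, the inequality you invoke at the end, $\ent{\E_x^*(\rho_\Lambda)}{\sigma_\Lambda}\ge \ent{\rho_{x^c}}{\sigma_{x^c}}$, points the wrong way: inserted into $\operatorname{EP}_x(\rho_\Lambda) = \ent{\rho_\Lambda}{\sigma_\Lambda}-\ent{\E_x^*(\rho_\Lambda)}{\sigma_\Lambda}$ it would produce an \emph{upper} bound on the entropy production by the conditional relative entropy, not the lower bound \eqref{eq:EP>D}. What is needed is precisely the reverse inequality, and that is what data processing for the recovery map provides.
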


\begin{proof}
The proof is a direct consequence of the data processing inequality and the fact that $\E_x^*(\cdot)$ is the Petz recovery map for the partial trace in $x$, composed with the partial trace (and, in particular, a quantum channel). Indeed, let us recall that $\text{EP}_x(\rho_\Lambda)$ is given by
\begin{align}\label{eq:lemmaEP>D-1}
\text{EP}_x(\rho_\Lambda)& = - \tr[ \LL_x^*(\rho_\Lambda) (\log \rho_\Lambda - \log \sigma_\Lambda) ] \nonumber \\
& = \tr[ \left( \rho_\Lambda - \E_x^*(\rho_\Lambda)  \right)  (\log \rho_\Lambda - \log \sigma_\Lambda) ] \nonumber \\
&= \ent {\rho_\Lambda} {\sigma_\Lambda} - \tr[ \E_x^*(\rho_\Lambda) (\log \rho_\Lambda - \log \sigma_\Lambda)  ] .
\end{align}

In the second term of (\ref{eq:lemmaEP>D-1}), let us add and substract $\log \E_x^*(\rho_\Lambda)$. Then,
\begin{align}\label{eq:lemmaEP>D-2}
 \tr[ \E_x^*(\rho_\Lambda) (\log \rho_\Lambda - \log \sigma_\Lambda)  ]  &=   \tr[ \E_x^*(\rho_\Lambda) (\log \rho_\Lambda - \log \sigma_\Lambda +\log \E_x^*(\rho_\Lambda) - \log \E_x^*(\rho_\Lambda) )  ]  \nonumber \\
 &= - \ent {\E_x^*(\rho_\Lambda) } {\rho_\Lambda}+  \ent {\E_x^*(\rho_\Lambda) } {\sigma_\Lambda} \nonumber \\
 & \leq  \ent {\E_x^*(\rho_\Lambda) } {\sigma_\Lambda},
\end{align}
where we have used the fact that the relative entropy of two states is always non-negative.

Finally, since $\E_x^*(\cdot)$ is the Petz recovery map for the partial trace in $x$ composed with the partial trace (denote $\E_x^*(\cdot) = \highlight{\mathcal{P}^\sigma_{\tr_x}} \circ \tr_x[ \cdot ] $), note that $\sigma_\Lambda$ is a fixed point. Then,
\begin{align*}
\ent {\E_x^*(\rho_\Lambda) } {\sigma_\Lambda} &= \ent {\highlight{\mathcal{P}^\sigma_{\tr_x}}  \circ \tr_x[\rho_\Lambda] } {\highlight{\mathcal{P}^\sigma_{\tr_x}}  \circ \tr_x[\sigma_\Lambda] } \\
& \leq \ent {\rho_{x^c}} {\sigma_{x^c}},
\end{align*}
and thus
\begin{equation*}
\text{EP}_x(\rho_\Lambda) \geq  \ent {\rho_\Lambda} {\sigma_\Lambda} - \ent {\rho_{x^c}} {\sigma_{x^c}} = \entA x {\rho_\Lambda} {\sigma_\Lambda}.
\end{equation*}

\end{proof}

\begin{remark}
If we recall the definition for conditional MLSI constant introduced in the previous section, Lemma \ref{lemma:EP>D} can clearly be seen as a lower bound for the conditional MLSI constant at a single site $x \in \Lambda$ for the heat-bath dynamics, i.e.,
\begin{equation*}
\alpha_\Lambda(\LL_x^* ) \geq \frac{1}{2}.
\end{equation*}

This inequality, in particular, can be used to prove positivity of the MLSI constant for the heat-bath dynamics when $\sigma_\Lambda$ is a tensor product, as it appears in \cite{[CLP18a]} (see also \cite{beigi2018quantum} and \cite{Bardet-NonCommFunctInequalities-2017}).
\end{remark}

\begin{remark}
Note that, in the previous lemma, we have only used the fact that the partial trace is a quantum channel and  $\E_x^*(\cdot)$ its Petz recovery map composed with it. Hence, in more generality, Lemma \ref{lemma:EP>D} could be stated as: Let $\mathcal{T}$ be a quantum channel and denote by $\widehat{\mathcal{T}}$ its Petz recovery map with respect to $\sigma_\Lambda$. Then, for any $\rho_\Lambda \in \SSS_\Lambda$ the following holds
\begin{equation}
\tr[ (\rho_\Lambda - \widehat{\mathcal{T}} \circ \mathcal{T} (\rho_\Lambda)) \left(\log  \rho_\Lambda - \log \sigma_\Lambda \right) ] \geq \ent {\rho_\Lambda} {\sigma_\Lambda} - \ent { \mathcal{T}  (\rho_\Lambda)} { \mathcal{T}  (\sigma_\Lambda)}.
\end{equation}
\end{remark}

Another tool that will be of use in the main result of this section is the following lemma, which appeared first in \cite{[MOZ98]}. It can be seen as an equivalence between blocks of spins, and allows us to prove an equivalence between the usual conditional Lindbladian associated to the heat-bath dynamics in $A \subseteq \Lambda$, given as a sum of local terms, and a modified one given as a unique term. Note that it is stated in the Heisenberg picture.  

\begin{lemma}[\cite{[MOZ98]}]\label{lemma:Zegarlinski}
Let $A \subseteq \Lambda$ and let $\sigma_\Lambda$ be the Gibbs state of the $k$-local commuting Hamiltonian mentioned above. There exist constants $0<c_A, C_A < \infty$, possibly depending on $A$ but not on $\Lambda$, such that for any $f_\Lambda \in \A_\Lambda $ the following holds:
\begin{equation}\label{eq:lemma-blocks}
c_A \un{x \in A}{\sum} \left\langle f_\Lambda , f_\Lambda - \E_x(f_\Lambda)   \right\rangle_{\sigma_\Lambda} \leq  \left\langle f_\Lambda , f_\Lambda - \E_A(f_\Lambda)   \right\rangle_{\sigma_\Lambda} \leq C_A  \un{x \in A}{\sum}  \left\langle f_\Lambda , f_\Lambda - \E_x (f_\Lambda)   \right\rangle_{\sigma_\Lambda}, 
\end{equation}
where $\E_x$, resp. $\E_A$, is the dual of $\E_x^*$, resp. of $\E_A^*$, and is given by
\begin{equation*}
\E_x(f_\Lambda):= \sigma_{x^c}^{-1/2} \tr_x[\sigma_\Lambda^{1/2} f_\Lambda \sigma_\Lambda^{1/2} ]\sigma_{x^c}^{-1/2}, 
\end{equation*}
for every $f_\Lambda \in \A_\Lambda $ and analogously for $\E_A$.
\end{lemma}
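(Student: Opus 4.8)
The plan is to read \eqref{eq:lemma-blocks} as a two-sided comparison of its two quadratic forms, namely the Dirichlet form $f_\Lambda \mapsto \un{x\in A}{\sum}\langle f_\Lambda, f_\Lambda - \E_x f_\Lambda\rangle_{\sigma_\Lambda}$ of the ``sum of single sites'' generator $\LL_A^* = \un{x\in A}{\sum}(\E_x^* - \mathrm{id})$ and the Dirichlet form $f_\Lambda \mapsto \langle f_\Lambda, f_\Lambda - \E_A f_\Lambda\rangle_{\sigma_\Lambda}$ of the ``single block'' generator $\E_A^* - \mathrm{id}$, and to force the comparison constants to be independent of $\Lambda$ by exploiting the strict locality that a commuting, $k$-local potential imposes on the heat-bath conditional expectations. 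On $\A_\Lambda$ with the scalar product $\langle\cdot,\cdot\rangle_{\sigma_\Lambda}$, set $T_A := \un{x\in A}{\sum}(\mathrm{id} - \E_x)$ and $S_A := \mathrm{id} - \E_A$. Each of $\E_x$, $\E_A$ is a quantum channel satisfying detailed balance with respect to $\sigma_\Lambda$, so it is self-adjoint for $\langle\cdot,\cdot\rangle_{\sigma_\Lambda}$ and the forms $\langle\cdot,(\mathrm{id}-\E_x)\cdot\rangle_{\sigma_\Lambda}$, $\langle\cdot,(\mathrm{id}-\E_A)\cdot\rangle_{\sigma_\Lambda}$ are non-negative; hence $T_A$ and $S_A$ are positive semidefinite self-adjoint operators for this scalar product, with kernels $\ker T_A = \un{x\in A}{\bigcap}\mathcal N_x$ and $\ker S_A = \mathcal N_A$, where $\mathcal N_B := \{f_\Lambda \in \A_\Lambda : \E_B(f_\Lambda) = f_\Lambda\}$. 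Then \eqref{eq:lemma-blocks} is exactly the operator sandwich $c_A\,T_A \le S_A \le C_A\,T_A$.

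On a finite-dimensional space two positive semidefinite self-adjoint operators obey such a sandwich with finite positive constants precisely when they have the same kernel, and one may take $C_A = \|S_A\|/\lambda_{\min}^{+}(T_A)$, $c_A = \lambda_{\min}^{+}(S_A)/\|T_A\|$, with $\|\cdot\|$ the operator norm and $\lambda_{\min}^{+}$ the smallest non-zero eigenvalue. For these four quantities to be $\Lambda$-free, the key point is that commutativity and $k$-locality collapse the Petz formula defining $\E_x$ to a strictly local expression: splitting $H_\Lambda = h_x + (H_\Lambda - h_x)$ into the mutually commuting sum of the terms touching $x$ and the rest, using $\sigma_\Lambda^{1/2} \propto e^{-\beta h_x/2}\,e^{-\beta(H_\Lambda - h_x)/2}$, and observing that $\gamma_x := \tr_x[e^{-\beta h_x}]$ commutes with every term not touching $x$, one obtains $\E_x(f_\Lambda) = \gamma_x^{-1/2}\tr_x[e^{-\beta h_x/2}\,f_\Lambda\,e^{-\beta h_x/2}]\gamma_x^{-1/2}$ with $h_x$ and $\gamma_x$ supported in a ball of radius $O(k)$ about $x$; the same manipulation applies to $\E_A$. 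Hence there is a single fixed region $\tilde A \supseteq A$, of size controlled by that of $A$ and by $k$, such that $\E_x = \E_x^{\mathrm{loc}} \otimes \mathrm{id}_{\tilde A^c}$ for all $x \in A$ and $\E_A = \E_A^{\mathrm{loc}} \otimes \mathrm{id}_{\tilde A^c}$, with $\E_x^{\mathrm{loc}}$, $\E_A^{\mathrm{loc}}$ acting on $\A_{\tilde A}$ and depending only on the Hamiltonian terms inside $\tilde A$. Therefore $T_A = T_A^{\mathrm{loc}} \otimes \mathrm{id}_{\tilde A^c}$ and $S_A = S_A^{\mathrm{loc}} \otimes \mathrm{id}_{\tilde A^c}$, so the eigenvalues of $T_A$, $S_A$ — whence $\|S_A\|$, $\|T_A\|$ and $\lambda_{\min}^{+}(S_A)$, $\lambda_{\min}^{+}(T_A)$ — equal those of the fixed operators $T_A^{\mathrm{loc}}$, $S_A^{\mathrm{loc}}$ on $\A_{\tilde A}$, and $\un{x\in A}{\bigcap}\mathcal N_x$, $\mathcal N_A$ are governed by the fixed-point sets of $\E_x^{\mathrm{loc}}$, $\E_A^{\mathrm{loc}}$; everything is thus a quantity attached to the fixed finite lattice $\tilde A$, independent of $\Lambda$.

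What remains, and what I expect to be the main obstacle, is the kernel identity $\un{x\in A}{\bigcap}\mathcal N_x = \mathcal N_A$, which by the previous paragraph is a statement about $\E_x^{\mathrm{loc}}$, $\E_A^{\mathrm{loc}}$ on $\A_{\tilde A}$. The inclusion $\mathcal N_A \subseteq \mathcal N_x$ $(x \in A)$ says that an observable fixed by the block map $\E_A$ is fixed by each single-site map $\E_x$, $x \in A$, and the reverse inclusion says that an observable fixed by all the $\E_x$ $(x \in A)$ is fixed by $\E_A$. Neither is formal — $\E_A$ is only a quasi-conditional expectation, and in particular it does not factor through the $\E_x$ — but both can be derived by making $\E_x^{\mathrm{loc}}$, $\E_A^{\mathrm{loc}}$ fully explicit via the direct-sum decomposition \eqref{eq2} of $\sigma_\Lambda$ across a shielding collar of $\tilde A$ (Theorems \ref{theo1} and \ref{thm:StructQMC}); alternatively one may invoke the corresponding structural results of \cite{[KB14]} for the heat-bath dynamics. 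This identity genuinely relies on commutativity and has no counterpart for generic $\sigma_\Lambda$, mirroring the failure of the DLR condition in the quantum setting; it is the real content of the lemma, and granting it the operator sandwich and the localisation reduction are routine. This is essentially the route of \cite{[MOZ98]}.
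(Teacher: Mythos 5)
First, note that the paper does not actually prove this lemma: it is imported verbatim from \cite{[MOZ98]} (the same statement is used, also without proof, in \cite{[KB14]}), so there is no internal argument to compare yours against; I can only assess your proposal on its own terms. Your reduction is sound as far as it goes: reading \eqref{eq:lemma-blocks} as a two-sided comparison of the positive semidefinite forms of $T_A=\sum_{x\in A}(\mathrm{id}-\E_x)$ and $S_A=\mathrm{id}-\E_A$ (both self-adjoint for $\langle\cdot,\cdot\rangle_{\sigma_\Lambda}$), invoking the finite-dimensional fact that such a sandwich holds iff the kernels coincide, and killing the $\Lambda$-dependence of the extremal non-zero eigenvalues through the strictly local form $\E_x(f)=\gamma_x^{-1/2}\tr_x[e^{-\beta h_x/2}f e^{-\beta h_x/2}]\gamma_x^{-1/2}$ that commutativity forces, is all correct (the commutation of $\gamma_x=\tr_x[e^{-\beta h_x}]$ with the terms supported on $x^c$, which you need to cancel the non-local factors, does hold).

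The genuine gap is exactly where you flag it: the kernel identity $\bigcap_{x\in A}\mathcal N_x=\mathcal N_A$ is asserted, not proved, and by your own account it is ``the real content of the lemma.'' Gesturing at the direct-sum decomposition \eqref{eq2} or at ``the corresponding structural results of \cite{[KB14]}'' does not discharge it, and you must be careful here: in this paper the kernel equivalence (Theorem \ref{thm:kernelA=kernelx} and Corollary \ref{cor:kernelA=kernelx}) is \emph{deduced from} Lemma \ref{lemma:Zegarlinski}, so neither result may be used as input without circularity. Two smaller corrections. First, your claim that ``neither inclusion is formal'' is off: the inclusion $\mathcal N_A\subseteq\bigcap_{x\in A}\mathcal N_x$ follows from a standard squeeze, since $\tr_A=\tr_{A\setminus x}\circ\tr_x$ gives $D(\rho\|\sigma)\ge D(\rho_{x^c}\|\sigma_{x^c})\ge D(\rho_{A^c}\|\sigma_{A^c})$, and equality of the outer terms forces equality in the first step, which by Petz's theorem is precisely $\E^*_x(\rho)=\rho$; only the converse inclusion $\bigcap_{x\in A}\mathcal N_x\subseteq\mathcal N_A$ carries the content and genuinely uses commutativity. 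Second, the equal-kernel criterion only yields \emph{some} finite positive $c_A,C_A$; their independence of $\Lambda$ still rests on your localisation of $T_A$ and $S_A$ to a fixed window $\tilde A$, which is fine, but should be stated as part of the conclusion rather than as an afterthought. As it stands the proposal is a correct \emph{reduction} of the lemma to the hard ergodicity statement, not a proof of it.
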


Let us now state and prove the main technical result of this section, which will be essential for the proof of Theorem \ref{thm:mainresult}, but has independent interest on its own.

\begin{theorem}\label{thm:kernelA=kernelx}
Let $\Lambda \subset \subset \Z^d$ be a finite lattice and let $\sigma_\Lambda \in \SSS_\Lambda$ be the Gibbs state of a commuting Hamiltonian  over $\Lambda$. For any $A \subseteq \Lambda$ and $\rho_\Lambda \in \SSS_\Lambda$, the following equivalence holds:
\begin{equation}
\rho_\Lambda = \E_A^* (\rho_\Lambda)  \Leftrightarrow \rho_\Lambda = \E_x^* (\rho_\Lambda) \; \; \forall x \in A.
\end{equation}
\end{theorem}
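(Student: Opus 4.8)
\emph{Easy direction.} The implication ``$\rho_\Lambda = \E_A^*(\rho_\Lambda) \Rightarrow \rho_\Lambda = \E_x^*(\rho_\Lambda)$ for all $x \in A$'' is the substantive one; the converse is also nontrivial here. Let me organize both around a variational/entropic characterization of the fixed-point sets. The plan is to use Lemma~\ref{lemma:EP>D} together with Lemma~\ref{lemma:Zegarlinski}. First, recall that $\E_A^*$ and each $\E_x^*$ are quantum channels with $\sigma_\Lambda$ as a fixed point, and they are self-adjoint with respect to the weighted inner product $\langle \cdot, \cdot\rangle_{\sigma_\Lambda}$; hence $\operatorname{EP}_A(\rho_\Lambda) = \langle \log\rho_\Lambda - \log\sigma_\Lambda,\, \rho_\Lambda - \E_A^*(\rho_\Lambda)\rangle$-type expressions can be compared. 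The key observation is that $\operatorname{EP}_A(\rho_\Lambda) = \sum_{x\in A}\operatorname{EP}_x(\rho_\Lambda)$ directly from the definition $\LL_A^* = \sum_{x\in A}(\E_x^* - \mathrm{id})$, and by Lemma~\ref{lemma:EP>D} each term satisfies $\operatorname{EP}_x(\rho_\Lambda) \geq D_x(\rho_\Lambda\|\sigma_\Lambda) \geq 0$.

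\emph{Direction $(\Leftarrow)$.} Suppose $\rho_\Lambda = \E_x^*(\rho_\Lambda)$ for every $x\in A$. Then $\operatorname{EP}_x(\rho_\Lambda) = \tr[(\rho_\Lambda - \E_x^*(\rho_\Lambda))(\log\rho_\Lambda - \log\sigma_\Lambda)] = 0$ for each $x\in A$. I would show that the vanishing of all these single-site entropy productions forces $\operatorname{EP}_A(\rho_\Lambda) = 0$, and then that $\operatorname{EP}_A(\rho_\Lambda) = 0$ forces $\rho_\Lambda = \E_A^*(\rho_\Lambda)$. The first is immediate from $\operatorname{EP}_A = \sum_{x\in A}\operatorname{EP}_x$. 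For the second, the cleanest route is to note that $\operatorname{EP}_A(\rho_\Lambda) \geq D_A(\rho_\Lambda\|\sigma_\Lambda) \geq 0$ (which follows from Lemma~\ref{lemma:EP>D} applied blockwise, or by the same data-processing argument with $\E_A^*$ in place of $\E_x^*$, since $\E_A^*$ is likewise a Petz recovery map composed with a partial trace); then $\operatorname{EP}_A(\rho_\Lambda) = 0$ and we trace back through the inequality chain in the proof of Lemma~\ref{lemma:EP>D} to see equality must hold in the data-processing inequality, which by the equality conditions for data processing (or directly by $D(\E_A^*(\rho_\Lambda)\|\rho_\Lambda) = 0$ appearing after adding and subtracting $\log\E_A^*(\rho_\Lambda)$) gives $\E_A^*(\rho_\Lambda) = \rho_\Lambda$.

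\emph{Direction $(\Rightarrow)$.} Suppose $\rho_\Lambda = \E_A^*(\rho_\Lambda)$, so $\operatorname{EP}_A(\rho_\Lambda) = 0$. Since $\operatorname{EP}_A(\rho_\Lambda) = \sum_{x\in A}\operatorname{EP}_x(\rho_\Lambda)$ and each summand is nonnegative by Lemma~\ref{lemma:EP>D}, every single-site term vanishes: $\operatorname{EP}_x(\rho_\Lambda) = 0$ for all $x\in A$. Running the argument of the previous paragraph at the single site $x$ (equality in data processing for the channel $\tr_x$ with Petz recovery $\E_x^*$) then yields $\E_x^*(\rho_\Lambda) = \rho_\Lambda$ for each $x\in A$, as desired. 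Here the commuting structure of the Hamiltonian enters only insofar as it guarantees $\E_x^*$ is a genuine quantum channel whose square is itself and for which $\sigma_\Lambda$ is a fixed point, so that Lemma~\ref{lemma:EP>D} applies verbatim.

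\emph{Expected main obstacle.} The delicate point is the precise extraction of the fixed-point equation from vanishing entropy production, i.e., pinning down the equality case. The inequality $\operatorname{EP}_x(\rho_\Lambda) \geq D_x(\rho_\Lambda\|\sigma_\Lambda)$ was proved via two relaxations: dropping a relative-entropy term $D(\E_x^*(\rho_\Lambda)\|\rho_\Lambda) \geq 0$, and a data-processing step $D(\E_x^*(\rho_\Lambda)\|\sigma_\Lambda) \leq D(\rho_{x^c}\|\sigma_{x^c})$. When $\operatorname{EP}_x(\rho_\Lambda) = 0$ one needs $D(\E_x^*(\rho_\Lambda)\|\rho_\Lambda) = 0$, which immediately gives $\E_x^*(\rho_\Lambda) = \rho_\Lambda$ by faithfulness of the relative entropy --- so in fact this term alone suffices and the argument is clean. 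I would therefore streamline the proof by isolating, inside the computation of $\operatorname{EP}_x$, the identity $\operatorname{EP}_x(\rho_\Lambda) = D(\E_x^*(\rho_\Lambda)\|\rho_\Lambda) + \bigl(D(\rho_\Lambda\|\sigma_\Lambda) - D(\E_x^*(\rho_\Lambda)\|\sigma_\Lambda)\bigr)$ with both bracketed quantities nonnegative, so that vanishing of the left side forces $D(\E_x^*(\rho_\Lambda)\|\rho_\Lambda) = 0$ hence $\rho_\Lambda = \E_x^*(\rho_\Lambda)$; and symmetrically for $A$. The converse then only needs additivity of entropy production over sites plus nonnegativity of each term.
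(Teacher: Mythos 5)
There is a genuine gap, and it sits exactly at the point the theorem is meant to address. You work throughout with the entropy production $\operatorname{EP}_A(\rho_\Lambda)=-\tr[\LL_A^*(\rho_\Lambda)(\log\rho_\Lambda-\log\sigma_\Lambda)]$, which by Definition \ref{def:conditionalEntropyProduction} is built from $\LL_A^*=\sum_{x\in A}(\E_x^*-\mathrm{id})$, so that indeed $\operatorname{EP}_A=\sum_{x\in A}\operatorname{EP}_x$. Your single-site analysis is correct: the decomposition $\operatorname{EP}_x(\rho_\Lambda)=D(\E_x^*(\rho_\Lambda)\|\rho_\Lambda)+\bigl(D(\rho_\Lambda\|\sigma_\Lambda)-D(\E_x^*(\rho_\Lambda)\|\sigma_\Lambda)\bigr)$ into two nonnegative terms shows $\operatorname{EP}_x(\rho_\Lambda)=0\Leftrightarrow\E_x^*(\rho_\Lambda)=\rho_\Lambda$, hence $\operatorname{EP}_A(\rho_\Lambda)=0\Leftrightarrow\E_x^*(\rho_\Lambda)=\rho_\Lambda$ for all $x\in A$. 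But the other half of the equivalence concerns the block map $\E_A^*$, and the quantity your ``blockwise Lemma \ref{lemma:EP>D}'' controls is $\widetilde{\operatorname{EP}}_A(\rho_\Lambda):=-\tr[(\E_A^*(\rho_\Lambda)-\rho_\Lambda)(\log\rho_\Lambda-\log\sigma_\Lambda)]$, the entropy production of $\widetilde{\LL}_A^*=\E_A^*-\mathrm{id}$, which is a different functional from $\operatorname{EP}_A$ (already for $\sigma_\Lambda$ a tensor product and $A=\{x_1,x_2\}$ one has $\E_A^*=\E_{x_1}^*\circ\E_{x_2}^*$ while $\LL_A^*=\E_{x_1}^*+\E_{x_2}^*-2\,\mathrm{id}$). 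Consequently, in the direction $(\Rightarrow)$ the step ``$\rho_\Lambda=\E_A^*(\rho_\Lambda)$, so $\operatorname{EP}_A(\rho_\Lambda)=0$'' is unjustified --- what you actually get is $\widetilde{\operatorname{EP}}_A(\rho_\Lambda)=0$ --- and in the direction $(\Leftarrow)$, tracing the equality case back from $\operatorname{EP}_A(\rho_\Lambda)=0$ only yields information about the single-site maps, not about $\E_A^*$.

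What is missing is precisely the bridge between the one-block generator $\widetilde{\LL}_A^*$ and the sum-of-sites generator $\LL_A^*$; your plan names Lemma \ref{lemma:Zegarlinski} but never deploys it. The paper's proof runs entirely through that lemma: the two-sided comparison $c_A\sum_{x\in A}\langle f,f-\E_x(f)\rangle_{\sigma_\Lambda}\le\langle f,f-\E_A(f)\rangle_{\sigma_\Lambda}\le C_A\sum_{x\in A}\langle f,f-\E_x(f)\rangle_{\sigma_\Lambda}$ forces the Dirichlet forms of $\widetilde{\LL}_A$, of $\LL_A$, and of each $\LL_x$ to vanish simultaneously; detailed balance then upgrades $\langle f,\LL(f)\rangle_{\sigma_\Lambda}=0$ to $\LL(f)=0$; and conjugation by $\Gamma_{\sigma_\Lambda}$ transports the resulting kernel identities to the Schr\"odinger picture. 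If you wish to keep your entropic framing, you may use it to establish $\operatorname{EP}_A(\rho_\Lambda)=0\Leftrightarrow\E_x^*(\rho_\Lambda)=\rho_\Lambda\ \forall x\in A$ and, separately, $\widetilde{\operatorname{EP}}_A(\rho_\Lambda)=0\Leftrightarrow\E_A^*(\rho_\Lambda)=\rho_\Lambda$; but identifying the two zero sets still requires Lemma \ref{lemma:Zegarlinski} (or an entropic analogue of it), and without that ingredient the proof does not close.
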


\begin{proof}
Let us first recall that, for every $\rho_\Lambda \in \SSS_\Lambda$, the local Lindbladian in $A\subseteq \Lambda$ is given by
\begin{equation*}
\LL_A^*(\rho_\Lambda) = \un{x \in A}{\sum} \left( \E_x^* (\rho_\Lambda)  - \rho_\Lambda  \right),
\end{equation*}
and define
\begin{equation*}
\widetilde{\LL}_A^*(\rho_\Lambda) =\E_A^* (\rho_\Lambda)  - \rho_\Lambda.
\end{equation*}

Analogously, defining the superoperator $\Gamma_{\sigma_\Lambda}:f_\Lambda\mapsto\sigma_\Lambda^{1/2}f_\Lambda\sigma_{\Lambda}^{1/2} $, we can write every observable $f_\Lambda \in \A_\Lambda$ as
\begin{equation}\label{eq:Gamma}
f_\Lambda = \Gamma_{\sigma_\Lambda}^{-1} (\rho_\Lambda)= \sigma_\Lambda^{-1/2} \rho_\Lambda \sigma_\Lambda^{-1/2},
\end{equation} 
and thus we have 
\begin{equation*}
\LL_A(f_\Lambda) =  \un{x \in A}{\sum} \left( \E_x (f_\Lambda)  - f_\Lambda  \right), 
\end{equation*}
\begin{equation*}
\widetilde{\LL}_A(f_\Lambda) =\E_A (f_\Lambda)  - f_\Lambda.
\end{equation*}

 With this notation, inequality (\ref{eq:lemma-blocks}) in Lemma \ref{lemma:Zegarlinski} can be rewritten as 
\begin{equation*}
- c_A  \left\langle f_\Lambda , \LL_A(f_\Lambda)  \right\rangle_{\sigma_\Lambda} \leq -  \left\langle f_\Lambda , \widetilde{\LL}_A(f_\Lambda)   \right\rangle_{\sigma_\Lambda} \leq - C_A    \left\langle f_\Lambda , \LL_A(f_\Lambda)  \right\rangle_{\sigma_\Lambda} ,
\end{equation*} 
and thus,
\begin{equation*}
\left\langle f_\Lambda , \LL_A(f_\Lambda)  \right\rangle_{\sigma_\Lambda} = 0 \quad\Leftrightarrow\quad\forall x\in A\,,\quad\left\langle f_\Lambda , \LL_x(f_\Lambda)  \right\rangle_{\sigma_\Lambda} = 0 \quad\Leftrightarrow\quad  \left\langle f_\Lambda , \widetilde{\LL}_A(f_\Lambda)   \right\rangle_{\sigma_\Lambda} = 0,
\end{equation*}
which thanks to the detailed-balance condition, \hl{and the subsequent positivity of the generators}, leads to
\begin{equation}\label{eq:equivLA}
\LL_A(f_\Lambda)= 0  \quad\Leftrightarrow\quad\forall x\in A\,,\quad\LL_x(f_\Lambda)= 0  \quad\Leftrightarrow\quad \widetilde{\LL}_A(f_\Lambda)=0.    
\end{equation}

Now, because of (\ref{eq:Gamma}), one can easily see that $\mathbb{E}_x^*=\Gamma_{\sigma_\Lambda}\circ \mathbb{E}_x\circ\Gamma_{\sigma_\Lambda}^{-1}$ and the same holds for $\mathbb{E}_A^*$. Hence, (\ref{eq:equivLA}) is equivalent to
\begin{equation}\label{eq:equivLA*}
\LL_A^*(\rho_\Lambda)= 0 \quad\Leftrightarrow\quad \forall x\in A\,,\quad \LL_x^*(\rho_\Lambda)=0 \quad\Leftrightarrow\quad \widetilde{\LL}^*_A(\rho_\Lambda)=0.
\end{equation}

Recalling the expressions for $\LL_A^*(\rho_\Lambda)$ and $\widetilde{\LL}^*_A(\rho_\Lambda)$, we obtain:
\begin{equation*}
\rho_\Lambda = \E_A^* (\rho_\Lambda) \; \;  \Leftrightarrow \; \;  \rho_\Lambda = \E_x^* (\rho_\Lambda) \; \; \; \forall x \in A. 
\end{equation*}

\end{proof}

This result can also be stated in terms of conditional relative entropies. Indeed, note that, as a consequence of Petz's characterization for conditions of equality in the data processing inequality, all the conditions above can be seen as necessary and sufficient conditions for vanishing conditional relative entropies. We have then the following corollary.

\begin{corollary}\label{cor:kernelA=kernelx}
Let $\Lambda \subset \subset \Z^d$ be a finite quantum lattice and let $\sigma_\Lambda \in \SSS_\Lambda$ be the Gibbs state of a commuting Hamiltonian. For any $A \subseteq \Lambda$ and $\rho_\Lambda \in \SSS_\Lambda$, the following equivalence holds:
\begin{equation}
\entA A {\rla}  \sla = 0  \Leftrightarrow \entA x \rla \sla = 0 \; \; \forall x \in A.
\end{equation}
\end{corollary}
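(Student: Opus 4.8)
The plan is to reduce the statement to Theorem~\ref{thm:kernelA=kernelx} by identifying the vanishing of each conditional relative entropy with a fixed-point condition for the corresponding heat-bath conditional expectation. Recall that, by definition, $\entA A \rla \sla = \ent{\rho_\Lambda}{\sigma_\Lambda} - \ent{\rho_{A^c}}{\sigma_{A^c}}$, which is exactly the data-processing deficit for the channel $\tr_A$, since $\tr_A[\sigma_\Lambda]=\sigma_{A^c}$ and $\tr_A[\rho_\Lambda]=\rho_{A^c}$. Moreover $\sigma_\Lambda$ is full-rank, and $\E_A^*(\cdot) = \mathcal{R}^\sigma_{A^c\to\Lambda}\circ\tr_A[\cdot]$ is precisely the Petz recovery map for $\tr_A$ (with respect to $\sigma_\Lambda$) composed with $\tr_A$, as follows directly from the definition of $\E_A^*$.

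First I would invoke Petz's characterization of the equality case in the data processing inequality: for a full-rank $\sigma$ and a quantum channel $\mathcal{T}$, $\ent{\rho}{\sigma} = \ent{\mathcal{T}(\rho)}{\mathcal{T}(\sigma)}$ if and only if the associated Petz recovery map $\widehat{\mathcal{T}}$ satisfies $\widehat{\mathcal{T}}\circ\mathcal{T}(\rho)=\rho$. Applying this with $\mathcal{T}=\tr_A$ yields $\entA A \rla \sla = 0 \Leftrightarrow \E_A^*(\rho_\Lambda)=\rho_\Lambda$, and applying it with $\mathcal{T}=\tr_x$ for a single site yields $\entA x \rla \sla = 0 \Leftrightarrow \E_x^*(\rho_\Lambda)=\rho_\Lambda$. (The ``$\Leftarrow$'' directions alternatively follow at once from the general Petz-type inequality stated in the remark after Lemma~\ref{lemma:EP>D}, combined with non-negativity of the conditional relative entropy.)

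Finally I would chain these equivalences with Theorem~\ref{thm:kernelA=kernelx}, which states exactly $\rho_\Lambda=\E_A^*(\rho_\Lambda) \Leftrightarrow \rho_\Lambda=\E_x^*(\rho_\Lambda)$ for all $x\in A$. Composing the three equivalences gives
\[
\entA A \rla \sla = 0 \;\Leftrightarrow\; \rho_\Lambda=\E_A^*(\rho_\Lambda) \;\Leftrightarrow\; \rho_\Lambda=\E_x^*(\rho_\Lambda)\ \forall x\in A \;\Leftrightarrow\; \entA x \rla \sla = 0\ \forall x\in A,
\]
which is the claim. I do not expect a real obstacle: the only points needing care are checking that the Petz recovery map for $\tr_A$ is literally $\E_A^*$ precomposed with $\tr_A$ (true by definition) and invoking the full-rank hypothesis on $\sigma_\Lambda$ required by the equality characterization; the rest is bookkeeping.
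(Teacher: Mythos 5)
Your proof is correct and follows essentially the same route as the paper: the authors likewise derive the corollary from Theorem~\ref{thm:kernelA=kernelx} by invoking Petz's characterization of equality in the data processing inequality to identify $\entA A \rla \sla = 0$ with $\E_A^*(\rho_\Lambda)=\rho_\Lambda$ (and likewise at single sites). Your write-up merely makes explicit the bookkeeping that the paper compresses into one sentence.
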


Another consequence of the previous result is that a state is recoverable from a certain region whenever it is recoverable from several components of that region that cover it completely, no matter the size of those components. More specifically, we have the following corollary.

\begin{corollary}\label{cor:kernelAB=kernelA}
Given a finite lattice $\Lambda$, a partition of it into three subregions $A,B,C$, and $\sigma_{ABC}$ the Gibbs state of a commuting Hamiltonian, if we denote by $\E_A^* (\cdot)$ the conditional expectation on $A$ associated to the heat-bath dynamics (with respect to the Gibbs state), we have for any $\rho_{ABC}\in \SSS_{ABC}$:
\begin{equation}\label{eq:kernelAB=kernelA}
\E_{AB}^*(\rho_{ABC})= \rho_{ABC} \Leftrightarrow \left\lbrace  \begin{array}{c}
 \E_{A}^*(\rho_{ABC})= \rho_{ABC} \\
 \text{and} \\
  \E_{B}^*(\rho_{ABC})= \rho_{ABC} .
\end{array}   \right.  
\end{equation}

In particular,
\begin{equation}
\entA {AB} {\rho_{ABC}} {\sigma_{ABC}}= 0 \Leftrightarrow \left\lbrace  \begin{array}{c}
\entA {A} {\rho_{ABC}} {\sigma_{ABC}}= 0 \\
 \text{and} \\
\entA {B} {\rho_{ABC}} {\sigma_{ABC}}= 0 .
\end{array}   \right.   
\end{equation}

\end{corollary}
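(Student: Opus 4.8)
The plan is to deduce everything from Theorem \ref{thm:kernelA=kernelx}, which already gives the equivalence between recoverability from a region and recoverability from each of its single sites, with no dependence whatsoever on the size of the region. First I would apply that theorem to the region $AB=A\cup B$: for any $\rho_{ABC}\in\SSS_{ABC}$,
\[
\E_{AB}^*(\rho_{ABC})=\rho_{ABC}\;\Longleftrightarrow\;\E_x^*(\rho_{ABC})=\rho_{ABC}\quad\forall x\in A\cup B.
\]
Since $A$ and $B$ are disjoint, the right-hand side is literally the conjunction of ``$\E_x^*(\rho_{ABC})=\rho_{ABC}$ for all $x\in A$'' and ``$\E_x^*(\rho_{ABC})=\rho_{ABC}$ for all $x\in B$''. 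Applying Theorem \ref{thm:kernelA=kernelx} once more, this time to the regions $A$ and $B$ separately, each of these two families of conditions collapses to the single condition $\E_A^*(\rho_{ABC})=\rho_{ABC}$, respectively $\E_B^*(\rho_{ABC})=\rho_{ABC}$. Chaining the equivalences yields \eqref{eq:kernelAB=kernelA}.

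For the statement about conditional relative entropies, I would invoke Petz's characterisation of the equality case in the data processing inequality, exactly as in the discussion preceding Corollary \ref{cor:kernelA=kernelx}: since $\E_A^*(\cdot)=\mathcal{R}^\sigma_{A^c\rightarrow\Lambda}\circ\tr_A[\cdot]$ is the Petz recovery map for the partial trace composed with the partial trace, and $D_A(\rho_{ABC}||\sigma_{ABC})=D(\rho_{ABC}||\sigma_{ABC})-D(\tr_A[\rho_{ABC}]||\tr_A[\sigma_{ABC}])$ is precisely the gap in data processing for that channel, one has $D_A(\rho_{ABC}||\sigma_{ABC})=0$ if and only if $\E_A^*(\rho_{ABC})=\rho_{ABC}$, and likewise for $B$ and for $AB$. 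Substituting these three equivalences into the already-proven operator identity gives the entropic version. (Alternatively one can bypass Petz entirely and read this off directly from Corollary \ref{cor:kernelA=kernelx} applied to the region $AB$, splitting the single-site conditions indexed by $A\cup B$ into those indexed by $A$ and those indexed by $B$ in the same way.)

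I do not expect any genuine obstacle here: the whole content is pushed into Theorem \ref{thm:kernelA=kernelx}, whose crucial feature---independence from the size of the region, inherited from Lemma \ref{lemma:Zegarlinski}---is exactly what legitimises the ``$AB$ is covered by $A$ and $B$'' bookkeeping regardless of how large or small $A$ and $B$ are. The only point that requires a little care is the passage between the operator identity and its entropic reformulation, and this is routine once Petz's equality conditions are invoked.
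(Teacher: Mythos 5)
Your argument is exactly the paper's own proof: Theorem \ref{thm:kernelA=kernelx} applied to $A\cup B$, the single-site conditions split between $A$ and $B$, the theorem applied again to each region, and the entropic version obtained from Petz's equality condition in data processing together with Corollary \ref{cor:kernelA=kernelx}. No gaps; nothing to add.
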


\begin{proof}
By virtue of Theorem \ref{thm:kernelA=kernelx}, it is clear that
\begin{align*}
\E_{AB}^*(\rho_{ABC})= \rho_{ABC} & \Leftrightarrow  \E_x^*(\rho_{ABC})= \rho_{ABC} \; \;  \forall x \in A \cup B \\
 &\Leftrightarrow \left\lbrace  \begin{array}{ccc}
  \E_x^*(\rho_{ABC})= \rho_{ABC}  \; \; \forall x \in A  & \Leftrightarrow & \E_{A}^*(\rho_{ABC})= \rho_{ABC} \\
  \text{and}& & \text{and}\\
    \E_x^*(\rho_{ABC})= \rho_{ABC}  \; \; \forall x \in B  & \Leftrightarrow & \E_{B}^*(\rho_{ABC})= \rho_{ABC} 
\end{array}   \right. 
\end{align*}

The second part is a direct consequence of \cite{[Petz86]} and Corollary \ref{cor:kernelA=kernelx}.
\end{proof}

\section{Main result}\label{sec:MLSI-HB-1D}

In this section, we state and prove the main result of this paper, namely a static sufficient condition on the Gibbs state of a $k$-local commuting Hamiltonian for the heat-bath dynamics in 1D to have a positive modified logarithmic-Sobolev constant (MLSI constant in short). For that, we first need to introduce two assumptions that need to be considered in order to prove the result, and which will be discussed in further detail in the next section. 

The first condition can be interpreted as an exponential decay of correlations in the Gibbs state of the commuting Hamiltonian. In Section \ref{subsec:assump1} we will see that only a weaker assumption is necessary, although this form is preferable here for its close connections to its classical analogue \cite{[D02]}.

\begin{assumption}[Mixing condition]\label{assump:1}
Let $\Lambda \subset \subset \Z$ be a finite chain and let $C, D \subset \Lambda$ be the union of non-overlapping finite-sized segments of $\Lambda$. Let  $\sla$ be the Gibbs state of a commuting Hamiltonian. The following inequality holds for certain positive constants $K_1, K_2$ independent of $\Lambda, C, D$:

\begin{equation*}
\norm{\sigma_C^{-1/2} \otimes \sigma_D^{-1/2} \,  \sigma_{CD} \,  \sigma_C^{-1/2} \otimes \sigma_D^{-1/2} - \identity_{CD}}_\infty \leq K_1 \operatorname{e}^{- K_2 \text{d}(C, \, D)},
\end{equation*}
where $\text{d}(C, D)$ is the distance between $C$ and $D$, i.e., the minimum distance between two segments of $C$ and $D$.
\end{assumption}

The second condition that needs to be assumed constitutes a strong form of quasi-factorization of the relative entropy. 

\begin{assumption}[Strong quasi-factorization]\label{assump:2}
Let $\Lambda \subset \subset \Z$ be a finite chain and $X \subset \Lambda$.  Let $\sla$ be the Gibbs state of a $k$-local commuting Hamiltonian. For every $\rho_\Lambda \in \SSS_\Lambda$, the following inequality holds
\begin{equation}
\entA X \rla \sla \leq f_X(\sigma_\Lambda) \un{x \in X}{\sum} \entA x \rla \sla,
\end{equation}
where $1 \leq f_X(\sigma_\Lambda) < \infty$ depends only on $\sigma_\Lambda$ on \hl{ $X \partial$ (in particular, it is independent of $\abs{\Lambda}$)}. 
%and $f$ is uniformly bounded away from $1$ and $\infty$.
\end{assumption}

This form of quasi-factorization is stronger than the one that appeared in \cite{[CLP18a]}, since another conditional relative entropy appears in the LHS of the inequality, instead of a relative entropy as in the main results of quasi-factorization of the aforementioned paper. Moreover, the error term depends only on the second state, as in usual quasi-factorization results, but only on its value in the regions where the relative entropies are being conditioned and their boundaries. In particular, it is independent of the size of the chain.

As in the case of Assumption \ref{assump:1}, we will see in Subsection \ref{subsec:assump2} that only a weaker condition is necessary for Theorem \ref{thm:mainresult} to hold true, since this condition will only appear in the proof concerning sets $X$ of small size. 

Let us now state and prove  the main result of this paper, namely the positivity of the MLSI constant for the heat-bath dynamics in 1D.

\begin{theorem}\label{thm:mainresult}
Let $\Lambda \subset \subset \Z$ be a finite chain. Let $\Phi : \Lambda \rightarrow \A_\Lambda$ be a $k$-local commuting potential, $H_\Lambda = \un{x \in \Lambda}{\sum} \Phi (x) $ its corresponding  Hamiltonian, and denote by $\sigma_\Lambda$ its Gibbs state. Let $\LL_\Lambda^*$ be the generator of  the heat-bath dynamics. Then, if Assumptions \ref{assump:1} and \ref{assump:2} hold, the MLSI constant of $\LL_\Lambda^*$ is strictly positive and independent of $\abs{\Lambda}$.
 
\end{theorem}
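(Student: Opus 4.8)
The plan is to follow the five-point strategy laid out in the introduction, with the recursive geometric argument (point 4) doing the heavy lifting. The goal is a lower bound $\alpha(\LL_\Lambda^*) \geq c > 0$ with $c$ independent of $|\Lambda|$. The overall mechanism: fix a length scale $\ell$ (depending only on $k$, $K_1$, $K_2$ and the size-$\ell$ quasi-factorization constant) and show that the global relative entropy $D(\rho_\Lambda||\sigma_\Lambda)$ can be controlled by a sum of entropy productions in blocks of size $\mathcal{O}(\ell)$, up to multiplicative constants that do not compound with system size. First I would set up a partition of the chain $\Lambda$ into consecutive segments, and for a fixed-size region $A_0$ prove the ``conditional'' base case: combining Lemma~\ref{lemma:EP>D} (which gives $\operatorname{EP}_x \geq \entA x \rla \sla$) with Assumption~\ref{assump:2}, one gets $\operatorname{EP}_{A_0}(\rho_\Lambda) = \sum_{x\in A_0}\operatorname{EP}_x(\rho_\Lambda) \geq \sum_{x\in A_0}\entA x\rla\sla \geq f_{A_0}(\sigma_\Lambda)^{-1}\, \entA{A_0}\rla\sla$, i.e. $\alpha_\Lambda(\LL_{A_0}^*) \geq \tfrac{1}{2 f_{A_0}(\sigma_\Lambda)} > 0$. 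This is point (5), and together with Lemma~\ref{lemma:Zegarlinski}/Theorem~\ref{thm:kernelA=kernelx} (which let us pass freely between the single-site generator $\LL_{A_0}^*$ and the block generator $\widetilde{\LL}_{A_0}^*$ without changing the kernel) it gives a positive conditional MLSI constant on any fixed-size region.

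Next comes the recursion. Split $\Lambda = AB \cup BC$ into two overlapping halves with an overlap region $B$ of width $\sim \ell$ separating a left part from a right part (or, more carefully, split off segments of controlled size). Apply the \emph{quasi-factorization of the relative entropy} from \cite{[CLP18a]} (point 3): $D(\rho_\Lambda||\sigma_\Lambda) \leq \xi(\sigma_\Lambda)\big(\entA{(AB)^c\text{-type region}}{\rho_\Lambda}{\sigma_\Lambda} + \entA{(BC)^c\text{-type region}}{\rho_\Lambda}{\sigma_\Lambda}\big)$ where the multiplicative constant $\xi(\sigma_\Lambda) = (1-2\|\cdot\|)^{-1}$ comes from the mixing term in Assumption~\ref{assump:1} evaluated at separation $\sim \ell$ — and here is the crucial point, we choose $\ell$ large enough (using the exponential decay $K_1 e^{-K_2 \ell}$) that $\xi$ is close to $1$, in particular uniformly bounded. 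Then bound each conditional relative entropy on a large region by the conditional entropy production on that region (using the conditional MLSI on sub-blocks, glued via Lemma~\ref{lemma:Zegarlinski} and Corollary~\ref{cor:kernelAB=kernelA} which guarantees the kernels match up so the constants don't degenerate), and iterate: at each level of the recursion the region shrinks by a constant factor while the multiplicative constant picks up only a bounded factor. After $\mathcal{O}(\log|\Lambda|)$ steps one reaches base regions of size $\mathcal{O}(\ell)$. The key subtlety — and this is why the geometric/recursive bookkeeping matters — is to arrange the recursion so that the product of all accumulated constants telescopes to something bounded (e.g. by using that the overlaps at each scale are at a distance that grows, so the associated $\xi$'s form a convergent product, à la Martinelli–Olivieri), rather than multiplying to something exponential in the number of levels.

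Finally, assemble: $D(\rho_\Lambda||\sigma_\Lambda) \leq \Xi \sum_{\text{base blocks } A_0} \entA{A_0}\rla\sla \leq \Xi \sum_{A_0} 2 f_{A_0}(\sigma_\Lambda) \operatorname{EP}_{A_0}(\rho_\Lambda) \leq \Xi \cdot 2 f_{\max} \cdot C_{\ell}\, \operatorname{EP}_\Lambda(\rho_\Lambda)$, where $\Xi$, $f_{\max}$, $C_\ell$ are all $|\Lambda|$-independent (the last step sums the block entropy productions back into the global one, with bounded overcounting because each site lies in $\mathcal{O}(1)$ blocks). This yields $\alpha(\LL_\Lambda^*) \geq (4\,\Xi\, f_{\max}\, C_\ell)^{-1} > 0$. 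I expect the main obstacle to be precisely the recursion design in the middle paragraph: one must choose the sequence of partitions and invoke Assumption~\ref{assump:1} at geometrically growing separations so that the multiplicative errors from quasi-factorization form a \emph{uniformly convergent} product, and simultaneously ensure — via Theorem~\ref{thm:kernelA=kernelx} and Corollary~\ref{cor:kernelAB=kernelA} — that passing from conditional relative entropies to conditional entropy productions at intermediate scales does not introduce a scale-dependent constant. Controlling how the conditional entropy production on a large region relates to that on its constituent sub-blocks (the analogue of Lemma~\ref{lemma:Zegarlinski} but for the \emph{entropy} rather than the Dirichlet form) is the technically delicate ingredient, and I would expect it to require a separate lemma proved via the structure of quantum Markov chains for $\sigma_\Lambda$ (Equation~\eqref{eq2}).
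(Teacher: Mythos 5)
Your base case and your final assembly are both correct and match the paper: combining Lemma~\ref{lemma:EP>D} with Assumption~\ref{assump:2} to get $\alpha_\Lambda(\LL_{A_0}^*)\geq \tfrac{1}{2f_{A_0}(\sigma_\Lambda)}$ on a fixed-size block is exactly the paper's Step 4, and summing block entropy productions into the global one with bounded overcounting (using $\operatorname{EP}_A=\sum_{x\in A}\operatorname{EP}_x$ and positivity of each term) is exactly its Step 3. The genuine gap is in your middle paragraph, the recursion. To iterate a halving scheme you must, after the first split, quasi-factorize a \emph{conditional} relative entropy $\entA{AB}{\rho_\Lambda}{\sigma_\Lambda}$ on a large \emph{connected} region into two adjacent or overlapping halves with a controlled multiplicative constant. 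No such tool is available from the stated hypotheses: the quasi-factorization of \cite{[CLP18a]} that you invoke has the \emph{full} relative entropy on the left-hand side, Assumption~\ref{assump:2} is only used (and only needs to hold) for sets of bounded size, and the inequality $\entA{AB}{\rho_\Lambda}{\sigma_\Lambda}\leq f_{AB}(\sigma_\Lambda)(\entA{A}{\rho_\Lambda}{\sigma_\Lambda}+\entA{B}{\rho_\Lambda}{\sigma_\Lambda})$ for large adjacent $A,B$ is precisely the open problem the paper records as \eqref{conjecture} and Question 4 of the conclusions. Even granting such a tool, with a fixed overlap width the product of the factors $\xi$ over $\mathcal{O}(\log\abs{\Lambda})$ levels grows polynomially in $\abs{\Lambda}$, and making the overlaps grow with the level (your Martinelli--Olivieri fix) requires exactly the stronger quasi-factorization that is not assumed.

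The paper sidesteps recursion entirely with a single-level two-coloring. It covers the chain as $\Lambda=A\cup B$ with $A=\bigcup_{i}A_i$ and $B=\bigcup_i B_i$ \emph{disjoint} unions of blocks of fixed size $2(k+l)-1$ overlapping pairwise in regions of width $l$, applies the \cite{[CLP18a]} quasi-factorization \emph{once} to get $\ent{\rho_\Lambda}{\sigma_\Lambda}\leq (1-2\norm{h(\sigma_{A^cB^c})}_\infty)^{-1}(\entA{A}{\rho_\Lambda}{\sigma_\Lambda}+\entA{B}{\rho_\Lambda}{\sigma_\Lambda})$ --- this is the only place Assumption~\ref{assump:1} enters, for the unions of segments $A^c,B^c$ at one fixed separation $l$ with $K_1e^{-K_2l}<\tfrac12$ --- and then splits $\entA{A}{\rho_\Lambda}{\sigma_\Lambda}\leq\sum_i\entA{A_i}{\rho_\Lambda}{\sigma_\Lambda}$ with \emph{no} multiplicative error, via strong subadditivity and the quantum Markov chain identity of Proposition~\ref{prop:identityQMC}. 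That error-free splitting is possible only because the $A_i$ within one color class are separated by more than the interaction range, so their boundaries do not overlap; it is exactly the step that fails for adjacent halves of a connected region, which is why your recursive route cannot be repaired with the tools at hand. Only then is Assumption~\ref{assump:2} invoked, on the fixed-size $A_i$ and $B_i$, as in your base case.
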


The proof of this result will be split into four parts. First, we need to define a splitting of the chain into two (not connected) subsets $A,B \subset \Lambda$, with a certain geometry so that 1) they cover the whole chain, 2) their intersection is large enough and 3) each one of them is composed of smaller segments of fixed size, but large enough to contain two non-overlapping half-boundaries of two other segments, respectively.

\begin{figure}
\begin{center}
\includegraphics[scale=0.32]{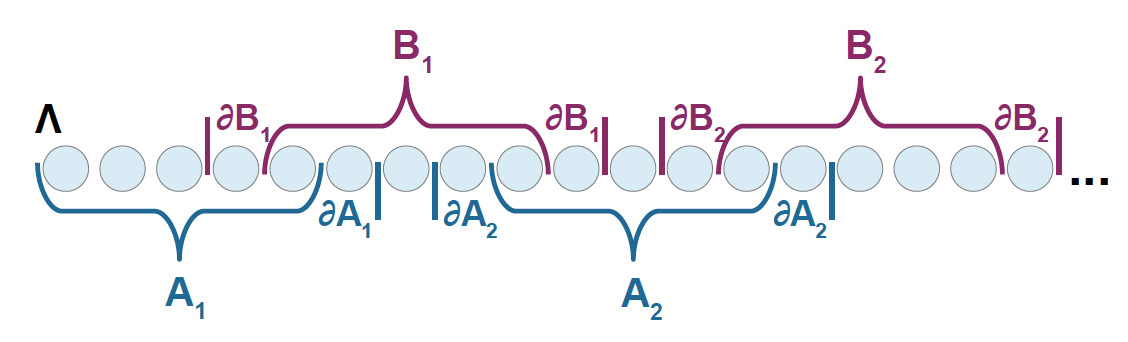}
\end{center}
\caption{Splitting of $\Lambda$ in fixed-sized subsets $A_i$ and $B_i$, of which we just show the first four terms. We reduce for simplicity to the case $k=2, l=1$.}
\label{fig:step1}
\end{figure}

More specifically, fix $l \in \N$ so that $\ds K_1 e^{-K_2 l} < \frac{1}{2}$, for $K_1$ and $K_2$ the constants appearing in the mixing condition, and consider the splitting of $\Lambda$ given in terms of $A$ and $B$ verifying the following conditions (see Figure \ref{fig:step1}):
\begin{enumerate}
\item $\Lambda= A \cup B$.

\item   $\ds A=\un{i=1}{\ov{n}{\bigcup}} A_i$ and $\ds B=\un{j=1}{\ov{n}{\bigcup}} B_j$.

\item $\abs{A_i \cap B_i}= \abs{B_i \cap A_{i+1}}=l$ for every $i=1, \ldots, n-1$.

\item $\abs{A_i}= \abs{B_j} = 2 (k+l)- 1$ for all $i,j=1, \ldots , n$, where $k$ comes from the $k$-locality of the Hamiltonian.
\end{enumerate}

Note that the total size of $\Lambda$ is then $n(4k+2l-2)+l$ sites. Hence, fixing $l$ and $k$ as already mentioned, we can restrict our study here to lattices of size $n(4k+2l-2)+l$  for every $n \in \N$, as we will be interested in the scaling properties in the limit.

In the first step, considering this decomposition of the chain, we show an upper bound for the relative entropy of two states on $\Lambda$ (the second of them being the Gibbs state) in terms of the sum of two conditional relative entropies in $A$ and $B$, respectively, and a multiplicative error term that measures how far the reduced state $\sigma_{A^c B^c}$ is from a tensor product between $A^c$ and $B^c$, where $A^c:= \Lambda \setminus A$ and $B^c:=\Lambda \setminus B$.

\begin{step}\label{step:1}

For the regions $A$ and $B$ defined above, and for any $\rho_\Lambda \in \SSS_\Lambda$, we have
\begin{equation}\label{eq:step1}
\ent {\rho_\Lambda} {\sigma_\Lambda} \leq \frac{1}{1-2 \norm{h(\sigma_{A^c B^c})}_\infty} \left[ \entA A {\rho_\Lambda} {\sigma_\Lambda} +  \entA B {\rho_\Lambda} {\sigma_\Lambda}   \right],
\end{equation}
where 
\begin{equation*}
h(\sigma_{A^c B^c} )= \sigma_{A^c}^{-1/2} \otimes \sigma_{B^c}^{-1/2} \,  \sigma_{A^c B^c} \,  \sigma_{A^c}^{-1/2} \otimes \sigma_{B^c}^{-1/2} - \identity_{A^c B^c}.
\end{equation*}
\end{step}

\begin{proof}
In \cite{[CLP18]}, the authors showed that, given a bipartite space $\hs_{XY}= \hs_X \otimes \hs_Y$, for every $\rho_{XY}, \sigma_{XY} \in \SSS_{XY}$ one has 
\begin{equation}\label{eq:step1-1}
(1+ 2 \norm{h(\sigma_{XY})}_\infty) \ent {\rho_{XY}} {\sigma_{XY}} \geq \ent {\rho_{X}} {\sigma_{X}}  + \ent {\rho_{Y}} {\sigma_{Y}}, 
\end{equation}
for $h(\sigma_{XY})$ given by
\begin{equation*}
h(\sigma_{XY} )= \sigma_X^{-1/2} \otimes \sigma_Y^{-1/2} \,  \sigma_{XY} \,  \sigma_X^{-1/2} \otimes \sigma_Y^{-1/2} - \identity_{XY}.
\end{equation*}

Considering a tripartite space $\hs_{XZY}$ and $\rho_{XZY}, \sigma_{XZY} \in \SSS_{XZY}$, inequality (\ref{eq:step1-1}) was shown to be equivalent to 
\begin{equation*}
(1+ 2 \norm{h(\sigma_{XY})}_\infty) \ent {\rho_{XZY}} {\sigma_{XZY}} \geq \ent {\rho_{X}} {\sigma_{X}}  + \ent {\rho_{Y}} {\sigma_{Y}}
\end{equation*}
in \cite{[CLP18a]}, and recalling the definition for the conditional relative entropy in $XZ$ and $ZY$, respectively, this inequality can be rewritten as
\begin{equation*}
\ent {\rho_{XZY}} {\sigma_{XZY}} \leq \frac{1}{1-2 \norm{h(\sigma_{XY})}_\infty} \left[ \entA {XZ} {\rho_{XZY}} {\sigma_{XZY}} +  \entA {ZY} {\rho_{XZY}} {\sigma_{XZY}}  \right].
\end{equation*}

Finally, inequality (\ref{eq:step1}) follows just by replacing in this expression $\Lambda=XZY, A \cap B = Z, A^c = Y$ and $B^c= X$. 
\end{proof}

For the second step of the proof, we focus on one of the two components of $\Lambda$, e.g. $A$, and upper bound the conditional relative entropy of two states in the whole $A$ in terms of the sum of the conditional relative entropies in its fixed-size small components. In this case, there is no multiplicative error term, due to the structure of quantum Markov chain of the Gibbs state between one component, its boundary, and the complement, and the fact that the boundaries of these components do not overlap.

\begin{step}\label{step:2}
For $\ds A=\un{i=1}{\ov{n}{\bigcup}} A_i$ defined as above (see Figure \ref{fig:step2}), and for every $\rho_\Lambda \in \SSS_\Lambda$, the following holds:
\begin{equation}\label{eq:step2}
\entA A {\rho_\Lambda} {\sigma_\Lambda} \leq \un{i=1}{\ov{n}{\sum}} \entA {A_i} {\rho_\Lambda} {\sigma_\Lambda}. 
\end{equation}

\end{step}

\begin{figure}
	\begin{center}
		\includegraphics[scale=0.3]{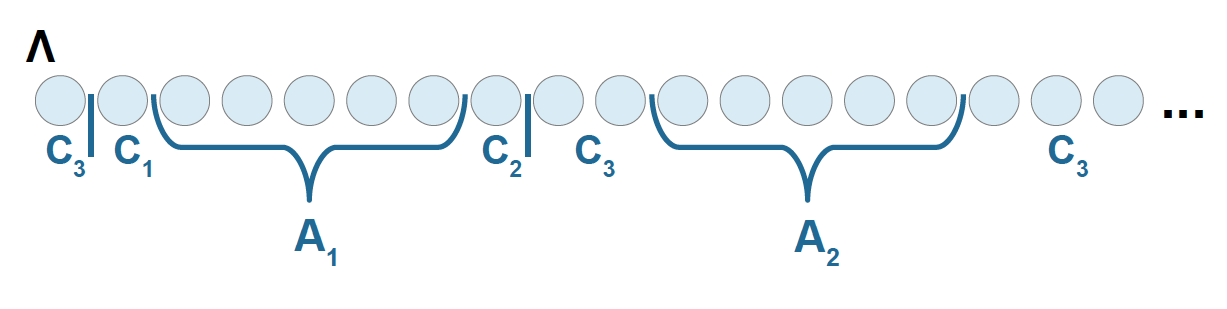}
	\end{center}
	\caption{Splitting of $A$ in fixed-sized subsets $A_i$ so that their boundaries do not overlap. For simplicity we restrict to the case $k=2, l=1$.}
	\label{fig:step2}
\end{figure}

\begin{proof}
	Without loss of generality, we assume that $A=A_1\cup A_2$ (the general result follows by induction in the number of subsets $A_i$). For convenience, we denote $ \ent {\rho_A} {\sigma_A}$, resp. $\entA A {\rho_\Lambda} {\sigma_\Lambda}$, by $D(A)$, resp. $D_A(\Lambda)$, since we are considering the same states $\rho_\Lambda$ and $\sigma_\Lambda$ in every (conditional) relative entropy. 
	
	With this notation, it is enough to show:
\begin{equation}\label{eq:step2-1}
D_A(\Lambda)-D_{A_1}(\Lambda)-D_{A_2}(\Lambda) \leq 0.
\end{equation}

	In this step, we are going to consider three different regions: $A_1, \partial A_1 $ and $\left( A_{1 } \partial \right)^c$, but we could argue analogously for $A_2$. We write $C_1:= \left( \partial A_1 \right)_{\text{Left}}$ and $C_2:= \left( \partial A_1 \right)_{\text{Right}}$, the left and right components of $\partial A_1$, respectively, and $C_3$ all the sites that appear left to $C_1$, between $C_2$ and $A_2$, and right to $A_2$, i.e., $C_3:=(A_1\partial \cup A_2)^c$. We further write $C:= \un{m=1}{\ov{3}{\bigcup}} C_m$. Note that $A_1$, $A_2$ and $C$ are disjoint and that $A_1 \cup A_2 \cup C= \Lambda$. Then, 
	\begin{align}
	 D_A(\Lambda)-D_{A_1}(\Lambda)-D_{A_2}(\Lambda)&=D(\Lambda)-D(C)-D(\Lambda)+D(A_2\cup C)-D(\Lambda)+D(A_1\cup C)\nonumber\\
	 &= -D(C)-D(\Lambda)+D(A_2\cup C)+D(A_1\cup C)\nonumber\\
	 &=\tr\rho_\Lambda(-\log\rho_\Lambda-\log\rho_C+\log\rho_{A_1C}+\log\rho_{A_2C})\nonumber\\
	 & \, \, \, \,+\tr\rho_\Lambda(\log\sigma_C-\log\sigma_{A_2C}+\log\sigma_{\Lambda}-\log\sigma_{A_1C}) \nonumber\\
	 &=S(\rho_\Lambda)+S(\rho_C)-S(\rho_{A_1C})-S(\rho_{A_2C})\nonumber\\
	 &\, \, \, \,+\tr\rho_\Lambda(\log\sigma_C-\log\sigma_{A_2C}+\log\sigma_{\Lambda}-\log\sigma_{A_1C})\nonumber \\
	 &\le\tr\rho_\Lambda(\log\sigma_C-\log\sigma_{A_2C}+\log\sigma_{\Lambda}-\log\sigma_{A_1C}),\label{eq3}
	\end{align} 
	where the last inequality follows from strong subadditivity of the von Neumann entropy. Now, from the structure of quantum Markov chain of the Gibbs state and by Proposition \ref{prop:identityQMC}, the sum of logarithms vanishes. 
	
% 	Now, from the structure of quantum Markov chain of the Gibbs state, and in particular from Equation (\ref{eq2}) applied to $\tilde{A}=A_1$, $\tilde{B}=\partial A_1$ and $\tilde{C}=\left( A_{1 } \partial \right)^c$, one can decompose the state $\sigma_\Lambda$ into
% 	\begin{align*}
% 		\sigma_\Lambda=\bigoplus_j q_j\,\sigma_{A_1(\partial a_1)_j^L}\otimes\sigma_{(\partial a_1)_{j}^R \tilde{C}},
% 	\end{align*}
% and with the notation mentioned above	
% \begin{align}
% 		\sigma_\Lambda=\bigoplus_j q_j\,\sigma_{A_1(c_1)_j^L (c_2)_j^R}\otimes\sigma_{(c_1)_j^R (c_2)_j^L A_2 C_3},
% 	\end{align}
% and thus the expression in the last line of  (\ref{eq3}) vanishes by Proposition \ref{prop:identityQMC}.

\end{proof}

Combining expressions (\ref{eq:step1}) and (\ref{eq:step2}) from Steps \ref{step:1} and \ref{step:2}, respectively, we get
\begin{equation}\label{eq:step3-1}
\ent {\rho_\Lambda} {\sigma_\Lambda} \leq \frac{1}{1- 2 \norm{h(\sigma_{A^c B^c})}_\infty} \un{i=1}{\ov{n}{\sum}} \left[  \entA {A_i} {\rho_\Lambda} {\sigma_\Lambda} + \entA {B_i} {\rho_\Lambda} {\sigma_\Lambda}   \right],
\end{equation}

In the third step of the proof, using the first two, we get a lower bound for the global MLSI constant of the whole chain in terms of the conditional MLSI constants on the aforementioned fixed-sized regions $A_i$ and $B_i$. For that, we need to consider that Assumption \ref{assump:1} holds true.

\begin{step}\label{step:3}
If Assumption \ref{assump:1} holds, we have:
\begin{equation*}\label{eq:step3}
\alpha (\LL_\Lambda^*) \geq \tilde{K} \un{i \in \qty{1, \ldots n}}{\text{min}} \qty{\alpha_\Lambda(\LL_{A_i}^*), \alpha_\Lambda(\LL_{B_i}^*)},
\end{equation*}
where $\ds \tilde{K}= \frac{1- 2K_1 e^{-K_2 l}}{2}$ and $\alpha_\Lambda(\LL_{A_i}^*)$, resp. $\alpha_\Lambda(\LL_{B_i}^*)$, denotes the conditional MLSI constant of $\LL_\Lambda^*$ on $A_i$, resp. $B_i$, as introduced in Definition \ref{def:conditionalMLSIconstant}.
\end{step}

\begin{proof}

By Equation \eqref{eq:step3-1} and Assumption \ref{assump:1}, we have
\begin{equation}\label{eq:step3-2}
\ent {\rho_\Lambda} {\sigma_\Lambda} \leq \frac{1}{1- 2 K_1 e^{-K_2 l}} \un{i=1}{\ov{n}{\sum}} \left[  \entA {A_i} {\rho_\Lambda} {\sigma_\Lambda} + \entA {B_i} {\rho_\Lambda} {\sigma_\Lambda}   \right].
\end{equation}

Now, by virtue of the definition of conditional MLSI constants on each $A_i$ and $B_i$, it is clear that
\begin{align*}
& \ent {\rho_\Lambda} {\sigma_\Lambda} \nonumber \\
&  \leq \frac{1}{1- 2  K_1 e^{- K_2 l}} \un{i=1}{\ov{n}{\sum}} \left[  \frac{-\tr[ \LL_{A_i}^* (\rho_\Lambda) \left(  \log \rho_\Lambda - \log \sigma_\Lambda \right) ]}{2 \alpha_\Lambda(\LL_{A_i^*})}+ \frac{-\tr[ \LL_{B_i}^* (\rho_\Lambda) \left(  \log \rho_\Lambda - \log \sigma_\Lambda \right) ]}{2 \alpha_\Lambda(\LL_{B_i^*})}  \right] \nonumber \\
&  \leq \frac{1}{1- 2 K_1 e^{- K_2 l}} \frac{1}{2 \un{i \in \qty{1, \ldots , n}}{\text{min}} \qty{\alpha_\Lambda(\LL_{A_i}^*), \alpha_\Lambda(\LL_{B_i}^*)}}   \un{i=1}{\ov{n}{\sum}} \left[  \text{EP}_{A_i}(\rho_\Lambda) + \text{EP}_{B_i}(\rho_\Lambda)  \right]\nonumber. 
\end{align*} 

Therefore,
\begin{align}\label{eq:step3-3}
& 2 \un{i \in \qty{1, \ldots, n}}{\text{min}} \qty{\alpha_\Lambda(\LL_{A_i}^*), \alpha_\Lambda(\LL_{B_i}^*)}\, \ent {\rho_\Lambda} {\sigma_\Lambda} \nonumber \\
&  \phantom{asdadsadads}  \leq \frac{1}{1- 2 K_1 e^{- K_2 l}} \left[ -\tr \left[ \left(\LL_\Lambda^*(\rho_\Lambda) + \LL_{A_n\cap B_n}^*(\rho_\Lambda) \right)  \left(\log \rho_\Lambda - \log \sigma_\Lambda \right) )  \phantom{\un{i=1}{\ov{n-1}{\sum}}} \right. \right. \nonumber  \\
& \phantom{asdadsadadsdadasasdasdasdads}  +  \left. \left.  \un{i=1}{\ov{n-1}{\sum}} \left(  \LL_{A_i \cap B_i}^*(\rho_\Lambda) + \LL_{A_{i+1} \cap B_i}^*(\rho_\Lambda)  \right) \left(\log \rho_\Lambda - \log \sigma_\Lambda \right)  \right]  \right] \nonumber  \\
& \phantom{asdadsadads}   \leq \frac{2}{1- 2  K_1 e^{-K_2 l}}  \left[- \tr[ \LL_\Lambda^*(\rho_\Lambda)  \left(\log \rho_\Lambda - \log \sigma_\Lambda \right)  ] \right],
\end{align} 
where we have used the locality of the Lindbladian and the positivity of the entropy productions.
  
Finally, note that the last term of expression (\ref{eq:step3-3}) is the entropy production of $\rho_\Lambda$. Hence, considering the quotient of this term over the relative entropy of the LHS, and taking infimum over $\rho_\Lambda \in \SSS_\Lambda$, we get
\begin{equation*}
\alpha (\LL_\Lambda^*)  = \un{\rho_\Lambda \in \SSS_\Lambda}{\text{inf}} \frac{\text{EP}(\rho_\Lambda)}{2 \ent {\rho_\Lambda} {\sigma_\Lambda}}  \geq \tilde{K} \un{i \in \qty{1, \ldots n}}{\text{min}} \qty{\alpha_\Lambda(\LL_{A_i}^*), \alpha_\Lambda(\LL_{B_i}^*)},
\end{equation*}
where $ \ds \tilde{K}:= \frac{1- 2 K_1 e^{-K_2 l }}{2}>0$.

\end{proof}

Finally, in the last step of the proof, we show that the conditional MLSI constants on every $A_i$ and $B_i$ are strictly positive and, additionally, independent of the size of $\Lambda$. For that, we need to suppose that Assumption \ref{assump:2} holds true. We also make use of some technical results from the previous section.

\begin{step}\label{step:4}
If Assumption \ref{assump:2} holds, for any $A_i$ defined as above we have
\begin{equation*}
\alpha_\Lambda \left(\LL_{A_i}^*\right)\geq C_{A_i}(\sigma_\Lambda) >0,
\end{equation*}
with $C_{A_i}(\sigma_\Lambda)$ independent of the size of $\Lambda$, and analogously for any $B_i$.

\end{step}

\begin{proof}
Consider $ X \in \qty{A_i , B_i \, : \, 1 \leq i \leq n  }$. Let us first recall that the conditional MLSI constant in $X$ is given by
\begin{align*}
\alpha_\Lambda (\LL_X^*) & = \un{\rla \in \SSS_\Lambda}{\text{inf}}  \frac{\text{EP}_X(\rho_\Lambda)}{2 \entA X \rla \sla}\\
& =  \un{\rla \in \SSS_\Lambda}{\text{inf}}  \frac{- \un{x \in X}{\sum}  \tr[\LL_x^*(\rla) \left( \log \rla - \log \sla  \right)  ] }{2 \entA X \rla \sla} .
\end{align*}

By virtue of Lemma \ref{lemma:EP>D}, we have
\begin{equation*}
\text{EP}_x(\rla) \geq \entA x \rla \sla
\end{equation*}
for every $x \in X$, and, thus,
\begin{equation}\label{eq:condMLSI>quotCRE}
\alpha_\Lambda (\LL_X^*) \geq \un{\rla \in \SSS_\Lambda}{\text{inf}}  \frac{ \un{x \in X}{\sum} \entA x \rla \sla}{2 \entA X \rla \sla}\,.
\end{equation}

Note that the quotient in the RHS of (\ref{eq:condMLSI>quotCRE}) is well-defined, since we have seen in Corollary \ref{cor:kernelA=kernelx} that the kernel of $\entA X \rla \sla$ coincides with the intersection of the kernels of $\entA x \rla \sla$ for every $x \in X$. Furthermore, because of Assumption \ref{assump:2}, we obtain the following lower bound for the conditional MLSI constant
\begin{equation}
\alpha_\Lambda (\LL_X^*) \geq \frac{1}{2 f_{X}(\sigma_{\Lambda})}\,,
\end{equation}
which is strictly positive, only depends on $\sigma_\Lambda$ and does depend on the size of $\Lambda$.

\end{proof}

Finally, putting together Steps \ref{step:1}, \ref{step:2}, \ref{step:3} and \ref{step:4}, we conclude the proof of Theorem \ref{thm:mainresult}.

\section{Mixing condition and strong quasi-factorization}\label{sec:assumptions}

\subsection{Mixing condition}\label{subsec:assump1}

In this subsection, we will elaborate on the mixing condition introduced in Assumption \ref{assump:1} and provide sufficient conditions for it to hold. Consider $\Lambda \subset \subset \Z$ a finite chain and  $A, B \subset \Lambda$  as in the splitting of $\Lambda$ in the proof of Theorem \ref{thm:mainresult} (see Figure \ref{fig:step1}). Denote $C:= B^c$ and $D:= A^c$, so that they can be expressed as the union of disjoint segments, $C = \un{i=1}{\ov{n}{\bigcup}} C_i$ and $D = \un{j=1}{\ov{n}{\bigcup}} D_j$, respectively. For every $i=1, \ldots, n$, \hl{respectively $i=1, \ldots , n-1$}, denote by $E_i$, resp. $F_i$, the connected set that separate $C_i$ from $D_i$, resp. $D_i$ from $C_{i+1}$ (see Figure \ref{fig:assump1}). Note that, because of the construction of $A$ and $B$ described in the previous section, every $E_i$ and $F_i$ are composed of \hl{$l$} sites, \hl{and every $C_i$ and $D_i$ of, at least, $2k-1$ sites}.

\begin{figure}
\begin{center}
\includegraphics[scale=0.45]{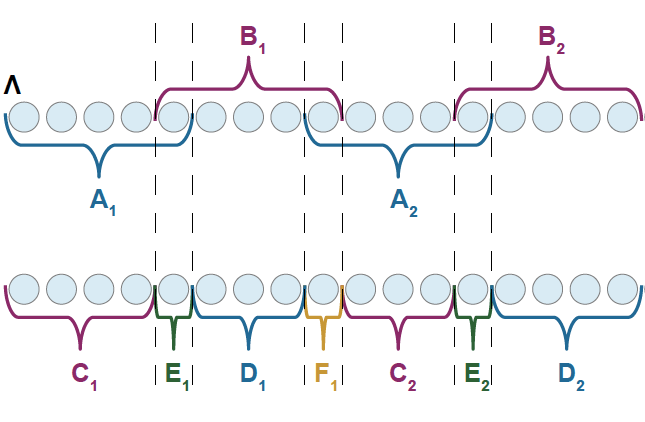}
\end{center}
\caption{Notation introduced in the splitting of $\Lambda$ into size-fixed $A_i$ and $B_i$ for the discussion in Assumption \ref{assump:1}. For simplicity we restrict to the case $k=2, l=1$. }
\label{fig:assump1}
\end{figure}

 Let  $\sla$ be the Gibbs state of a $k$-local commuting Hamiltonian. Then, with this construction, Assumption \ref{assump:1} can be read as the existence of positive constants $K_1, K_2$ independent of $\Lambda$ for which the following holds:

\begin{equation}\label{eq:Assump1CD}
\tag{A1}
\norm{\sigma_C^{-1/2} \otimes \sigma_D^{-1/2} \,  \sigma_{CD} \,  \sigma_C^{-1/2} \otimes \sigma_D^{-1/2} - \identity_{CD}}_\infty \leq K_1 e^{- K_2 l}, 
\end{equation}
where $l =d(C, D)$.

This exponential decay of correlations on the Gibbs state is similar to certain forms of decay of correlations of states that frequently appear in the literature of both classical and quantum spin systems. In the latter, this is closely related, for instance, to the concept of \textit{LTQO (Local Topological Quantum Order)}  \cite{[MP13]}, or the \textit{local indistinguishability} that was introduced in  \cite{[KB14]}.

The main difference with the (strong) \textit{mixing condition} of the classical case \cite{[D02]} lies in the fact that they considered a decay of correlations with the distance between two connected regions (in particular, rectangles), whereas in our case we have a finite union of regions of that kind. The fact that the regions are connected is essential for some properties that can be derived from the \textit{Dobrushin condition} (\cite{[DS87]} \cite{[Mar99]} \cite[Condition III.d]{[DS85]}). 

Nevertheless, the mixing condition that we need to assume for the proof of Theorem \ref{assump:1} to hold is actually a bit weaker. Indeed, the only necessary thing is that we can bound the LHS of (\ref{eq:Assump1CD}) by something that is strictly smaller than $1/2$, i.e., 
\begin{equation}\label{eq:Assump1CDweaker}
\tag{A1-weaker}
\norm{\sigma_C^{-1/2} \otimes \sigma_D^{-1/2} \,  \sigma_{CD} \,  \sigma_C^{-1/2} \otimes \sigma_D^{-1/2} - \identity_{CD}}_\infty < \frac{1}{2},
\end{equation}

It is clear that (\ref{eq:Assump1CD}) implies (\ref{eq:Assump1CDweaker}), as one can always choose $l$ big enough. This new condition is a bit more approachable and we will show below a couple of examples of systems satisfying it. First, we consider the situation in which the fixed point is close enough to the normalized identity (in other words, we consider high-enough temperature, depending logarithmically of the system size). Then, we can prove the following result.  

\begin{prop}\label{prop:HighTemperatureAssupmtion1}
Let $\Lambda$ be a finite chain and consider a splitting on it as the one of Figure \ref{fig:assump1}. Let us assume that $\sigma_\Lambda \in \SSS_\Lambda $ is close to the normalized identity in the following sense
\begin{equation}\label{eq:HighTemperature}
\norm{\sigma_\Lambda - \frac{\identity_\Lambda}{d_\Lambda}}_\infty < \varepsilon,
\end{equation}
for a certain small $\varepsilon \geq 0$. Then,  (\ref{eq:Assump1CDweaker}) holds.
\end{prop}

\begin{proof}
Note that Equation \eqref{eq:HighTemperature} is equivalent to
\begin{equation*}
(1- \varepsilon)\frac{\identity_\Lambda}{d_\Lambda}  < \sigma_\Lambda < (1 + \varepsilon) \frac{\identity_\Lambda}{d_\Lambda},
\end{equation*}
and, furthermore, given $C,D \subset \Lambda$ with $C \cap D = \emptyset$ (independently of the geometry) the following holds:
\begin{align*}
(1- \varepsilon)\frac{\identity_C}{d_C}  &< \sigma_C < (1 + \varepsilon) \frac{\identity_C}{d_C},
\end{align*}
as well as analogously for $D$ and $CD$, respectively. Recalling now that for every $A$, $B$ positive definite matrices the following equivalence holds: $A \leq B \Leftrightarrow A^{-1}\geq B^{-1}$, we have
\begin{equation*}
\frac{d_{CD}}{(1-\varepsilon)^2} \identity_{CD} > \sigma_C^{-1/2} \otimes \sigma_D^{-1/2} >  \frac{d_{CD}}{(1+\varepsilon)^2} \identity_{CD},
\end{equation*}
and thus
\begin{equation*}
\frac{(1-\varepsilon)}{(1+\varepsilon)^2} \identity_{CD} < \sigma_C^{-1/2} \otimes \sigma_D^{-1/2} \, \sigma_{CD} \, \sigma_C^{-1/2} \otimes \sigma_D^{-1/2}  <  \frac{(1+\varepsilon)}{(1-\varepsilon)^2} \identity_{CD}.
\end{equation*}

Therefore, it is easy to see that
\begin{equation*}
\norm{\sigma_C^{-1/2} \otimes \sigma_D^{-1/2} \, \sigma_{CD} \, \sigma_C^{-1/2} \otimes \sigma_D^{-1/2} - \identity_{CD}}_\infty < \text{max} \qty{\frac{3 \varepsilon + \varepsilon^2}{(1+\varepsilon)^2},\frac{3 \varepsilon - \varepsilon^2}{(1-\varepsilon)^2} },
\end{equation*}
where indeed this maximum is always attained by the second element, which constitutes a strictly decreasing function with $\varepsilon$. Hence, to conclude it is enough to find the values of $\varepsilon\geq 0$ for which 
\begin{equation*}
\frac{3 \varepsilon - \varepsilon^2}{(1-\varepsilon)^2} <\frac{1}{2},
\end{equation*}
which holds for $\varepsilon < 0.13$.  
\end{proof}

\begin{remark}
\hl{The previous result holds in particular for classical systems at high-enough temperature. Furthermore, note that the same proof allows to show that $(\ref{eq:Assump1CDweaker})$  holds for systems whose fixed point is close enough to a tensor product between $C$ and $D$ (with a distance scaling logarithmically with the system size). }
\end{remark}

Next, with a much more elaborate but similar in spirit proof, we can show that states with a defect at site $i$ so that the interaction is bigger there, but interactions decay away from that site, also satisfy (\ref{eq:Assump1CDweaker}).

\begin{prop}\label{prop:SpinDefect}
Let $\Lambda$ be a finite chain and consider a splitting on it as the one of Figure \ref{fig:assump1}. If we assume the following condition:
\begin{equation*}
\un{i=1}{\ov{n}{\prod}} \gamma_i^2 > \frac{2}{3} \un{i=1}{\ov{n}{\prod}} \delta_i^2 >\frac{1}{3},
\end{equation*}
where we are writing
\begin{itemize}
\item $\gamma_i:=\gamma^{(i)}_{CE} \, \gamma^{(i)}_{ED}\,  \gamma^{(i)}_{DF} \, \gamma^{(i)}_{FC}$, for $i=1, \ldots, n-1$, 
\item $\delta_i:=\delta^{(i)}_{CE}\,  \delta^{(i)}_{ED} \, \delta^{(i)}_{DF} \, \delta^{(i)}_{FC}$, for $i=1, \ldots, n-1$,
\item   $\ds \gamma_n:= \gamma_{CE}^{(n)} \gamma_{ED}^{(n)}$,
\item $\ds \delta_n:=\delta_{CE}^{(n)} \delta_{ED}^{(n)} $,
\end{itemize}
and for which each $\gamma^{(i)}_{GH}$, resp. $\delta^{(i)}_{GH}$, is the minimum, resp. maximum, eigenvalue of $\sigma_{(\partial G_i) \cap H_i}$,  then (\ref{eq:Assump1CDweaker}) holds.
\end{prop}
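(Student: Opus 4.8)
\emph{Plan of proof.} The starting point is the elementary fact that for positive operators $X,Y$ and $\varepsilon>0$ one has $\norm{X^{-1/2}YX^{-1/2}-\identity}_\infty<\varepsilon$ if and only if $(1-\varepsilon)\,X< Y<(1+\varepsilon)\,X$. Applied with $X=\sigma_C\otimes\sigma_D$, $Y=\sigma_{CD}$ and $\varepsilon=\tfrac12$, this turns \eqref{eq:Assump1CDweaker} into the pair of operator inequalities
\[
\tfrac12\,\sigma_C\otimes\sigma_D\;<\;\sigma_{CD}\;<\;\tfrac32\,\sigma_C\otimes\sigma_D .
\]
So it suffices to produce constants $0<\kappa_-\le 1\le\kappa_+<\infty$, depending only on the $\gamma^{(i)}_{GH}$ and $\delta^{(i)}_{GH}$, with $\kappa_-\,\sigma_C\otimes\sigma_D\le\sigma_{CD}\le\kappa_+\,\sigma_C\otimes\sigma_D$, and then to check that the hypothesis of the proposition forces $\kappa_->\tfrac12$ and $\kappa_+<\tfrac32$.

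To build such constants I would use the Markov structure of $\sigma_\Lambda$. Since $\Phi$ is $k$-local and commuting and every gap $E_i$, $F_i$ has at least $2k-1\ge k+1$ sites, each gap shields the part of $\Lambda$ to its left from the part to its right at distance strictly larger than $k$; by Theorem \ref{theo1} (with Theorem \ref{thm:StructQMC}), $\sigma_\Lambda$ is a quantum Markov chain along the ``block line'' $C_1-E_1-D_1-F_1-C_2-\cdots-E_n-D_n$. Tracing out a family of blocks of a quantum Markov chain along a line again produces a quantum Markov chain along the induced line — this follows from monotonicity of the conditional mutual information under the partial trace on one of its marginals — so $\sigma_{CD}$ is a quantum Markov chain along $C_1-D_1-C_2-\cdots-D_n$, and likewise $\sigma_C$, resp. $\sigma_D$, along $C_1-\cdots-C_n$, resp. $D_1-\cdots-D_n$. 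Iterating Proposition \ref{prop:identityQMC}, each of these states is reconstructed from its leftmost block by a chain of Petz recovery maps, each step adding one block, and $\sigma_{CD}(\sigma_C\otimes\sigma_D)^{-1}$ can be organised as an ordered product of $O(n)$ factors, one for each ``arrow'' $G\to H$ occurring in these reconstructions.

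The heart of the argument is the local estimate: a single reconstruction step $G\to H$ across a shielding gap is, after using commutativity of the Hamiltonian terms, effectively supported on the sliver $(\partial G)\cap H$ of width $\le k-1$, and the factor it contributes is squeezed between the minimal and maximal eigenvalue of the reduced Gibbs state $\sigma_{(\partial G)\cap H}$, i.e. between $\gamma^{(i)}_{GH}$ and $\delta^{(i)}_{GH}$. Collecting the four (two, for the last block) slivers crossed while reconstructing block $i$ reproduces exactly the quantities $\gamma_i$ and $\delta_i$ of the statement, so that multiplying over $i=1,\dots,n$ yields $\kappa_\pm$ as explicit monotone expressions in $\prod_i\gamma_i$ and $\prod_i\delta_i$; the assumption $\prod_i\gamma_i^2>\tfrac23\prod_i\delta_i^2>\tfrac13$ is precisely what is needed to keep them inside $(\tfrac12,\tfrac32)$. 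It is convenient to run this by induction on $n$, the base case being the single junction $C_1-E_1-D_1$, which already explains why the last block contributes only two sliver factors.

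The main difficulty is twofold. First, marginals of $\sigma_\Lambda$ on overlapping blocks do not commute, so the matrix-product expression for $\sigma_{CD}$ cannot be handled by freely manipulating logarithms; the local bounds must be stated as genuine operator inequalities and then propagated along the chain using operator monotonicity of conjugation by a fixed operator, which is where most of the work sits. Second, the errors are multiplicative and accumulate over the $\Theta(n)$ interfaces, so only a global smallness condition on the \emph{product} of the sliver spectra can survive in the thermodynamic limit, and matching the constants so that the break-even threshold is exactly the stated hypothesis requires some care.
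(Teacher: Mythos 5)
Your first reduction is exactly the one the paper makes: \eqref{eq:Assump1CDweaker} is equivalent to the sandwich $\tfrac12\,\sigma_C\otimes\sigma_D<\sigma_{CD}<\tfrac32\,\sigma_C\otimes\sigma_D$, and the game is to produce $\kappa_\pm$ from the sliver spectra. Your observation that the marginals $\sigma_{CD}$, $\sigma_C$, $\sigma_D$ remain quantum Markov chains along the induced block lines is also correct (each retained block has width $2k-1>k$, so the relevant conditional mutual informations vanish in $\sigma_\Lambda$ and only decrease when the gaps are traced out of the two outer marginals), and propagating two-block operator bounds through exact Petz reconstructions by conjugation is sound in principle.

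The genuine gap is the step you call the heart of the argument. You assert that each reconstruction factor, e.g.\ $\sigma_{D_1C_2}^{1/2}\sigma_{D_1}^{-1/2}(\cdot)\sigma_{D_1}^{-1/2}\sigma_{D_1C_2}^{1/2}$, is ``effectively supported on the sliver $(\partial G)\cap H$'' and squeezed between $\gamma^{(i)}_{GH}$ and $\delta^{(i)}_{GH}$, but those slivers lie inside the gaps $E_i,F_i$, which have been traced out and therefore do not appear at all in the marginals entering the Petz factors. Connecting the spectrum of the sliver terms to operator bounds on $\sigma_{C_iD_i}$ versus $\sigma_{C_i}\otimes\sigma_{D_i}$ is precisely the content of the proposition, and you have deferred it rather than proved it; in particular the asymmetric powers (first versus second powers of $\gamma_i,\delta_i$) and the thresholds $\tfrac23$ and $\tfrac13$ cannot be recovered without it. The paper closes this gap by a different, more elementary mechanism: since the potential is commuting, $Z\sigma_\Lambda$ is written as an explicit product of commuting local factors, one per block interior and one per boundary sliver; each sliver factor $\tilde\sigma_{(\partial G_i)\cap H_i}$ is sandwiched between $\gamma^{(i)}_{GH}\identity$ and $\delta^{(i)}_{GH}\identity$; tracing out $E\cup F$ then bounds $Z\sigma_{CD}$ with one power of each $\gamma_i,\delta_i$ and $Z^2\,\sigma_C\otimes\sigma_D$ with two powers (each of $\sigma_C$ and $\sigma_D$ picks up its own boundary factors), and a matching two-sided bound on $Z$ itself turns the comparison into exactly the hypothesis $\prod_i\gamma_i^2>\tfrac23\prod_i\delta_i^2>\tfrac13$. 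No recovery maps are needed there, and the partition-function bookkeeping that you would have to redo locally at every interface is done once, globally. As written, your proposal is a plausible plan whose central estimate is missing, not a proof.
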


\begin{proof}

First, note that condition (\ref{eq:Assump1CDweaker}) is equivalent to the following

\begin{align}\label{eq:Assump1ineq}
\frac{1}{2}\sigma_C\otimes \sigma_D<\sigma_{CD}<\frac{3}{2}\sigma_C\otimes \sigma_D\,.
	\end{align}

\begin{figure}
\begin{center}
\includegraphics[scale=0.3]{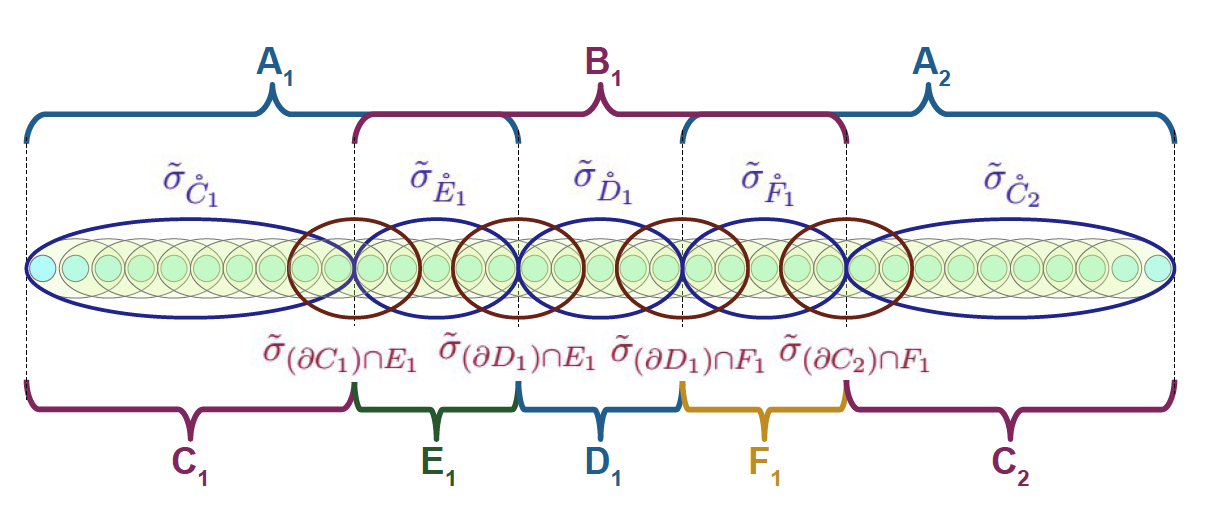}
\end{center}
\caption{Decomposition of $\sigma_\Lambda$ into the product of commuting terms for $k=3$ and $l=5$, assuming that $\Lambda$ is decomposed only into $A_1, B_1$ and $A_2$ for simplification.}
\label{fig:assump1explained}
\end{figure}
	
Now, the state $\sigma_\Lambda$ on the full chain can be decomposed into the following product of commuting terms (see Figure \ref{fig:assump1explained}):
\begin{equation}\label{eq:ExpressionSigma}
Z\sigma_{\Lambda}:= \left( \prod_{i=1}^{n-1}  \chi_i  \right) \tilde \sigma_{\mathring{C}_{n}} \tilde \sigma_{( \partial C_n) \cap E_n} \tilde \sigma_{\mathring{E_n}} \tilde \sigma_{( \partial D_n) \cap E_n }\tilde \sigma_{\mathring{D_n}},
\end{equation}
with
\begin{equation}\label{eq:ExpressionEachI}
\chi_i := \tilde \sigma_{\mathring{C}_i}\tilde \sigma_{( \partial C_i) \cap E_i} \tilde \sigma_{\mathring{E_i}}\tilde \sigma_{( \partial D_i)  \cap E_i}\tilde \sigma_{\mathring{D_i}}\tilde \sigma_{( \partial D_i) \cap F_i}\tilde \sigma_{\mathring{F_i}}\tilde \sigma_{( \partial C_{i+1}) \cap F_i }, 
\end{equation}
and where $Z$ is the normalization factor and  $\mathring{G}$ denotes the interior of $G$, that is the the set of sites in $G$ whose corresponding interaction is fully supported in $G$. We use the notation $\tilde \sigma_G$ to remark that this term does not coincide in general with $\tr_{G^c}[\sigma_\Lambda]$.  We will bound the boundary terms as follows: For any consecutive $G_j , H_i \in\{C_j, D_i, E_i , F_i \}$ so that $H_i= E_i$ or $F_i$ (and thus  $G_j=C_i$, $C_{i+1}$ or $D_i$), we have
\begin{align}\label{eq:BoundSigmaBoundary}
\gamma_{GH}^{(i)}\,\identity_{( \partial G_j ) \cap H_i }\le \tilde	\sigma_{( \partial G_j ) \cap H_i } \le\,\delta_{GH}^{(i)}\,\identity_{( \partial G_j ) \cap H_i }\,.
\end{align}

Note that, in a slight abuse of notation, we are denoting by $\gamma^{(i)}_{FC}$ and $\delta^{(i)}_{FC}$ the coefficients corresponding to the term $\tilde \sigma_{F_i \cap ( \partial C_{i+1})}$.Then, since $( \partial G_j ) \cap H_i$ consists of $2(k-1)$ sites, half of which belong to $G_j$ and the other half to $H_i$, we can write
\begin{equation*}
\gamma_{GH}^{(i)}\,\tilde \sigma_{\mathring{G}_i} \otimes  \tilde \sigma_{\mathring{H}_i}  \le \,\tilde \sigma_{\mathring{G}_i} \tilde \sigma_{(\partial G_i) \cap H_i} \tilde \sigma_{\mathring{H}_i} \le\,\delta_{GH}^{(i)}\,\tilde \sigma_{\mathring{G}_i} \otimes  \tilde \sigma_{\mathring{H}_i} 
\end{equation*}
and thus replacing (\ref{eq:BoundSigmaBoundary}) in (\ref{eq:ExpressionSigma}) after tracing out $E$ and $F$, it is easy to show that 
\begin{align*}
  \left( \prod_{i=1}^{n-1}\gamma_{i}\,\tilde \sigma_{\mathring{C}_i\mathring{D}_i} \tr(\tilde \sigma_{\mathring{E}_i\mathring{F}_i}) \right) \gamma_{CE}^{(n)} \gamma_{ED}^{(n)} \, & \tilde  \sigma_{\mathring{C}_n \mathring{D}_n} \tr(\tilde \sigma_{\mathring{E}_n})  \, \\
	& \le  Z \,\sigma_{CD} \\
	& \le 	\left( \prod_{i=1}^{n-1}\delta_{i}\,\tilde \sigma_{\mathring{C}_i\mathring{D}_i} \tr(\tilde \sigma_{\mathring{E}_i\mathring{F}_i}) \right) \delta_{CE}^{(n)} \delta_{ED}^{(n)} \, \tilde  \sigma_{\mathring{C}_n \mathring{D}_n} \tr(\tilde \sigma_{\mathring{E}_n})\,,
\end{align*}	
where $\gamma_i:=\gamma^{(i)}_{CE}\gamma^{(i)}_{ED}\gamma^{(i)}_{DF}\gamma^{(i)}_{FC}$  and  $\delta_i:=\delta^{(i)}_{CE}\delta^{(i)}_{ED}\delta^{(i)}_{DF}\delta^{(i)}_{FC}$, and $d$ is the dimension of the local Hilbert space associated to each site.  

On the other hand, if we proceed analogously to get a bound for $\sigma_C \otimes \sigma_D$ to compare it with $\sigma_{CD}$, we obtain
\begin{align*}
   \left( \prod_{i=1}^{n-1}\gamma_i^2\,\tilde \sigma_{\mathring{C}_i}\tilde \sigma_{\mathring{D}_i}\tr(\tilde \sigma_{\mathring{E}_i\mathring{F}_i})^2  \right) & \left(\gamma_{CE}^{(n)} \gamma_{ED}^{(n)}\right)^2 \tilde \sigma_{\mathring{C}_n } \tilde \sigma_{\mathring{D}_n } \tr(\tilde \sigma_{\mathring{C}\mathring{D}})\\
&\le Z^2	\sigma_C\otimes \sigma_D\\
& \le  \left( \prod_{i=1}^{n-1}\delta_i^2\,\tilde \sigma_{\mathring{C}_i}\tilde \sigma_{\mathring{D}_i}\tr(\tilde \sigma_{\mathring{E}_i\mathring{F}_i})^2 \right) \left(\delta_{CE}^{(n)} \delta_{ED}^{(n)}\right)^2 \tilde \sigma_{\mathring{C}_n} \tilde \sigma_{\mathring{D}_n }  \tr(\tilde \sigma_{\mathring{C}\mathring{D}})\,.
\end{align*}	

Therefore, a sufficient condition for (\ref{eq:Assump1CDweaker}) is that
\begin{align}\label{sufficientcond}
\frac{1}{2}\tr(\sigma_{\mathring{C}\mathring{D}})\prod_{i=1}^{n-1}\frac{\gamma_i\tr(\sigma_{\mathring{E}_i\mathring{F}_i})}{\delta_i^2}	\frac{\gamma_n}{\delta_n^2}< Z <\frac{3}{2}\tr(\sigma_{\mathring{C}\mathring{D}})\prod_{i=1}^{n-1}\frac{\gamma_i^2\tr(\sigma_{\mathring{E}_i\mathring{F}_i})}{\delta_i}\frac{\gamma_n^2}{\delta_n} \,.
\end{align}	
with $\ds \gamma_n:= \gamma_{CE}^{(n)} \gamma_{ED}^{(n)}$ and $\ds \delta_n:=\delta_{CE}^{(n)} \delta_{ED}^{(n)} $.
Note that, when $\beta\to0$,  $Z\to d^{\abs{\Lambda}}$, where the number $|\Lambda|$ of sites is equal to $|\mathring{E}|+|\mathring{F}| +8(k-1)(n-1)+|\mathring{C}|+|\mathring{D}|$. Moreover, $\delta_i=\gamma_i=1$ in the limit. Therefore (\ref{sufficientcond}) holds trivially, since it reduces to $\ds \frac{1}{2}<1<\frac{3}{2} $. It is reasonable then to think that, close  to infinite temperature \hl{(in a distance depending logrithmically with the system size)}, (\ref{sufficientcond}) holds.

Indeed, let us assume the following inequality between the $\gamma_i$ and $\delta_i$,
\begin{equation}\label{eq:SuffCondAssump1Weaker}
 \un{i=1}{\ov{n}{\prod}} \gamma_i^2 > \frac{2}{3}   \un{i=1}{\ov{n}{\prod}} \delta_i^2 >\frac{1}{3}.
\end{equation}

To conclude the proof that equation \eqref{eq:SuffCondAssump1Weaker} implies equation \eqref{eq:Assump1CDweaker}, it is enough to bound $Z$, the normalization factor, in the same way that we have bounded $\sigma_{CD}$ and $\sigma_C \otimes \sigma_D$. Introducing those bounds in the inequalities appearing in  \eqref{sufficientcond}, it is easy to see that this expression reduces to \eqref{eq:SuffCondAssump1Weaker}.

\end{proof}

\subsection{Strong quasi-factorization}\label{subsec:assump2}

In this subsection, we will discuss Assumption \ref{assump:2}, which can be seen as a strong quasi-factorization of the relative entropy, and provide some sufficient conditions on $\sigma_\Lambda$ for it to hold.

Given $\Lambda$ a finite chain and $A$ a subset of $\Lambda$, if we denote by $\sla$ the Gibbs state of a $k$-local commuting Hamiltonian, Assumption \ref{assump:2} reads as:
\begin{equation}\label{eq:Assump2}
\entA A \rla \sla \leq f_A(\sigma_\Lambda) \un{x \in A}{\sum} \entA x \rla \sla \; \; \; \forall \rho_\Lambda \in \SSS_\Lambda, 
\end{equation}
where $1 \leq f_A(\sigma_\Lambda) < \infty $ depends only on $\sigma_\Lambda$ and is independent of $\abs{\Lambda}$.

Let us first recall that $A$ has a fixed size of $2(k+l)-1$ sites, so $\abs{ A \partial} = 2 (2k + l -1) -1$, and is, in particular, fixed. Moreover, if we separate one site from the rest in each step, i.e., for every $2 \leq m \leq \abs{A}$, if we consider the only connected $B^{(m)} \in A$ of size $m$ that contains the first site of $A$, and we split $B^{(m)}$ into two connected regions $B^{(m)}_1$ and $B^{(m)}_2$ so that $\abs{B^{(m)}_1} = 1$, it is clear that the following inequality 
\begin{equation}\label{eq:Assump2ind}
\entA {B^{(m)}} \rla \sla \leq f_{B^{(m)}}(\sigma_\Lambda) \left[  \entA {B_1^{(m)}} \rla \sla +  \entA {B_2^{(m)}} \rla \sla \right] \; \; \; \forall \rho_\Lambda \in \SSS_\Lambda\,, 
\end{equation}
implies inequality \eqref{eq:Assump2} by induction, taking
\begin{equation*}
f_A(\sigma_\Lambda) := \un{2 \leq m \leq \abs{A}}{\text{sup}} f_{B^{(m)}}(\sigma_\Lambda)\,.
\end{equation*}

Therefore, we can pose the following natural question.

\begin{question}
Given two adjacent subsets $A, B \subset\Lambda$, under which condition on the Gibbs state $\sigma_\Lambda$ does there exist a bounded $f_{AB}(\sigma_\Lambda)$ only depending on  $\sigma_\Lambda$ and independent of the size of $\Lambda$  such that the following inequality holds for every $\rho_\Lambda \in \SSS_\Lambda$:
\begin{equation}\label{conjecture}
\entA {AB} {\rho_\Lambda} {\sigma_\Lambda} \leq f_{AB}(\sigma_\Lambda) \left( \entA {A} {\rho_\Lambda} {\sigma_\Lambda} + \entA {B} {\rho_\Lambda} {\sigma_\Lambda}  \right) \, ?
\end{equation}

\end{question}

Remark that we only need to answer this question for $\abs{A}, \abs{B} < 2(k+l)$. Although we cannot give a general answer to this question, we can provide some motivation for situations in which it might hold. For that, we prove before the following lemma, which shows that a conditional relative entropy in a certain region can be upper bounded by a quantity depending only on the reduced states in that region independently of the cardinality of the whole lattice. 

\begin{lemma}\label{lemma:boundCRE}
	Let $A \subset \Lambda$.  For any $\rho_\Lambda\in\SSS_\Lambda$,
	$$D_A(\rho_\Lambda\|\sigma_\Lambda)\le  D_A(\rho_\Lambda\|\sigma_A\otimes \sigma_{A^c})+D(\rho_{A\partial}\|\sigma_{A \partial})\,.$$
\end{lemma}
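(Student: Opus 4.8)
The plan is to collapse both conditional relative entropies onto the fixed region $A\partial$ by exploiting the quantum Markov structure of the Gibbs state. First I would unfold the definitions: since the $A^c$-marginals of $\sigma_\Lambda$ and of $\sigma_A\otimes\sigma_{A^c}$ coincide (both equal $\sigma_{A^c}$), the term $D(\rho_{A^c}\|\sigma_{A^c})$ subtracted in each conditional relative entropy cancels, so that
\[
D_A(\rho_\Lambda\|\sigma_\Lambda)-D_A(\rho_\Lambda\|\sigma_A\otimes\sigma_{A^c})=\tr[\rho_\Lambda(\log\sigma_A+\log\sigma_{A^c}-\log\sigma_\Lambda)],
\]
and it suffices to bound the right-hand side by $D(\rho_{A\partial}\|\sigma_{A\partial})$.

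Next I would use that $\sigma_\Lambda$ is the Gibbs state of a commuting potential: every interaction term $\Phi(x)$ whose support meets $A$ is supported inside $A\partial$ (this is exactly how $\partial A$ is defined), hence no term of the Hamiltonian connects $A$ to $(A\partial)^c$, so $\partial A$ shields $A$ from $(A\partial)^c$. By Theorem \ref{theo1}, $\sigma_\Lambda$ is then a quantum Markov chain $A\leftrightarrow\partial A\leftrightarrow(A\partial)^c$, and Proposition \ref{prop:identityQMC} applied to this chain (using $\partial A\cup(A\partial)^c=A^c$) yields the operator identity $\log\sigma_\Lambda+\log\sigma_{\partial A}=\log\sigma_{A\partial}+\log\sigma_{A^c}$. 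Substituting $\log\sigma_\Lambda=\log\sigma_{A\partial}+\log\sigma_{A^c}-\log\sigma_{\partial A}$ into the display above, the correction operator becomes $\log\sigma_A+\log\sigma_{\partial A}-\log\sigma_{A\partial}$, which is supported on $A\partial$, so $\tr[\rho_\Lambda(\cdots)]=\tr[\rho_{A\partial}(\log\sigma_A+\log\sigma_{\partial A}-\log\sigma_{A\partial})]$.

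Finally, since $\log(\sigma_A\otimes\sigma_{\partial A})=\log\sigma_A+\log\sigma_{\partial A}$ on $A\partial$, this last trace equals $D(\rho_{A\partial}\|\sigma_{A\partial})-D(\rho_{A\partial}\|\sigma_A\otimes\sigma_{\partial A})$, which is $\le D(\rho_{A\partial}\|\sigma_{A\partial})$ by non-negativity of the relative entropy; combining with the first display gives the lemma. An equivalent route uses the corollary following Proposition \ref{prop:identityQMC}, namely $D_A(\rho_\Lambda\|\sigma_\Lambda)=D_A(\rho_{A\partial}\|\sigma_{A\partial})+I_\rho(A:(A\partial)^c|\partial A)$, together with $D_A(\rho_{A\partial}\|\sigma_{A\partial})\le D(\rho_{A\partial}\|\sigma_{A\partial})$, the chain rule $I_\rho(A:(A\partial)^c|\partial A)\le I_\rho(A:A^c)$, and the identity $D_A(\rho_\Lambda\|\sigma_A\otimes\sigma_{A^c})=I_\rho(A:A^c)+D(\rho_A\|\sigma_A)$; the two routes differ only in how much slack is discarded. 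I expect the one delicate point to be the second step: one must make sure that $A,\partial A,(A\partial)^c$ genuinely forms a Markov triple — which here is forced by the locality of the commuting potential — and keep straight which $\sigma$-marginals occur. Everything else is exact algebra of logarithms (all Gibbs marginals being full rank, so there are no domain issues) plus a single use of positivity of relative entropy.
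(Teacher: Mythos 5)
Your proposal is correct and follows essentially the same route as the paper: cancel the common $D(\rho_{A^c}\|\sigma_{A^c})$ term, invoke the quantum Markov chain structure $A\leftrightarrow\partial A\leftrightarrow(A\partial)^c$ together with Proposition \ref{prop:identityQMC} to rewrite $\log\sigma_\Lambda$, and conclude by non-negativity of $D(\rho_{A\partial}\|\sigma_A\otimes\sigma_{\partial A})$. The alternative route you sketch at the end is not needed; the main argument matches the paper's proof step for step.
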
	
\begin{proof}
	A simple use of the definition of the conditional relative entropy leads to the following identity:
	\begin{align}\label{eq8}
		D_A(\rho_\Lambda\|\sigma_\Lambda)- D_A(\rho_\Lambda\|\sigma_A\otimes \sigma_{A^c}) & = \ent {\rla} {\sla} - \ent {\rla} {\sigma_A\otimes \sigma_{A^c}} \nonumber \\
		&=\tr[ \rho_\Lambda\left(   -\log\sigma_\Lambda+\log\sigma_A\otimes\sigma_{A^c} \right) ] \,.
	\end{align}	
	
	By the quantum Markov chain property of the state $\sigma_\Lambda$ between $A \leftrightarrow \partial A \leftrightarrow \left( A \partial \right)^c$ and by Proposition \ref{prop:identityQMC}, we have
	\[\log\sigma_{\Lambda}=\log\sigma_{A^c}+\log\sigma_{A\partial}-\log\sigma_{\partial A}\,.\]

Plugging this in Equation \eqref{eq8} we arrive at:
	\begin{align*}\label{eq88}
		D_A(\rho_\Lambda\|\sigma_\Lambda)- D_A(\rho_\Lambda\|\sigma_A\otimes \sigma_{A^c})
		& =\tr[ \rho_\Lambda\left(   -\log\sigma_{A\partial}+\log\sigma_A\otimes\sigma_{\partial A} \right) ] \\
		& = \ent{\rho_{A\partial}}{\sigma_{A\partial}}-\ent{\rho_{A\partial}}{\sigma_{A}\otimes\sigma_{\partial A}} \\
		& \leq \ent{\rho_{A\partial}}{\sigma_{A\partial}}\,.
	\end{align*}
	\end{proof}
	
Note that, when $\rho$ is classical,  inequality \eqref{conjecture} holds true for any Gibbs state of a classical $k$-local commuting Hamiltonian in 1D, and under some further assumptions it also does in more general dimensions, since \eqref{conjecture} coincides in the classical setting with a usual result of quasi-factorization of the entropy, due to the DLR conditions. More specifically, this inequality holds classically whenever the Dobrushin-Shlosman complete analiticity condition holds. Moreover, in that setting one can see that $f_{AB}(\sigma_\Lambda)$ actually depends only on $\sigma_{(AB)\partial}$.

 It is then reasonable to believe that this inequality might also hold true for  Gibbs states of quantum $k$-local commuting Hamiltonians in 1D, although $f_{AB}$ could possibly depend on $\sigma$ on the whole lattice $\Lambda$ (without depending on its size). The intuition behind this is that $\sigma_\Lambda$ is also a quantum Markov chain, and Lemma \ref{lemma:boundCRE} shows that the conditional relative entropy in a certain region can be approximated by its analogue for $\sigma_\Lambda$ a tensor product obtaining an additive error term that can be bounded by something that only depends on the region and its boundary. 
 
 However, if we define 
 \begin{equation*}
f_{AB}(\sigma_\Lambda) := \un{\rho_\Lambda \in \SSS_\Lambda}{\text{sup}} \frac{\entA {AB} {\rho_\Lambda} {\sigma_\Lambda} }{\entA {A} {\rho_\Lambda} {\sigma_\Lambda} + \entA {B} {\rho_\Lambda} {\sigma_\Lambda} }
\end{equation*}
we lack a proof that, in general, it satisfies the necessary conditions for \eqref{conjecture} to hold. The study of examples of Hamiltonians whose Gibbs state satisfies the aforementioned inequality is left for future work.

Nevertheless, let us  recall here some situations for which we already know that inequality \eqref{conjecture} holds. First, if $\sigma_\Lambda$ is a tensor product, this inequality holds with $f=1$ \cite{[CLP18]}, as a consequence of strong subadditivity. Moreover,  for a more general $\sigma_\Lambda$, if $A$ and $B$ are separated enough, we have seen in Step \ref{step:2} that it also holds with  $f=1$, due to the structure of quantum Markov chain of $\sigma_\Lambda$. Since in \eqref{conjecture} we are assuming that $A$ and $B$ are adjacent, we cannot use this property to ``separate'' $A$ from $B$, i.e. write $\sigma_\Lambda$ as a direct sum of tensor products that separate $A$ from $B$, and thus the proof of Step \ref{step:2} cannot be used here.

\hl{Additionally, let us mention that the idea used in Proposition $\text{\ref{prop:HighTemperatureAssupmtion1}}$ to show that Assumption $\text{\ref{assump:1}}$ holds for systems at high-enough temperature cannot be used for Assumption $\text{\ref{assump:2}}$. Indeed, by assuming Equation $\text{\eqref{eq:HighTemperature}}$, we get }
\begin{equation*}
\highlight{\log (1-\varepsilon) + \log \frac{\identity_X}{d_X}	<\log \sigma_X < \log (1+\varepsilon) + \log \frac{\identity_X}{d_X} \, ,	 }
\end{equation*} 
\hl{for $X= A, B, \Lambda$, and thus }
\begin{equation*}
\highlight{ D_{AB} \left(\rho_\Lambda \Big\| \frac{\identity_\Lambda}{d_\Lambda} \right) + \log \left( \frac{1-\varepsilon}{1+\varepsilon} \right)<D_{AB} (\rho_\Lambda || \sigma_\Lambda) <  D_{AB} \left(\rho_\Lambda \Big\| \frac{\identity_\Lambda}{d_\Lambda} \right) + \log \left( \frac{1+\varepsilon}{1-\varepsilon} \right) \, , }
\end{equation*}
\hl{and analogously for $D_{A} (\rho_\Lambda || \sigma_\Lambda)$ and $D_{B} (\rho_\Lambda || \sigma_\Lambda)$. Therefore, this allows to reduce the expression above for $f_{AB}(\sigma_\Lambda)$  to }
\begin{equation*}
\highlight{ f_{AB}(\sigma_\Lambda) > \un{\rho_\Lambda \in \SSS_\Lambda}{\text{sup}} \; \frac{D_{AB} \left(\rho_\Lambda \Big\| \frac{\identity_\Lambda}{d_\Lambda} \right) }{D_{A} \left(\rho_\Lambda \Big\| \frac{\identity_\Lambda}{d_\Lambda} \right) + D_{B} \left(\rho_\Lambda \Big\| \frac{\identity_\Lambda}{d_\Lambda} \right) } + \frac{ g(\varepsilon) }{D_{A} \left(\rho_\Lambda \Big\| \frac{\identity_\Lambda}{d_\Lambda} \right) + D_{B} \left(\rho_\Lambda \Big\| \frac{\identity_\Lambda}{d_\Lambda} \right) } \, , }
\end{equation*}
\hl{where $g(\varepsilon)$ is a function of $\varepsilon$, and even though the first summand in the right-hand side above is known to be lower bounded by $1$, the second term can be in principle arbitrarily large. Therefore, this method does not help to prove Assumption $\text{\ref{assump:2}}$, with a multiplicative error term, for high-enough temperature. However, it would if we allowed for an additive term to appear. Indeed, using the previous comparisons for the logarithms it is easy to show that the following inequality holds:}
\begin{equation*}
\highlight{D_{AB}(\rho_\Lambda || \sigma_\Lambda)  < D_A (\rho_\Lambda || \sigma_\Lambda) + D_B (\rho_\Lambda || \sigma_\Lambda) + 3 \log \left( \frac{1+ \varepsilon}{1- \varepsilon} \right) \, , }
\end{equation*}
\hl{using the fact that $f_{AB}(\identity_\Lambda /d_\Lambda) =1$. This inequality, though, cannot be used to show the positivity of a MLSI constant, although some generalizations of quasi-factorization inequalities with additive error terms for a different notion of conditional relative entropy have found different applications in quantum information theory, including new entropic uncertainty relations, as shown in $\text{\cite{BardetCapelRouze-ApproximateTensorization-2020}}$. }

\section{Conclusions and open problems}\label{sec:conclusions}

In this paper, we have addressed the problem of finding conditions on the Gibbs state of a local commuting Hamiltonian so that the generator of a certain dissipative evolution has a positive modified logarithmic Sobolev constant. Building on results from classical spin systems \cite{[D02]} \cite{[Cesi01]}, and following the steps of \cite{[KB14]}, where the authors addressed the analogous problem for the spectral gap, we have developed a strategy based on five points that provides positivity of the modified logarithmic Sobolev constant. Moreover, we have used this strategy to present two conditions on a Gibbs state so that its corresponding heat-bath dynamics in 1D satisfies this positivity. 

This strategy opens a new way to obtain positivity of MLSI constants, and thus it will probably be of use to prove this condition not only for the dynamics studied in this paper, but for some other dynamics such as the one of the Davies semigroups, for instance \cite{Davies1979}. However, for the time being, some natural questions arise from this work, such as the existence of examples of non-trivial Gibbs states for which these static conditions, and thus positivity of the modified logarithmic Sobolev constant, hold. 

\textbf{\textit{Question 1.}} \textit{Are there any easy examples of $\sigma_\Lambda$ for which the strong quasi-factorization of Assumption \ref{assump:2} holds with $f$ different from $1$?}

So far, the only example we have for this condition to hold is for $\sigma_\Lambda$ a tensor product everywhere, for which the value of $f$ is always $1$. It is reasonable to think that this condition holds, for instance, when $\sigma_\Lambda$ is a classical state, since in this case one could expect to recover the classical case, in which this inequality would agree with the usual quasi-factorization thanks to the DLR condition. However, this is left for future work.

\textbf{\textit{Question 2.}} \textit{Are there any more examples of $\sigma_\Lambda$ for which the mixing condition of Assumption \ref{assump:1} holds?}

Even though we have mentioned that this condition holds for classical states and we have shown a more complicated example of Gibbs state verifying this in Proposition \ref{prop:SpinDefect}, most of the tools available in the setting of quantum many-body systems to address the problem of decay of correlations on the Gibbs state depend strongly on the geometry used to split the lattice, and more specifically on the number of boundaries between the different regions $A$ and $B$. Since, in our case, this number scales linearly with $\Lambda$, there is no hope to use any of those tools to obtain more examples of Gibbs states satisfying Assumption \ref{assump:1}. However, it is possible that a different approach allows for more freedom in this sense. 

\textbf{\textit{Question 3.}} \textit{Could this result be extended to more dimensions?}

Following the same approach from this paper, we would need to cover a $n$-dimensional lattice with small rectangles overlapping pairwise in an analogous way to the construction described here for dimension 1. It is easy to realize that, even in dimension 2, one would need at least three systems to classify the small rectangles so that two belonging to the same class would not overlap. Thus, for our strategy to hold in dimension, at least, 2, we would need a result of quasi-factorization that provides an upper bound for the relative entropy of two states in terms of the sum of three conditional relative entropies, instead of two, and a multiplicative error term. Since we are lacking a result of this kind so far, this question constitutes an open problem.

\textbf{\textit{Question 4.}} \textit{Can we change the geometry used to split the lattice?}

Another possible approach to tackle this problem could be based on the geometry presented in the classical papers \cite{[D02]}, \cite{[Cesi01]} and the quantum case for the spectral gap \cite{[KB14]}. In this approach, in each step one splits the rectangle into two connected regions and carries out a more evolved geometric recursive argument. One of the main benefits from this approach would be a weakening in the mixing condition assumed in the Gibbs state. However, the main counterpart would be the necessity of a strong result of quasi-factorization for the relative entropy, even stronger than the one appearing in \eqref{conjecture} since the multiplicative error term should converge to $1$ exponentially fast, in which both sides of the inequality would contain conditional relative entropies.

\section*{Acknowledgements}
The authors would like to thank Nilanjana Datta for fruitful discussions and for
her comments on an earlier version of the draft. IB is supported by French
A.N.R. grant: ANR-14-CE25-0003 "StoQ". AC was partially supported by a La
Caixa-Severo Ochoa grant (ICMAT Severo Ochoa project SEV-2011-0087, MINECO), by an MCQST Distinguished Postdoc, and
AC and DPG acknowledge support from MINECO (grant MTM2017-88385-P) and from
Comunidad de Madrid (grant QUITEMAD-CM, ref. P2018/TCS-4342). AL acknowledges
support from the Walter Burke Institute for Theoretical Physics in the form of
the Sherman Fairchild Fellowship as well as support from the Institute for
Quantum Information and Matter (IQIM), an NSF Physics Frontiers Center (NFS
Grant PHY-1733907), and from the BBVA Fundation and the Spanish 'Ramón y Cajal' Programme (RYC2019-026475-I / AEI / 10.13039/501100011033). This project has received
funding from the European Research Council (ERC) under the European Union’s
Horizon 2020 research and innovation programme (grant agreement No 648913). CR
acknowledges financial support from the TUM university Foundation Fellowship, and AC and CR also acknowledge financial support
by the DFG cluster of excellence 2111 (Munich Center for Quantum Science and
Technology).

\bibliographystyle{abbrv}
\bibliography{library}

\appendix

\end{document}